\def\BibTeX{{\rm B\kern-.05em{\sc i\kern-.025em b}\kern-.08em
    T\kern-.1667em\lower.7ex\hbox{E}\kern-.125emX}}
\newcommand\HUGE{\@setfontsize\Huge{34}{60}}
\def\namedlabel#1#2{\begingroup
    #2%
    \def\@currentlabel{#2}%
    \phantomsection\label{#1}\endgroup
}
\theoremstyle{plain}
\newtheorem{theorem}{Theorem}
\newtheorem{lemma}[theorem]{Lemma}
\newtheorem{corollary}[theorem]{Corollary}
\theoremstyle{definition}
\newcommand{\N}{\mathbb{N}}
\newcommand{\F}{\mathbb{F}}
\newcommand{\nolla}{\mathbf{0}}
\newcommand{\bu}{\mathbf{u}}
\newcommand{\bw}{\mathbf{w}}
\newcommand{\s}{\mathbf{s}}
\newcommand{\bc}{\mathbf{c}}
\newcommand{\supp}{\textrm{supp}}
\newcommand{\bx}{\mathbf{x}}
\newcommand{\e}{\mathbf{e}}
\newcommand{\bz}{\mathbf{z}}
\newcommand{\cc}{\mathbf{c}}
\newcommand{\y}{\mathbf{y}}
\newcommand{\by}{\mathbf{y}}
\newcommand{\LL}{\mathcal{L}}
\newcommand{\be}{\mathbf{e}}
\newcommand{\dmin}{d_{\textrm{min}}}
\title{The Levenshtein's Sequence Reconstruction Problem and the Length of the List\thanks{This paper was presented in part at the $2022$ IEEE International Symposium on Information Theory (ISIT2022) (see \cite{junnila2022list}).}~
	\thanks{The authors were funded in part by the Academy of Finland grant 338797 and the third author was funded by the Finnish Cultural Foundation.}
}
\author{Ville Junnila, Tero Laihonen and Tuomo Lehtil{\"a}%
\thanks{The authors are currently with the Department of Mathematics and Statistics,
University of Turku, Turku FI-20014, Finland (e-mail: viljun@utu.fi;
terolai@utu.fi; tualeh@utu.fi).}~\thanks{Much of the work of the third author was done when he was in Univ Lyon, Universit\'e Claude Bernard, CNRS, LIRIS - UMR 5205, F69622, France.}%
}
\begin{document}

\maketitle

\begin{abstract}
In the paper, the Levenshtein's sequence reconstruction problem is considered in the case where at most $t$ substitution errors occur in each of the $N$ channels and the decoder outputs a list of length $\LL$. Moreover, it is assumed that the transmitted words are chosen from an $e$-error-correcting code $C \ (\subseteq \{0,1\}^n)$. Previously, when $t = e+\ell$ and the length $n$ of the transmitted word is large enough, the  numbers of required channels are determined for $\LL =1, 2 \text{ and } \ell+1$. Here we determine the exact number of channels in the cases $\LL = 3, 4, \ldots, \ell$. Furthermore, with the aid of covering codes, we also consider the list sizes in the cases where the length $n$ is rather small (improving previously known results). After that we study how much we can decrease the number of required channels when we use list-decoding codes. Finally, the majority algorithm is discussed for decoding in a probabilistic set-up; in particular, we show that with high probability a decoder based on it is verifiably successful, i.e., the output word of the decoder can be verified to be the transmitted one. 
\end{abstract}

\noindent\textbf{Keywords:}
	Information Retrieval, Levenshtein's Sequence Recontruction,  List Decoding, Majority Algorithm, Sauer-Shelah Lemma, Substitution Errors.


\section{Introduction}

In this paper, the Levenshtein's \emph{sequence reconstruction problem}, which was  introduced in \cite{Levenshtein}, is studied when the errors are substitution errors.  For many related sequence reconstruction problems (concerning, for instance, deletion and insertion errors) consult, for example, \cite{Levenshtein,levenshtein2005reconstruction,gabrys2018sequence, horovitz2018reconstruction,  Maria_Abu-Sini, Uusi_Maria_Abu-Sini, levenshtein2001efficient}. Originally, the motivation for the sequence reconstruction problem came from biology and chemistry where the familiar redundancy method of error correction is not suitable. The sequence reconstruction problem has come back to the focus, since  it was recently pointed out that this problem is highly  relevant to information retrieval in advanced storage technologies where the stored information is either a single copy, which is read many times, or the stored information has several copies \cite{horovitz2018reconstruction, yaakobi2016constructions}. This problem (see \cite{horovitz2018reconstruction}) is especially applicable to DNA data storage systems (see \cite{bornholt2016dna, church2012next, grass2015robust, yazdi2015dna}) where DNA strands provide numerous erroneous copies of the information and the goal is to recover the information using these copies.

Let us denote the set $\{1,2,\dots, n\}$ by $[1,n].$ Denote by $\F$ the finite field of two elements, and denote the binary Hamming space by $\F^n$.  The  \emph{support} of the word $\bx=x_1\dots x_n\in \F^n$ is defined via  $\supp(\bx)=\{i\mid x_i\neq 0\}.$ Let us denote the all-zero word $\nolla=00\dots 0\in \F^n$ and by $\be_i\in \F^n$ a word with $1$ in the $i$th coordinate and zeros elsewhere.  The \emph{Hamming weight} $w(\bx)$ of $\bx\in \F^n$ is $|\supp(\bx)|$. The \emph{Hamming distance} is defined as
$d(\bx,\by)=w(\bx+\by)$ for $\bx,\by\in\F^n$. Let us denote the \emph{Hamming ball} of radius $t$  centered
at $\bx\in \F^n$ by $B_t(\bx)=\{\by\in \F^n\mid d(\bx,\by)\le t\}$ and its cardinality by $V(n,t)=\sum_{i=0}^t\binom{n}{i}$.  A nonempty subset of $\F^n$ is called a \emph{code} and its elements are called \emph{codewords}. The \emph{minimum distance} of a code
$C\subseteq \F^n$ is defined as $ \dmin(C)=\min_{\bc_1,\bc_2\in C, \bc_1\neq \bc_2}d(\bc_1,\bc_2).$ Consequently, the code $C$ has the error-correcting capability $e=e(C)=\lfloor(\dmin(C)-1)/2 \rfloor.$

Next we consider the sequence reconstruction problem. For the rest of the paper, let $C\subseteq \F^n$ be any $e$-error-correcting code.  A codeword $\bx\in C$ is transmitted through $N$ channels where, in each of them, at most $t$ substitution errors can occur. In the sequence reconstruction problem, our aim is to reconstruct $\bx$ based on   the $N$ distinct outputs $Y=\{\by_1,\dots, \by_N\}$ from the
channels (see Fig.~\ref{LevenshteinFig}).   

\begin{figure}[htp]
	\centering
	\includegraphics[scale=0.33,trim=1500 270 1500 60,clip]{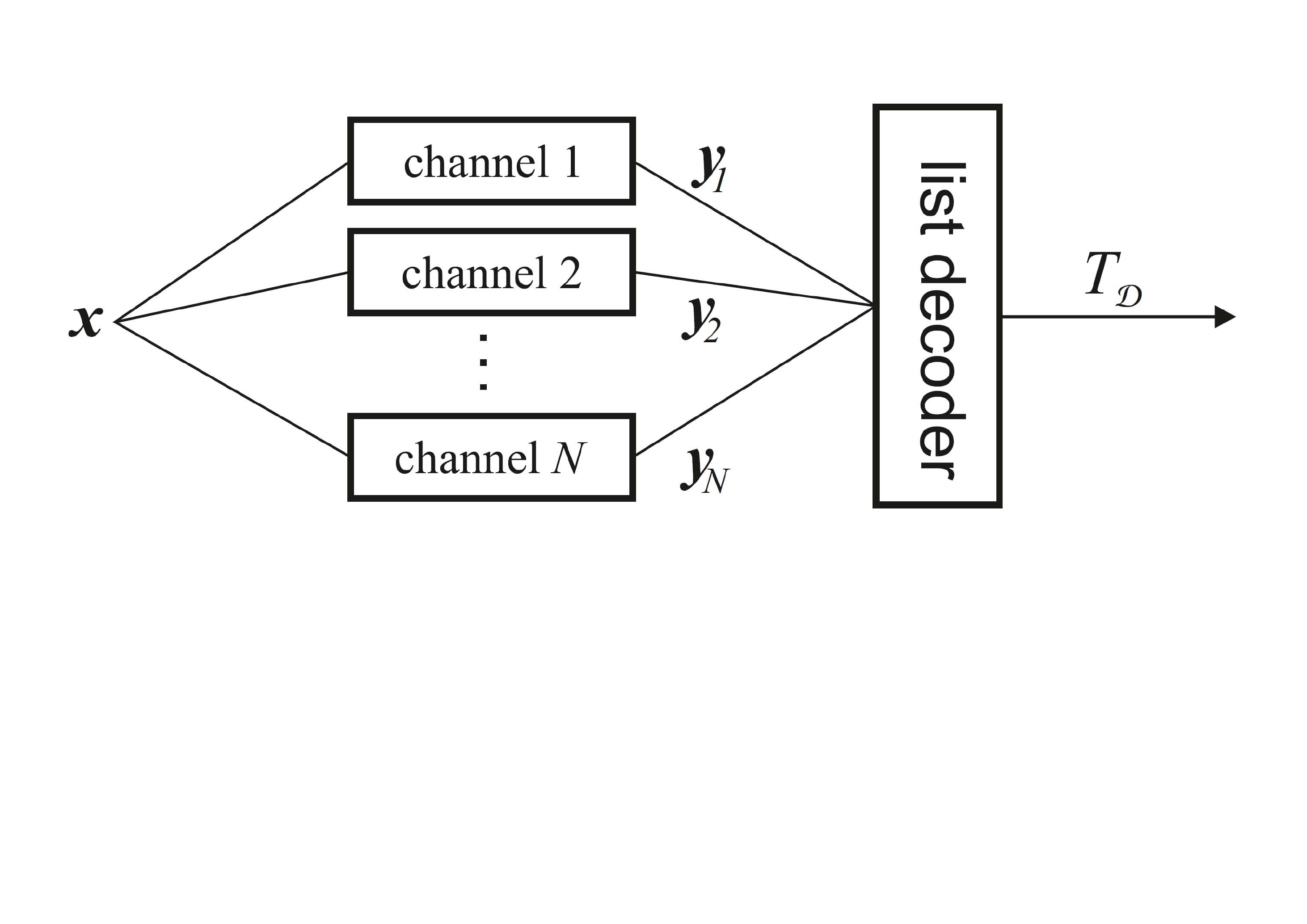}
	\centering\caption{The Levenshtein's sequence reconstruction.} \label{LevenshteinFig}
\end{figure}

It is assumed that $t>e(C)$ (if $t\le e(C),$ then only one channel is enough to reconstruct $\bx$). For $\ell\ge 1$, let us denote  $$t=e(C)+\ell=e+\ell$$ for the rest of paper. The situation where we obtain sometimes a short list of possibilities for $\bx$ instead of always recovering $\bx$ uniquely,  is considered in \cite{yaakobi2018uncertainty,YBiram}. Based on the set $Y$ and the code $C$, the list decoder (see  Fig.~\ref{LevenshteinFig}) $\mathcal{D}$ gives an estimation $T_\mathcal{D}=T_\mathcal{D}(Y)=\{\bx_1,\dots,\bx_{|T_\mathcal{D}|}\}$  on the sequence $\bx$ which we try to reconstruct. We denote by $\mathcal{L_D}$ the maximum cardinality of the list $T_\mathcal{D}(Y)$ over all possible sets $Y$ of output words. The decoder is said to be  \emph{successful} if  $\bx\in T_\mathcal{D}$. In this paper, we focus on the smallest possible value of $\mathcal{L_D}$ over all successful decoders $\mathcal{D}$, in other words, on  $\mathcal{L}=\min_{\mathcal{D}\text{ is successful}}\{\mathcal{L_D}\}$. Let us denote $$T=T(Y)=C\cap\left(\bigcap_{\y\in Y}B_t(\y)\right).$$ Consequently, $$\mathcal{L}=\max\{|T(Y)|\mid Y \text{ is a set of } N \text{ output words}\}.$$ The  value $\mathcal{L}$ depends on $e,\ell,n$ and $N$. Obviously, one would like to have as small $\mathcal{L}$  as possible. Observe that we consider the worst case scenario of the output channels regarding $\LL$. In such a situation, the channels are sometimes called adversarial; for example, see~\cite{cheraghchi2020coded}. The problem of minimizing $\LL$ is studied, for example, in \cite{yaakobi2018uncertainty,YBiram,JLirvnic,JLcirsu,laihonen2017improved,laihonen2019improved}. 
 Also a probabilistic versions of this problem have been studied (often under the name \textit{trace reconstruction}) for example in \cite{batu2004reconstructing, viswanathan2008improved}. In this paper, we mainly consider the relation between $N$ and $\LL$ for various $n$ after we  fix  two parameters $\ell$ and $e$ (while letting $C$ be any $e$-error-correcting code). The sequence reconstruction problem is also closely related  (see \cite{JLirvnic}) to  \emph{information retrieval in 	associative memory} introduced by Yaakobi and Bruck \cite{yaakobi2018uncertainty, YBiram}.  

\medskip

The structure of the paper is as follows. In Section~\ref{SectionBasics}, we recall some of the known results. In particular, it is pointed out that if we have at least (resp. less than)  $V(n,\ell-1)+1$ channels, then the list size is constant with respect to $n$ (resp. there are $e$-error-correcting codes with list size depending on $n$). In Section \ref{SectionMoreChannels}, we give the complete correspondence between the list size and the number of channels when we have more than $V(n,\ell-1)+1$ channels and $n$ is large enough. It is sometimes enough to increase the number of channels only by a constant amount in order to decrease the list size (see Corollary~\ref{Cor:lLista}). Section \ref{SectionCoveringCodes} focuses on improving the bounds on the list size when $n$ is not restricted and we obtain strictly more channels than $V(n,\ell-1)+1$. Section \ref{SectionLessChannels} is devoted to list size when we have \emph{less} than $ V(n,\ell-1)+1$ channels. The final section deals with the reconstruction with the aid of a majority algorithm on the coordinates among the output words in $Y$. 

\section{Known results} \label{SectionBasics}

In this section we present some known results on how the two values $N$ and $\LL$ are linked. The basic idea on estimating  $\LL$ is the following: we analyse the maximum number of output words ($N$) we can fit in the intersection of $\LL$ $t$-radius balls centered at codewords. As expected, the length $\LL$ of the outputted list strongly depends on the number of channels.

Previously, in~\cite{Levenshtein} and \cite{yaakobi2018uncertainty}, the problem has been considered for $\LL = 1$ and $\LL = 2$, respectively. Moreover, in \cite{junnila2020levenshtein}, the exact number of channels $N$ required to have $\LL$ constant on $n$ has been presented, see Theorems \ref{shatter raja} and \ref{RemarkNonConstantList}. Following theorem gives an exact number of channels required to have $\LL=1$.
\begin{theorem}[\cite{Levenshtein}]\label{L=1}
	We have $\LL\leq1$ if $$N\geq \sum_{i=0}^{\ell-1}\binom{n-2e-1}{i}\sum^{t-i}_{k=e+1+i-\ell}\binom{2e+1}{k}+1.$$
\end{theorem}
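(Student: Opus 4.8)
The plan is to reduce the list-size bound to a statement about the intersection of two Hamming balls. First I would observe that if some output set $Y$ with $|Y|=N$ satisfied $|T(Y)|\ge 2$, then $T(Y)$ would contain two distinct codewords $\bc_1,\bc_2\in C$. By the definition $T(Y)=C\cap\bigl(\bigcap_{\by\in Y}B_t(\by)\bigr)$, this means $d(\by,\bc_1)\le t$ and $d(\by,\bc_2)\le t$ for every $\by\in Y$, so $Y\subseteq B_t(\bc_1)\cap B_t(\bc_2)$ and hence $N=|Y|\le |B_t(\bc_1)\cap B_t(\bc_2)|$. Consequently it suffices to choose $N$ strictly larger than $\max_{\bc_1\ne\bc_2\in C}|B_t(\bc_1)\cap B_t(\bc_2)|$, and the theorem will follow once I show that this maximum equals the double sum in the statement.

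Next I would reduce to the minimum distance. Since $|B_t(\bc_1)\cap B_t(\bc_2)|$ is invariant under translation and coordinate permutations, it depends only on $d=d(\bc_1,\bc_2)$; write it as $h(d)$. As $C$ is $e$-error-correcting, $d\ge 2e+1$, so I need $\max_{d\ge 2e+1}h(d)$, and I would prove that $h$ is non-increasing so that the maximum is attained at $d=2e+1$. To compare $h(d)$ with $h(d+1)$, I would fix a coordinate $j$ in which two centres, at distances $d$ and $d+1$ from the reference word $\bc_1=\nolla$, differ, and split each intersection according to the value of $\by_j$. Writing $G(r,s)$ for the size of an intersection of balls of radii $r$ and $s$ about two points at distance $d$ in the remaining $n-1$ coordinates, the splitting yields $h(d)=G(t,t)+G(t-1,t-1)$ and $h(d+1)=G(t,t-1)+G(t-1,t)$, so that $h(d)-h(d+1)=\bigl(G(t,t)-G(t,t-1)\bigr)-\bigl(G(t-1,t)-G(t-1,t-1)\bigr)$. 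The first parenthesis counts words at distance exactly $t$ from one centre and within distance $t$ of the other, and the second counts such words within distance $t-1$ of the other; the latter set is contained in the former, whence $h(d)-h(d+1)\ge 0$.

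Finally I would evaluate $h(2e+1)$. Taking $\bc_1=\nolla$ and $\bc_2$ of weight $2e+1$ supported on the first $2e+1$ coordinates, a word $\by$ with $k$ ones among these $2e+1$ coordinates and $i$ ones among the remaining $n-2e-1$ has $d(\by,\bc_1)=k+i$ and $d(\by,\bc_2)=(2e+1-k)+i$. The conditions $k+i\le t$ and $2e+1-k+i\le t$ become $e+1+i-\ell\le k\le t-i$, while adding them gives $2i\le 2\ell-1$, i.e. $0\le i\le\ell-1$. Summing $\binom{2e+1}{k}\binom{n-2e-1}{i}$ over these ranges produces exactly $h(2e+1)=\sum_{i=0}^{\ell-1}\binom{n-2e-1}{i}\sum_{k=e+1+i-\ell}^{t-i}\binom{2e+1}{k}$. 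Hence $N\ge h(2e+1)+1$ forbids any size-$N$ output set from lying in the intersection of two codeword balls, giving $\LL\le 1$.

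The main obstacle is the monotonicity claim that $d=2e+1$ is the worst case; once it is reduced to the nested-set comparison above it becomes routine, but some care is needed in the final evaluation to check that terms with $k<0$ or $k>2e+1$ contribute zero, so that the upper summation limit may be written uniformly as $t-i$. I would also remark that the bound is tight: any code containing two codewords at distance exactly $2e+1$ realizes an intersection of size $h(2e+1)$, so that $N=h(2e+1)$ channels do not suffice in general.
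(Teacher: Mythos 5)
The paper itself contains no proof of this statement: Theorem~\ref{L=1} is quoted directly from \cite{Levenshtein}, so there is no internal argument to compare against. Your proof is correct and is essentially the classical argument behind Levenshtein's result. All three steps check out: (1) the reduction of $\LL\le 1$ to the requirement $N>\max_{\bc_1\neq\bc_2\in C}|B_t(\bc_1)\cap B_t(\bc_2)|$ is immediate from the definition of $T(Y)$; (2) the monotonicity of $h(d)=|B_t(\bc_1)\cap B_t(\bc_2)|$ in $d=d(\bc_1,\bc_2)$ is sound, since the two centres at distances $d$ and $d+1$ from $\nolla$ restrict to the \emph{same} weight-$d$ word outside the distinguished coordinate $j$, which validates the splitting identities $h(d)=G(t,t)+G(t-1,t-1)$ and $h(d+1)=G(t,t-1)+G(t-1,t)$, and the nested-set comparison (words at distance exactly $t$ from one centre and within $t$, respectively within $t-1$, of the other) gives $h(d)-h(d+1)\ge 0$; and (3) the evaluation at $d=2e+1$ produces exactly the double sum of the statement, with the conditions $e+1+i-\ell\le k\le t-i$ and $i\le\ell-1$ derived correctly and the convention that out-of-range binomial coefficients vanish justifying the uniform summation limits. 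Your closing tightness remark is also right: for a code containing two codewords at distance exactly $2e+1$, the full intersection is a legitimate output set, which is precisely why the theorem's bound on $N$ cannot be lowered in general.
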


\begin{theorem}[\cite{yaakobi2018uncertainty}]\label{YB raja l=2}
If
$N\geq\sum_{i_1,i_2,i_3,i_4}\binom{n-\lceil\frac{3d}{2}\rceil}{i_1}\binom{\lceil\frac{d}{2}\rceil}{i_2}\binom{\lceil\frac{d}{2}\rceil}{i_3}\binom{\lfloor\frac{d}{2}\rfloor}{i_4}+1$ for \begin{itemize}
\item $0\leq i_1\leq t-\lceil\frac{d}{2}\rceil$,
\item $i_1+\lfloor\frac{d}{2}\rfloor-t\leq i_4\leq t-\lceil\frac{d}{2}\rceil-i_1$, 
\item $2\lceil\frac{d}{2}\rceil-t+i_1\leq i_3\leq t-(i_1+i_4)$ and 
\item$\max\{i_1-i_3-i_4+\lceil\frac{3d}{2}\rceil-t,i_1+i_3+i_4+\lceil\frac{d}{2}\rceil-t\}\leq i_2\leq t-(i_1+i_4+\lceil\frac{d}{2}\rceil-i_3)$,
\end{itemize}  then $\mathcal{L}\leq2$ for any code $C$ with minimum distance $d$.
\end{theorem}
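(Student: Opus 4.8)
The plan is to argue by contradiction, reducing the list-size statement to a bound on the intersection of three balls. Suppose $\LL\ge 3$; then some set $Y$ of $N$ output words satisfies $|T(Y)|\ge 3$, so there are three codewords $\bc_1,\bc_2,\bc_3\in C$ with $d(\bc_i,\bc_j)\ge \dmin(C)=d$ for $i\ne j$, each lying within distance $t$ of every $\by\in Y$. Then $Y\subseteq B_t(\bc_1)\cap B_t(\bc_2)\cap B_t(\bc_3)$, and since the outputs are distinct,
\[
N=|Y|\le |B_t(\bc_1)\cap B_t(\bc_2)\cap B_t(\bc_3)|.
\]
Hence it suffices to show that the right-hand side never exceeds the displayed sum $M$ (the quantity to which the theorem adds $1$): this would contradict $N\ge M+1$ and force $\LL\le 2$. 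This mirrors the two-ball argument underlying Theorem~\ref{L=1}, now with one additional center.

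Next I would compute the intersection explicitly. Translating by $\bc_1$ we may take $\bc_1=\nolla$, and partition $[1,n]$ by the pattern of $(\bc_2,\bc_3)$ into four regions: $R_0$ where both are $0$, $R_2$ where only $\bc_2$ is $1$, $R_3$ where only $\bc_3$ is $1$, and $R_{23}$ where both are $1$, of sizes $a,b_2,b_3,f$. The distance conditions $d(\bc_1,\bc_2),d(\bc_1,\bc_3),d(\bc_2,\bc_3)\ge d$ become
\[
b_2+f\ge d,\qquad b_3+f\ge d,\qquad b_2+b_3\ge d .
\]
Writing a candidate $\by$ as having $j_0,j_2,j_3,j_{23}$ ones in the four regions, the three conditions $d(\by,\bc_k)\le t$ turn into three linear inequalities in $(j_0,j_2,j_3,j_{23})$, and
\[
|B_t(\bc_1)\cap B_t(\bc_2)\cap B_t(\bc_3)|=\sum \binom{a}{j_0}\binom{b_2}{j_2}\binom{b_3}{j_3}\binom{f}{j_{23}},
\]
the sum ranging over all quadruples obeying those inequalities and the box constraints $0\le j_0\le a$, etc.

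The crux is to maximize this count over the feasible region sizes $(a,b_2,b_3,f)$ with $a+b_2+b_3+f=n$ subject to the three distance constraints. I claim the maximum occurs at the symmetric minimum-distance configuration $b_2=b_3=\lceil d/2\rceil$, $f=\lfloor d/2\rfloor$, whence $a=n-\lceil 3d/2\rceil$; substituting these sizes reproduces exactly the four binomial factors in the statement. I would prove this in two monotonicity steps. First, enlarging the free region $R_0$ can only increase the count: moving a coordinate out of a difference region into $R_0$ replaces options that each add $1$ to at least one of the three distances by the option that adds $0$ to all three, so by a coordinatewise injection the number of admissible $\by$ does not decrease; iterating drives $b_2+b_3+f$ down to its minimum $\lceil 3d/2\rceil$ (forced by summing the three constraints, $2(b_2+b_3+f)\ge 3d$). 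Second, on this minimal face the count is maximized by balancing $b_2,b_3,f$ as evenly as the parity of $d$ allows, yielding the claimed sizes. This optimization over the coupled integer polytope is the main obstacle, because the overlap region $R_{23}$—absent in the two-ball setting of Theorem~\ref{L=1}—enters all three inequalities at once and destroys the clean symmetry, so the exchange argument must be arranged to preserve feasibility of all three distance constraints simultaneously.

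Finally, I would make the bookkeeping explicit at the optimum. Renaming $(j_0,j_2,j_3,j_{23})$ as $(i_1,i_2,i_3,i_4)$ and isolating each variable in turn inside the inequalities $i_1+i_2+i_3+i_4\le t$, $d(\by,\bc_2)\le t$, $d(\by,\bc_3)\le t$ together with the box constraints produces precisely the nested ranges for $i_1,i_4,i_3,i_2$ listed in the theorem; summing the surviving terms gives $M$. Then $N\le M$ contradicts the hypothesis $N\ge M+1$, so $\LL\le 2$ and the proof is complete.
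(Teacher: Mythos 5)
Your overall frame---reducing $\LL\le 2$ to the bound $N\le\left|B_t(\bc_1)\cap B_t(\bc_2)\cap B_t(\bc_3)\right|$ for three codewords at pairwise distance at least $d$, and then maximizing this intersection over configurations $(a,b_2,b_3,f)$---is a legitimate direct route, and it is essentially the original Yaakobi--Bruck strategy rather than this paper's. Note that the paper does not prove Theorem~\ref{YB raja l=2} at all: it cites it, and its own machinery (Theorem~\ref{KanavalukemaEkviv} together with Corollary~\ref{YB raja e-error l=2COR} and the Appendix) recovers the bound only for $n\ge n(e,\ell,b)$, precisely because Lemmas~\ref{l-1 ulkona} and~\ref{Listasanat lahella POL} then \emph{force} the three list codewords into the extremal configuration (pairwise distances $2e+1$ or $2e+2$, i.e.\ your balanced triple), so no global optimization is needed. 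Your proposal aims at the stronger, all-$n$ statement, which is exactly where the difficulty lies.

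The gap is in the optimization itself. Your first monotonicity step is sound in substance: moving one coordinate from $R_2$, $R_3$ or $R_{23}$ into $R_0$ never decreases the count (for each fixed restriction of $\by$ to the remaining coordinates, if both old options fit the budget then so do both new ones, and if one old option fits then the all-zero-cost option fits). But the claim that iterating this ``drives $b_2+b_3+f$ down to its minimum $\lceil 3d/2\rceil$'' is false: the greedy descent stalls at local minima of the system $b_2+f\ge d$, $b_3+f\ge d$, $b_2+b_3\ge d$ that are far from balanced, for instance $(b_2,b_3,f)=(d,d,0)$ or $(0,d,d)$, where removing any single coordinate violates a constraint even though $b_2+b_3+f=2d$. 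Your ``second step'' never touches such configurations, since they do not lie on the face $b_2+b_3+f=\lceil 3d/2\rceil$ at all. Comparing distinct local minima, e.g.\ passing from $(d-j,d-j,j)$ to $(d-j-1,d-j-1,j+1)$, requires a compound exchange replacing one coordinate of $R_2$ and one of $R_3$ by one of $R_{23}$ and one of $R_0$; this is no longer a coordinatewise domination (the new cost-vector set $\{(0,1,1),(1,0,0),(1,2,2),(2,1,1)\}$ neither dominates nor is dominated by the old set $\{(0,1,1),(1,2,0),(1,0,2),(2,1,1)\}$), and showing it never decreases the count needs a separate budget-by-budget case analysis that the proposal does not supply. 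Since this balancing argument---not the final bookkeeping---is the actual content of the theorem, the proof as written is incomplete.
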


The next result is a a reformulation of a result by Yaakobi and Bruck \cite[Algorithm~18]{yaakobi2018uncertainty} proven in \cite{junnila2020levenshtein}.\nopagebreak
\begin{theorem}\label{2 kanavaa 2l yli l}
	Let $n \geq 2\ell-1$ and $C$ be an $e$-error-correcting code in $\F^n$. If $N \geq V(n,\ell-1)+1$, then we have $$\LL \leq \binom{2\ell}{\ell}.$$
\end{theorem}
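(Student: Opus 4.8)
The plan is to convert the list-size bound into an estimate on the size of an intersection of Hamming balls centred at codewords, and then to bound that intersection by the volume $V(n,\ell-1)$. \emph{Reduction.} Fix a set $Y$ of $N$ output words and put $T=T(Y)$. If $\bc\in T$, then $d(\bc,\by)\le t$ for every $\by\in Y$, so $Y\subseteq B_t(\bc)$ and hence $Y\subseteq\bigcap_{\bc\in T}B_t(\bc)$; in particular $N\le\bigl|\bigcap_{\bc\in T}B_t(\bc)\bigr|$. Thus it suffices to prove the purely geometric statement that \emph{any} $\binom{2\ell}{\ell}+1$ distinct codewords $\bc_1,\dots,\bc_m$ of an $e$-error-correcting code satisfy $\bigl|\bigcap_{i}B_{e+\ell}(\bc_i)\bigr|\le V(n,\ell-1)$. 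Indeed, were $\LL=|T|\ge\binom{2\ell}{\ell}+1$, then restricting to $\binom{2\ell}{\ell}+1$ codewords of $T$ would force $N\le V(n,\ell-1)$, contradicting $N\ge V(n,\ell-1)+1$; hence $\LL\le\binom{2\ell}{\ell}$.

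\emph{Local structure.} Write $W=\bigcap_i B_{e+\ell}(\bc_i)$ and assume $W\neq\emptyset$ (otherwise there is nothing to prove). Translating the whole configuration, we may take a fixed word $\by_0\in W$ to be $\nolla$. Every codeword then has weight $w(\bc_i)=d(\bc_i,\nolla)\le e+\ell$, while $d(\bc_i,\bc_j)\ge 2e+1$. Comparing $d(\bc_i,\bc_j)=w(\bc_i)+w(\bc_j)-2|\supp(\bc_i)\cap\supp(\bc_j)|$ with the bounds $w(\bc_i),w(\bc_j)\le e+\ell$ gives $|\supp(\bc_i)\cap\supp(\bc_j)|\le\ell-1$ for all $i\neq j$; in particular the common core $K=\bigcap_i\supp(\bc_i)$ satisfies $|K|\le\ell-1$. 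So the codewords are almost disjoint, low-weight words sharing only a tiny core.

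\emph{An injection, and the main obstacle.} Consider the translation $\varphi\colon\by\mapsto\by+\mathbf{1}_K$. Being a translation, $\varphi$ is injective on $W$, so $|W|=|\varphi(W)|$, and the theorem follows as soon as $\varphi(W)\subseteq B_{\ell-1}(\nolla)$, since then $|W|\le|B_{\ell-1}(\nolla)|=V(n,\ell-1)$. Everything therefore reduces to the single claim that
$$|\supp(\by)\,\triangle\,K|\le\ell-1\qquad\text{for every }\by\in W.$$
This is where the hypothesis $m\ge\binom{2\ell}{\ell}+1$ must be spent, and I expect it to be the crux: the claim is plainly false for small families (two codewords at distance $2e+1$ already admit $\by\in W$ with $|\supp(\by)\triangle K|\ge\ell$), so having many codewords is essential. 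The mechanism I would pursue is contrapositive. Suppose some $\by\in W$ has $|\supp(\by)\triangle K|\ge\ell$. Each codeword lies within distance $e+\ell$ of both output words $\nolla$ and $\by$, and the two constraints $d(\bc_i,\nolla)\le e+\ell$ and $d(\bc_i,\by)\le e+\ell$ force every codeword to split its coordinates between $\nolla$ and $\by$ in a balanced way on the coordinates where these two words (together with $K$) disagree. Bounding the number of such balanced splits that can be mutually $(2e+1)$-separated should give exactly $\binom{2\ell}{\ell}$, contradicting the existence of $\binom{2\ell}{\ell}+1$ codewords.

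\emph{What is delicate.} Turning this sketch into a proof requires (i) isolating, uniformly over the family, the bounded set of active coordinates on which the split is measured, (ii) verifying that two codewords realising the same balanced split must violate $d\ge 2e+1$, and (iii) checking both that the admissible splits number at most $\binom{2\ell}{\ell}$ and that the assumption $n\ge 2\ell-1$ keeps the weight-$(\ell-1)$ count nondegenerate. Step (iii), the exact enumeration yielding the central binomial coefficient, is the part I expect to demand the most care; it is essentially the content of Yaakobi and Bruck's Algorithm~18, whose coordinate-by-coordinate analysis offers an alternative, more computational route to the same bound.
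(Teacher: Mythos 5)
There is a genuine gap here, and in fact the intermediate claim you reduce everything to is false as stated. Your reduction to the geometric statement (any $\binom{2\ell}{\ell}+1$ codewords have $\bigl|\bigcap_i B_t(\bc_i)\bigr|\le V(n,\ell-1)$) is fine, as are the weight and support-intersection estimates. The problem is the step where you translate an \emph{arbitrary} $\by_0\in W$ to $\nolla$, set $K=\bigcap_i\supp(\bc_i)$, and claim that $|\supp(\by)\,\triangle\,K|\le\ell-1$ for every $\by\in W$. Consider the extremal configuration behind Theorem~\ref{RemarkNonConstantList}: codewords $\bc_1,\dots,\bc_m$ with pairwise disjoint supports of size $e+1$, say $\supp(\bc_i)=[(i-1)(e+1)+1,i(e+1)]$. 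For $m\ge\ell+2$ one checks that $W=\bigcap_i B_t(\bc_i)$ is \emph{exactly} $B_{\ell-1}(\nolla)$ (a word of weight $w\ge\ell$ would need its support to meet every $\supp(\bc_i)$, forcing $w\ge m$, and then the distance conditions are violated). Now choose $\by_0=\e_1\in W$ and translate: the shifted codewords $\bc_i+\e_1$ have empty common support, so $K=\emptyset$, and your claim asserts $W+\e_1\subseteq B_{\ell-1}(\nolla)$, i.e.\ $B_{\ell-1}(\e_1)\subseteq B_{\ell-1}(\nolla)$ --- false: take any word of weight $\ell-1$ avoiding coordinate $1$ and add $\e_1$. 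The point is that the correct center of an enclosing $(\ell-1)$-ball (here $\nolla$, uniquely) is not recoverable as $\mathbf{1}_K$ relative to an arbitrary base point of $W$, so no amount of "spending" the hypothesis $m\ge\binom{2\ell}{\ell}+1$ can rescue the claim in this form.

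Beyond this, the place where the hypothesis on the number of codewords actually enters is never argued: the contrapositive "mechanism" (balanced splits, mutual $(2e+1)$-separation, enumeration giving exactly $\binom{2\ell}{\ell}$) is a program, not a proof, and you defer precisely the steps (i)--(iii) that constitute it. For comparison, note that the paper itself does not prove this theorem either --- it cites it as a reformulation of Yaakobi--Bruck's Algorithm~18 established in \cite{junnila2020levenshtein} --- and the stronger neighbouring bound $\LL\le 2^\ell$ (Theorem~\ref{shatter raja}) is obtained by an entirely different mechanism, the Sauer--Shelah shattering lemma (Theorem~\ref{Sauer-Shelah}): one extracts a shattered set $S$ of $\ell$ coordinates from the $V(n,\ell-1)+1$ outputs, and for each of the $2^\ell$ patterns on $S$ the $e$-error-correcting property pins down at most one codeword. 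If you want a self-contained argument, that route is both cleaner and gives a better constant than the ball-containment strategy attempted here.
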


The bound in Theorem \ref{2 kanavaa 2l yli l} can be improved to $2^\ell$ which has been shown to be tight in \cite{junnila2020levenshtein}.
\begin{theorem}[Theorem $7$, \cite{junnila2020levenshtein}]\label{shatter raja}
	Let $n \geq \ell$ and $C$ be an $e$-error-correcting code in $\F^n$. If $t = e + \ell$ and $N \geq V(n,\ell-1)+1$, then we have $$\LL \leq 2^{\ell}.$$
\end{theorem}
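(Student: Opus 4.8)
The plan is to deduce the bound from the Sauer--Shelah lemma combined with an injectivity argument on $T(Y)$. Fix an arbitrary set $Y$ of $N$ output words; if $T(Y)=\emptyset$ there is nothing to prove, so assume $T(Y)\neq\emptyset$. Since distinct words have distinct supports, the family $\{\supp(\by)\mid \by\in Y\}$ consists of $N\geq V(n,\ell-1)+1$ distinct subsets of $[1,n]$. Because $V(n,\ell-1)=\sum_{i=0}^{\ell-1}\binom{n}{i}$ and $n\geq\ell$, the Sauer--Shelah lemma guarantees that this family \emph{shatters} some set $A\subseteq[1,n]$ with $|A|=\ell$; that is, for every pattern $B\subseteq A$ there is a word $\by_B\in Y$ with $\supp(\by_B)\cap A=B$.

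The heart of the argument is to show that the restriction map $\bc\mapsto \bc|_A$ is injective on $T(Y)$, which immediately yields $|T(Y)|\leq 2^{|A|}=2^{\ell}$. Suppose toward a contradiction that two distinct codewords $\bc,\bc'\in T(Y)$ agree on $A$. Set $\delta=d(\bc,\bc')$ and $S=\supp(\bc+\bc')$; since $\bc,\bc'$ agree on $A$ we have $S\cap A=\emptyset$, and since $C$ is $e$-error-correcting, $\delta=|S|\geq\dmin(C)\geq 2e+1$. Using shattering, I would select the output word $\by\in Y$ whose pattern on $A$ is the \emph{complement} of $\bc|_A$, so that $\by$ disagrees with both $\bc$ and $\bc'$ in all $\ell$ coordinates of $A$.

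Now I would split the two distances $d(\by,\bc)$ and $d(\by,\bc')$ according to the three coordinate blocks $A$, $S$, and the remainder $[1,n]\setminus(A\cup S)$. On $A$ each distance picks up exactly $\ell$; on $S$ the word $\by$ disagrees with precisely one of $\bc,\bc'$ at each coordinate, so $S$ contributes exactly $\delta$ to the sum $d(\by,\bc)+d(\by,\bc')$; on the remainder $\bc=\bc'$, so those coordinates contribute equally to both terms. Hence $d(\by,\bc)+d(\by,\bc')\geq 2\ell+\delta$. On the other hand, $\bc,\bc'\in T(Y)$ forces $d(\by,\bc)\leq t$ and $d(\by,\bc')\leq t$, so $2\ell+\delta\leq 2t=2e+2\ell$, giving $\delta\leq 2e$ and contradicting $\delta\geq 2e+1$. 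Therefore $\bc|_A=\bc'|_A$ implies $\bc=\bc'$, and injectivity follows.

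The genuinely nontrivial step is the first one: recognising that the channel threshold $V(n,\ell-1)+1$ is exactly the Sauer--Shelah cardinality bound, so that \emph{enough output words} translates into \emph{a shattered $\ell$-set of coordinates}. Once such an $A$ is in hand the rest is a short triangle-inequality computation; the only points requiring care are to single out, among the $2^{\ell}$ available patterns on $A$, the one complementary pattern that inflates both distances simultaneously, and to note that the corresponding word genuinely lies in $Y$ (so that the constraints $d(\by,\cdot)\leq t$ apply to it). I expect the bookkeeping across the blocks $A$, $S$, and the remainder to be the only place where an off-by-one between $\dmin\geq 2e+1$ and $t=e+\ell$ could creep in.
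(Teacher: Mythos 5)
Your proof is correct and follows essentially the same route as the paper: the theorem is imported from \cite{junnila2020levenshtein}, but the paper reproves it in generalized form (Theorem~\ref{Sauer-Shelah}, Lemma~\ref{e-pallot}, and the $a=0$ case of Theorem~\ref{tldc}), likewise applying the Sauer--Shelah lemma to extract a shattered $\ell$-set of coordinates and then exploiting the output word that realizes the complementary pattern on that set. The only difference is in the packaging of the last step: the paper flips the shattered coordinates to produce $2^{\ell}$ centers $\beta_i$ whose $e$-balls cover $T(Y)$ (which additionally yields a decoding algorithm and extends to codes with at most $M$ codewords per ball), whereas you phrase the same distance computation as injectivity of the restriction map $\bc\mapsto\bc|_A$ on $T(Y)$.
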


Besides the $2^\ell$ part, also the value $V(n,\ell-1)+1$ for the number of channels is tight, that is, if the value for $N$ is less, then list size $\LL$ can be linear with respect to  $n$.
\begin{theorem}[Theorem $10$, \cite{junnila2020levenshtein}]\label{RemarkNonConstantList}
	If $N\leq V(n,\ell-1)$, then there exists an $e$-error-correcting code such that
	$\LL \geq \lfloor n / (e+1) \rfloor$. 
\end{theorem}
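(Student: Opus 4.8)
The plan is to exhibit, for every $n$, a single $e$-error-correcting code $C$ together with a set $Y$ of $N\le V(n,\ell-1)$ output words whose associated list $T(Y)$ already contains $\lfloor n/(e+1)\rfloor$ codewords. Set $m=\lfloor n/(e+1)\rfloor$ and partition the coordinate set $[1,n]$ into $m$ disjoint blocks $B_1,\dots,B_m$, each of size $e+1$ (any leftover coordinates are simply never used). For each $j$ let $\bc_j\in\F^n$ be the word whose support is exactly $B_j$, so that $w(\bc_j)=e+1$, and put $C=\{\bc_1,\dots,\bc_m\}$. Since the supports are pairwise disjoint, $d(\bc_i,\bc_j)=2(e+1)=2e+2$ for $i\neq j$, hence $\dmin(C)=2e+2$ and $e(C)=\lfloor(2e+1)/2\rfloor=e$; thus $C$ is a genuine $e$-error-correcting code, as required.

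Next I would show that every low-weight word lies close to all codewords at once. Fix any $\by$ with $w(\by)\le \ell-1$. Counting disagreements on $B_j$ and off $B_j$ separately gives
\[
d(\by,\bc_j)=(e+1)+w(\by)-2\,|\supp(\by)\cap B_j|\le (e+1)+(\ell-1)=e+\ell=t,
\]
because the intersection term is nonnegative. Hence $\bc_j\in B_t(\by)$ for every $j$ and every such $\by$; equivalently, $\by\in\bigcap_{j}B_t(\bc_j)$. There are exactly $V(n,\ell-1)=\sum_{i=0}^{\ell-1}\binom{n}{i}$ words of weight at most $\ell-1$, and since $N\le V(n,\ell-1)$ I can select $N$ \emph{distinct} such words to form $Y=\{\by_1,\dots,\by_N\}$.

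Finally, this $Y$ is a legitimate list-decoding instance: transmitting, say, $\bc_1$ produces outputs all within distance $t$ of $\bc_1$, so $Y$ is a set of $N$ admissible output words. By the displayed inequality every codeword satisfies $\bc_j\in\bigcap_{\by\in Y}B_t(\by)$, so $\{\bc_1,\dots,\bc_m\}\subseteq T(Y)=C\cap\bigl(\bigcap_{\by\in Y}B_t(\by)\bigr)$, giving $\LL\ge |T(Y)|\ge m=\lfloor n/(e+1)\rfloor$. The construction itself is essentially the whole content of the argument; the only points needing care are the bookkeeping that the chosen output words are genuinely distinct (guaranteed by $N\le V(n,\ell-1)$) and that the code meets the minimum-distance requirement exactly, so that its error-correcting capability is $e$ and not larger. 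Both are immediate for the block construction above.
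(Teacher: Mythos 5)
Your proposal is correct and matches the standard argument behind this result: codewords of weight $e+1$ with pairwise disjoint supports (giving $\dmin = 2e+2$ and capability exactly $e$), combined with an output set consisting of distinct words of weight at most $\ell-1$, i.e.\ the ball $B_{\ell-1}(\nolla)$ of volume $V(n,\ell-1)$, each of which lies within distance $t = e+\ell$ of every codeword. This is precisely the geometric configuration the paper alludes to (``a ball of volume $V(n,\ell-1)$ is essential'') and reuses in its later constructions, so no further comparison is needed.
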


Let us denote for the rest of the paper \begin{align*}
n(e,\ell,b)=(\ell-1)^2\left(b-e+(e+1)\left(b-3e-2e^2+eb+\binom{b-2e-1}{2}\right)\right)+\ell-2.
\end{align*} Although the bound for $\LL$ in Theorem \ref{shatter raja} cannot be improved in general, we can improve it, when $n$ is large, to $\ell+1$.
\begin{theorem}[Theorem $20$, \cite{junnila2020levenshtein}]\label{l+1Theorem POL}
	Let $n\geq n(e,\ell,b)$, $b=\max\{3t,4e+4\}$, $|Y|=N\geq V(n,\ell-1)+1$ and $C$ be an $e$-error-correcting code. Then we have $$\mathcal{L}\leq \ell+1.$$
\end{theorem}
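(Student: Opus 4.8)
The plan is to prove the theorem in contrapositive form as a single counting bound on intersections of balls. Since the $N$ words of $Y$ are distinct, it suffices to show that for every choice of $\ell+2$ codewords $\bc_0,\dots,\bc_{\ell+1}$ of $C$ one has
\[
\left|\bigcap_{j=0}^{\ell+1}B_t(\bc_j)\right|\le V(n,\ell-1).
\]
Indeed, this quantity is then strictly smaller than $N\ge V(n,\ell-1)+1$, so no set $Y$ of $N$ distinct words can sit inside such an intersection; hence $T(Y)$ can never contain $\ell+2$ codewords and $\LL\le\ell+1$. Thus the whole theorem reduces to this one estimate, and from now on I fix $\ell+2$ codewords with pairwise distance at least $\dmin(C)\ge 2e+1$.

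To count the intersection I first translate by a word $\bz$ minimizing $\max_j d(\bz,\bc_j)$ so that $\bz=\nolla$, and write $\rho=\max_j w(\bc_j)$. If some center gave $\rho\le e$, then two codewords would be within distance $2e<2e+1$, contradicting the minimum distance, so $\rho\ge e+1$; also $\rho\le t$, so $U=\bigcup_j\supp(\bc_j)$ has bounded size. Writing each $\by$ as a pair $(\by_U,\by_{\bar{U}})$ on $U$ and its complement $\bar{U}$, and using that every $\bc_j$ vanishes outside $U$, gives $d(\by,\bc_j)=d(\by_U,\bc_j)+w(\by_{\bar{U}})$, so $\by\in\bigcap_j B_t(\bc_j)$ precisely when $w(\by_{\bar{U}})\le t-\max_j d(\by_U,\bc_j)$. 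Summing over the constantly many choices of $\by_U$ yields the exact formula
\[
\left|\bigcap_{j=0}^{\ell+1}B_t(\bc_j)\right|=\sum_{r=0}^{\ell-1}a_r\,V(n-|U|,r),\qquad a_r=\bigl|\{\by_U:\max_j d(\by_U,\bc_j)=t-r\}\bigr|,
\]
where the exponent never exceeds $\ell-1$ because $\max_j d(\by_U,\bc_j)\ge\rho\ge e+1$ forces $t-\max_j d(\by_U,\bc_j)\le t-(e+1)=\ell-1$.

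The conceptual heart is the claim $a_{\ell-1}\le 1$: at most one center attains the value $e+1$. I would prove this by contradiction. Suppose $\bz_1\neq\bz_2$ were two words within distance $e+1$ of all codewords; set $\be_j=\bc_j+\bz_1$ and $\mathbf{g}=\bz_1+\bz_2$, so that $w(\be_j)\le e+1$ and $w(\be_j+\mathbf{g})\le e+1$. Since there are $\ell+2\ge 3$ codewords, pick three. For any two of them $w(\be_i+\be_j)=d(\bc_i,\bc_j)\ge 2e+1$ while $w(\be_i),w(\be_j)\le e+1$, which forces $\supp(\be_i)\cap\supp(\be_j)=\emptyset$ and hence $w(\be_i)+w(\be_j)\ge 2e+1$, so at least two of the three have weight exactly $e+1$. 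Feeding these two disjoint weight-$(e+1)$ vectors into $w(\be_j+\mathbf{g})\le e+1$ pins $\supp(\mathbf{g})$ inside their union and splits it evenly between them; the third vector, being disjoint from both, then violates its own constraint $w(\be_3+\mathbf{g})\le e+1$ unless $\mathbf{g}=\nolla$, i.e.\ $\bz_1=\bz_2$. This contradiction gives $a_{\ell-1}\le 1$.

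With $a_{\ell-1}\le 1$ the leading term is at most $V(n-|U|,\ell-1)$, which falls short of $V(n,\ell-1)$ by a gap of order $n^{\ell-2}$, and it remains to absorb the lower terms into this gap, i.e.\ to verify $\sum_{r<\ell-1}a_r V(n-|U|,r)\le V(n,\ell-1)-V(n-|U|,\ell-1)$. This is where I expect the main obstacle to lie and where the explicit threshold enters: one must bound each $a_r$ and the size $|U|$ by explicit constants in $e$ and $\ell$, which is the role of $b=\max\{3t,4e+4\}$ as an upper bound on the number of ``active'' coordinates, with quantities such as $\binom{b-2e-1}{2}$ counting the relevant sub-configurations. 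Comparing the two sides term by term and solving for $n$ then produces exactly the bound $n\ge n(e,\ell,b)$. The unique-center lemma is clean and conceptual; the remaining estimate is routine but delicate, being a genuine competition between constant-sized coefficients and a polynomial gap, and obtaining the precise constant $n(e,\ell,b)$ is the real technical crux.
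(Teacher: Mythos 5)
Your skeleton is sound and partially matches the paper's strategy: the reduction to showing $|\bigcap_{j=0}^{\ell+1}B_t(\bc_j)|\le V(n,\ell-1)$ for every choice of $\ell+2$ codewords is valid, the decomposition $|\bigcap_j B_t(\bc_j)|=\sum_{r=0}^{\ell-1}a_rV(n-|U|,r)$ is correct (modulo noting that one may assume the intersection is nonempty, so that the Chebyshev radius is at most $t$), and your uniqueness lemma $a_{\ell-1}\le 1$ is correctly proved. The genuine gap is the final absorption step, and it is not ``routine but delicate'' --- it is the whole theorem, and the strategy you propose for it (bound each $a_r$ and $|U|$ by explicit constants and then solve for $n$) provably cannot work. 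The reason is that the inequality you must establish holds with \emph{equality} in the extremal case: for $\ell+2$ codewords with pairwise disjoint supports, each of weight $e+1$, one computes $a_{\ell-1-w}=\binom{u}{w}$ (where $u=|U|$) and $\sum_r a_rV(n-u,r)=V(n,\ell-1)$ exactly. Hence any slack in estimating the $a_r$ is fatal. Concretely, the gap is $V(n,\ell-1)-V(n-u,\ell-1)=\sum_{i=1}^{u}V(n-i,\ell-2)\approx u\binom{n}{\ell-2}$, while the crude bound $\sum_{r\le\ell-2}a_r\le 2^u$ only caps the lower terms by roughly $2^u\binom{n}{\ell-2}$; both sides are of order $n^{\ell-2}$ and $2^u>u$, so no threshold on $n$ rescues this comparison. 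What your plan actually requires is the sharp family of bounds $|S_w|\le V(u,w)$ for every $w\le\ell-1$, where $S_w$ is the set of words within distance $e+1+w$ of all $\ell+2$ codewords (your lemma is exactly the case $w=0$); indeed $\sum_r a_rV(n-u,r)=\sum_{w}|S_w|\binom{n-u}{\ell-1-w}\le\sum_w V(u,w)\binom{n-u}{\ell-1-w}=V(n,\ell-1)$. Your three-codeword disjointness trick does not generalize to $w\ge1$ in any evident way, and this missing family of inequalities is where all the difficulty of the theorem is concentrated.

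The paper (following the proof in \cite{junnila2020levenshtein}, whose machinery appears here as Lemmas~\ref{l-1 ulkona} and~\ref{Listasanat lahella POL} and in the proof of Theorem~\ref{KanavalukemaUusi}) avoids this issue by arguing in the opposite direction: instead of bounding the intersection for an \emph{arbitrary} codeword configuration, it uses the $V(n,\ell-1)+1$ output words themselves to force the configuration of codewords in $T(Y)$ into the extremal shape. Lemma~\ref{Listasanat lahella POL} shows all pairwise distances lie in $\{2e+1,2e+2\}$, and the argument of Theorem~\ref{KanavalukemaUusi} then normalizes the codewords to have disjoint supports of weights $e$ or $e+1$; for $\ell+2$ such codewords the intersection of the $t$-balls is exactly $V(n,\ell-1)$, which is less than $N$. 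That structural use of the output words $Y$ (via Lemma~\ref{l-1 ulkona}, which is also the source of the threshold $n(e,\ell,b)$) is precisely what substitutes for the sharp bounds on $|S_w|$ that your approach still owes.
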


Moreover, the bound $\ell+1$ is tight.
\begin{theorem}[Theorem $9$, \cite{junnila2020levenshtein}]\label{l+1 tiukka}
 There exists an $e$-error-correcting code $C\subseteq \F_2^n$ such that $\mathcal{L}\ge \ell+1$ if $n\ge \ell+\ell e+e$ and the number of channels satisfies $N\le V(n,\ell-1)+1$.
\end{theorem}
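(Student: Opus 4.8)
The plan is to prove this by an explicit construction. To obtain $\LL \ge \ell+1$ it suffices to exhibit, for every $n \ge \ell + \ell e + e$, an $e$-error-correcting code $C$ and a set $Y$ of $N$ output words with $|T(Y)| \ge \ell+1$. Since $T(Y') \supseteq T(Y)$ whenever $Y' \subseteq Y$, the value $\LL$ is non-increasing in $N$, so it is enough to handle the largest admissible count $N = V(n,\ell-1)+1$. Concretely, I would build $\ell+1$ codewords $c_0,c_1,\dots,c_\ell$ of pairwise distance at least $2e+1$ whose common neighborhood $\bigcap_{j=0}^{\ell} B_t(c_j)$ has at least $V(n,\ell-1)+1$ elements. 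Taking $C=\{c_0,\dots,c_\ell\}$ and letting $Y$ be any $N$ distinct words of this intersection, each $c_j$ lies within distance $t$ of every word of $Y$; hence $T(Y)\supseteq\{c_0,\dots,c_\ell\}$ and $\LL\ge\ell+1$ (and $c_0$ serves as a legitimate transmitted word, since $Y\subseteq B_t(c_0)$).

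For the construction I would place the codewords on pairwise disjoint supports. Reserve disjoint coordinate sets $A_0,A_1,\dots,A_\ell$ with $|A_0|=e$ and $|A_j|=e+1$ for $1\le j\le\ell$; these fit because $|A_0|+\sum_{j=1}^{\ell}|A_j|=(\ell+1)e+\ell\le n$. Let $c_0$ have support $A_0$ and $c_j$ have support $A_j$. Disjointness gives $d(c_0,c_j)=e+(e+1)=2e+1$ and $d(c_i,c_j)=2e+2$ for $1\le i<j\le\ell$, so $\dmin(C)=2e+1$, whence $e(C)=e$ and $t=e+\ell$ exactly as required. Each codeword has weight at most $e+1$, so by the triangle inequality every $y$ with $w(y)\le\ell-1$ satisfies $d(y,c_j)\le w(y)+w(c_j)\le(\ell-1)+(e+1)=t$; thus the whole ball $B_{\ell-1}(\nolla)$, of size $V(n,\ell-1)$, lies in $\bigcap_{j=0}^{\ell}B_t(c_j)$. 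To upgrade $V(n,\ell-1)$ to $V(n,\ell-1)+1$, I would adjoin one word $y^*$ of weight $\ell$ meeting each of $A_1,\dots,A_\ell$ in exactly one coordinate and avoiding $A_0$: then $d(y^*,c_j)=\ell+(e+1)-2=t-1$ for $j\ge1$ and $d(y^*,c_0)=\ell+e=t$, so $y^*\in\bigcap_{j=0}^{\ell}B_t(c_j)$, while $w(y^*)=\ell$ places it outside $B_{\ell-1}(\nolla)$. Hence the common neighborhood has at least $V(n,\ell-1)+1$ elements.

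The one delicate point, which I expect to be the main obstacle, is the coordinate budget. The hypothesis $n\ge\ell+\ell e+e=(\ell+1)(e+1)-1$ is exactly one coordinate short of accommodating $\ell+1$ disjoint weight-$(e+1)$ words, which would be the naive way to pack $\ell+1$ pairwise-distant codewords inside a common radius-$(e+1)$ ball. The remedy is to let a single codeword have weight only $e$: this still realizes minimum distance $2e+1$, so the error-correcting capability stays exactly $e$ (and $t=e+\ell$ is unchanged), while every codeword remains within distance $e+1$ of $\nolla$, which is precisely what the containment $B_{\ell-1}(\nolla)\subseteq\bigcap_j B_t(c_j)$ needs. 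The steps I would check most carefully are the verification that $|A_0|=e$ does not push $e(C)$ below or above $e$, and that the extra point $y^*$ simultaneously satisfies all $\ell+1$ distance constraints.
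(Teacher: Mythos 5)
Your proposal is correct: the construction with $\ell+1$ pairwise-disjointly supported codewords (one of weight $e$, the rest of weight $e+1$) inside $B_{e+1}(\nolla)$, together with the output set $B_{\ell-1}(\nolla)$ augmented by one weight-$\ell$ word meeting each weight-$(e+1)$ support once, is exactly the extremal configuration used in the source of this theorem, and it is the same configuration this paper itself exhibits in the proofs of Theorems~\ref{KanavalukemaUusi} and~\ref{UniqueNh}. All the details check out, including the tight coordinate budget $e+\ell(e+1)=\ell+\ell e+e$, the minimum distance $2e+1$ giving $e(C)=e$ exactly, and the reduction to $N=V(n,\ell-1)+1$ by monotonicity of $\LL$ in $N$.
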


Finally, in \cite[Theorem $6$]{yaakobi2018uncertainty}, the authors have given exact number of channels required to have $\LL\leq 2$. All in all,  $N$ is known precisely only for these three values when $\LL$ is constant on $n$. In the following section, we give the missing values for $N$.

\section{List size with more channels}\label{SectionMoreChannels}

In this section, we give exact bounds for the number of channels $N=N_h+1$ (when $n$ is large) which is required for satisfying $\LL<h$ for every constant value $h$. Previously, $N_h$ was known only for three values $h=2,3,\ell+2$. To achieve this, we need to introduce two technical lemmas from \cite{junnila2020levenshtein}.

In the following lemma, when $n$ is large, it is shown that if any three codewords in $T(Y)$ differ within some subset of coordinates $\overline{D}$ of constant size $b$, then there exists an output word $\by$ which differs from these codewords in at least $\ell-1$ coordinate positions outside of $\overline{D}$. Notice that $\supp(\bw+\bz)$ gives the set of coordinates in which $\bw$ and $\bz$ differ.
\begin{lemma}[Lemma $18$, \cite{junnila2020levenshtein}]\label{l-1 ulkona}
Let $b\geq3t$ be an integer with $t=e+\ell$ and $C_1$ be an $e$-error-correcting code. Assume that $n\geq n(e,\ell,b)$, $|Y|=N\geq V(n,\ell-1)+1$, $|T(Y)|\geq3$ and $\bc_0,\bc_1,\bc_2\in  T(Y)$. If now $\overline{D}\subseteq [1,n]$ is a set such that $|\overline{D}|= b$ and $$\supp(\bc_0+\bc_1) \cup \supp(\bc_0+\bc_2) \cup \supp(\bc_1+\bc_2) \subseteq \overline{D} \text,$$ then for any word $\bw \in \F^n$ we have $\supp(\bw+\cc_0)\setminus \overline{D} = \supp(\bw+\cc_1)\setminus \overline{D} = \supp(\bw+\cc_2)\setminus \overline{D}$ and there exists an output word $\y\in Y$ such that $$|\supp(\y+\cc_0)\setminus \overline{D}|\geq\ell-1.$$
\end{lemma}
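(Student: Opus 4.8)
The first assertion is a direct consequence of the hypothesis on $\overline{D}$. Since $\supp(\bc_0+\bc_1)\cup\supp(\bc_0+\bc_2)\cup\supp(\bc_1+\bc_2)\subseteq\overline{D}$, the three codewords agree in every coordinate outside $\overline{D}$; that is, for each $k\in[1,n]\setminus\overline{D}$ we have $(\bc_0)_k=(\bc_1)_k=(\bc_2)_k$. Hence, for any $\bw\in\F^n$ and any $k\notin\overline{D}$, the condition $w_k\neq(\bc_i)_k$ is independent of $i$, which immediately gives $\supp(\bw+\bc_0)\setminus\overline{D}=\supp(\bw+\bc_1)\setminus\overline{D}=\supp(\bw+\bc_2)\setminus\overline{D}$.

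For the second assertion I would argue by contradiction, assuming that every output word stays close to the codewords outside $\overline{D}$. For $\by\in Y$ write $S_\by=\supp(\by+\bc_0)\setminus\overline{D}$, which by the first part equals $\supp(\by+\bc_i)\setminus\overline{D}$ for each $i$. Suppose, contrary to the claim, that $|S_\by|\leq\ell-2$ for all $\by\in Y$. The plan is to show that under this assumption the number of distinct output words is forced to be at most a polynomial in $n$ of degree $\ell-2$, which for $n\geq n(e,\ell,b)$ is strictly smaller than $V(n,\ell-1)+1$, contradicting $N\geq V(n,\ell-1)+1$.

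To count the output words, I group them by the value of $S_\by$. If $|S_\by|=j$, then since $d(\by,\bc_i)\leq t$ and exactly $j$ of these disagreements lie outside $\overline{D}$, the restriction $\by|_{\overline{D}}$ lies in $B_{t-j}(\bc_0|_{\overline{D}})\cap B_{t-j}(\bc_1|_{\overline{D}})\cap B_{t-j}(\bc_2|_{\overline{D}})$, viewed inside $\F^{b}$. All output words sharing the same $S_\by$ agree outside $\overline{D}$ (they flip precisely the coordinates of $S_\by$ relative to the common codeword value there), so they are distinguished only by their restriction to $\overline{D}$. Crucially, the codewords $\bc_0,\bc_1,\bc_2$ restricted to $\overline{D}$ are pairwise at distance at least $2e+1$ (the code is $e$-error-correcting and the codewords agree outside $\overline{D}$) and at most $2t$ (triangle inequality through any common output word, and $Y\neq\emptyset$). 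Thus the number $M_j$ of words in the above triple intersection of balls is a constant depending only on $e,\ell$ and $b$, and not on $n$. Since there are $\binom{n-b}{j}$ admissible sets $S$ of size $j$ in $[1,n]\setminus\overline{D}$, we obtain $N\leq\sum_{j=0}^{\ell-2}\binom{n-b}{j}M_j$.

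The right-hand side is a polynomial in $n$ of degree $\ell-2$, whereas $V(n,\ell-1)=\sum_{i=0}^{\ell-1}\binom{n}{i}$ has degree $\ell-1$ in $n$. Consequently there is a threshold beyond which $\sum_{j=0}^{\ell-2}\binom{n-b}{j}M_j\leq V(n,\ell-1)$, and this yields the contradiction $N\leq V(n,\ell-1)$. I expect the main obstacle to be twofold: obtaining a sufficiently explicit bound on the ball-intersection constants $M_j$ (by counting words within radius $t-j$ of centers at controlled pairwise distance), and then verifying that the resulting degree-$(\ell-2)$ polynomial is dominated by $V(n,\ell-1)$ precisely when $n\geq n(e,\ell,b)$, so that the explicit threshold in the statement is recovered rather than merely an asymptotic one.
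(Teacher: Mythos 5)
Your first assertion is correct, and your overall counting framework is sound; in fact it is exactly the strategy this paper uses for the weaker single-codeword analogue, Lemma~\ref{l-a-1Patka}: split each output word into its part inside and outside $\overline{D}$, note that under the contradiction hypothesis the outside part contributes at most $\sum_{j=0}^{\ell-2}\binom{n-b}{j}$ choices, bound the inside part by a quantity independent of $n$, and compare with $N\geq V(n,\ell-1)+1$. The genuine gap is the threshold. The only bound on $M_j$ you actually establish is that it is ``a constant depending only on $e,\ell$ and $b$'', i.e.\ essentially $M_j\leq 2^b$; with that bound the comparison needs roughly $(\ell-1)2^b\binom{n}{\ell-2}\leq\binom{n}{\ell-1}$, i.e.\ $n\geq(\ell-1)^2 2^b+\ell-2$, which is exponential in $b$. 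That is precisely the hypothesis of Lemma~\ref{l-a-1Patka} (with $a=0$), but it is \emph{not} the hypothesis of the statement you are proving: $n(e,\ell,b)$ grows only quadratically in $b$, its leading term being $(\ell-1)^2(e+1)\binom{b-2e-1}{2}$. Concretely, for $e=0$, $\ell=2$, $b=6$ one has $n(e,\ell,b)=22$, while your count only kicks in at $n\geq 64$; so for all $n$ with $22\leq n\leq 63$ the lemma asserts something your argument does not prove, and since $b\geq 3t$ this discrepancy is generic rather than marginal.

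Closing that gap is not a routine verification left to the reader; it is the heart of the lemma. One must carry out the refined estimate you gesture at but never execute: use the facts that $\bc_0$, $\bc_1$, $\bc_2$ restricted to $\overline{D}$ are pairwise at distance at least $2e+1$ (and at most $2t$) to show that a word of $\F^b$ within distance $t-j$ of all three restricted centers has only polynomially many (in $b$) possibilities, with the degree decreasing as $j$ grows, and then check that the resulting mixed sum $\sum_{j}\binom{n-b}{j}M_j$ stays below $V(n,\ell-1)$ already for $n\geq n(e,\ell,b)$. This is exactly where the hypothesis $|T(Y)|\geq3$ earns its place; tellingly, your argument as written never uses more than one codeword, which is a symptom that the essential idea is absent. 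What you have proved is, in effect, the exponential-threshold statement that this paper records separately as Lemma~\ref{l-a-1Patka}, not Lemma~\ref{l-1 ulkona} with its polynomial threshold $n(e,\ell,b)$.
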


The following lemma shows that the distance between any codewords in $T(Y)$ is either $2e+1$ or $2e+2$.
\begin{lemma}[Lemma $19$, \cite{junnila2020levenshtein}]\label{Listasanat lahella POL}
Let $n\geq n(e,\ell,3t)$, $|Y|=N\geq V(n,\ell-1)+1$, $C$ be an $e$-error-correcting code and $|T(Y)|\geq3$. Then we have $d(\bc_1,\bc_2)\leq 2e+2$ for any two $\bc_1,\bc_2\in T(Y)$.
\end{lemma}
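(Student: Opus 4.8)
The plan is to prove this directly by exhibiting, for any fixed pair $\bc_1,\bc_2\in T(Y)$, a single output word $\y$ that sits very close to both of them \emph{inside} a small coordinate set. First I would fix distinct $\bc_1,\bc_2\in T(Y)$ (if they coincide the bound is trivial) and use the hypothesis $|T(Y)|\geq 3$ to choose a third codeword $\bc_0\in T(Y)$. The aim is to build a coordinate set $\overline{D}$ of size exactly $b=3t$ containing all three pairwise difference supports, so that Lemma~\ref{l-1 ulkona} applies: it will hand me an output word differing from each $\bc_i$ in at least $\ell-1$ positions \emph{outside} $\overline{D}$, which forces that word to be very close to every $\bc_i$ \emph{inside} $\overline{D}$. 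A triangle inequality restricted to $\overline{D}$ then bounds $d(\bc_1,\bc_2)$.

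The first key step is to check that the union $U=\supp(\bc_0+\bc_1)\cup\supp(\bc_0+\bc_2)\cup\supp(\bc_1+\bc_2)$ actually fits inside a set of size $b=3t$, which is exactly what Lemma~\ref{l-1 ulkona} demands. Writing $S_1=\supp(\bc_0+\bc_1)$ and $S_2=\supp(\bc_0+\bc_2)$, over $\F$ we have $\supp(\bc_1+\bc_2)=S_1\triangle S_2\subseteq S_1\cup S_2$, so $U=S_1\cup S_2$, and a short inclusion--exclusion computation yields the identity
$$|U|=\tfrac12\bigl(d(\bc_0,\bc_1)+d(\bc_0,\bc_2)+d(\bc_1,\bc_2)\bigr).$$
Since every codeword of $T(Y)$ lies within distance $t$ of every output word, the triangle inequality through any $\y\in Y$ gives $d(\bc_i,\bc_j)\leq 2t$ for all three pairs, whence $|U|\leq 3t=b$. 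As $n$ is large (in particular $n\geq 3t$), I can enlarge $U$ to a set $\overline{D}\supseteq U$ with $|\overline{D}|=b$, so all hypotheses of Lemma~\ref{l-1 ulkona} are satisfied.

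With $\overline{D}$ in hand, Lemma~\ref{l-1 ulkona} supplies an output word $\y\in Y$ with $|\supp(\y+\bc_0)\setminus\overline{D}|\geq\ell-1$, together with the equalities $\supp(\y+\bc_0)\setminus\overline{D}=\supp(\y+\bc_1)\setminus\overline{D}=\supp(\y+\bc_2)\setminus\overline{D}$, so $|\supp(\y+\bc_i)\setminus\overline{D}|\geq\ell-1$ for $i=1,2$ as well. Because $\bc_i\in B_t(\y)$ we have $d(\y,\bc_i)\leq e+\ell$, and hence the disagreements inside $\overline{D}$ satisfy $|\supp(\y+\bc_i)\cap\overline{D}|\leq(e+\ell)-(\ell-1)=e+1$. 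Finally, $\supp(\bc_1+\bc_2)\subseteq\overline{D}$ gives $d(\bc_1,\bc_2)=|\supp(\bc_1+\bc_2)\cap\overline{D}|$, and the triangle inequality applied only to the coordinates in $\overline{D}$ yields $d(\bc_1,\bc_2)\leq|\supp(\y+\bc_1)\cap\overline{D}|+|\supp(\y+\bc_2)\cap\overline{D}|\leq 2e+2$. The step I expect to require the most care is the size estimate $|U|\leq 3t$: it is precisely the identity above combined with the crude bound $d(\bc_i,\bc_j)\leq 2t$ that ties the choice $b=3t$ (and thus the assumption $n\geq n(e,\ell,3t)$) to the applicability of Lemma~\ref{l-1 ulkona}; everything afterward is bookkeeping with the Hamming distance restricted to $\overline{D}$.
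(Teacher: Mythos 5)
Your proof is correct, and it follows the route the paper itself sets up: the paper does not reprove this lemma (it is quoted as Lemma~19 of~\cite{junnila2020levenshtein}) but states Lemma~\ref{l-1 ulkona} immediately beforehand precisely as the tool for it, and your argument — the inclusion--exclusion identity $|U|=\tfrac12\bigl(d(\bc_0,\bc_1)+d(\bc_0,\bc_2)+d(\bc_1,\bc_2)\bigr)\leq 3t$ to fit $\overline{D}$ into $b=3t$ coordinates, then Lemma~\ref{l-1 ulkona} to force $|\supp(\y+\bc_i)\cap\overline{D}|\leq e+1$, then the triangle inequality restricted to $\overline{D}$ — is exactly the intended derivation. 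All steps check out, including the mild point $n\geq n(e,\ell,3t)\geq 3t$ needed to enlarge $U$ to size exactly $3t$.
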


We denote by $N'(n,\ell,e,h)$ 
the maximum number of $t$-error channels such that there exists a set of output words $Y\subseteq \F^n$ satisfying $|Y|=N'(n,\ell,e,h)$ and $|T(Y)|\geq h$ for some $e$-error correcting code $C$. By Theorems \ref{RemarkNonConstantList} and \ref{l+1Theorem POL}, $N'(n,\ell,e,h)=V(n,\ell-1)$ for all $\ell+2\leq h\leq \lfloor n/(e+1)\rfloor$ when $n$ is large enough. Hence, when we use notation $N(n,\ell,e,h)$, we assume that $h$ is the smallest integer for which $N'(n,\ell,e,h)=N'(n,\ell,e,h')$. When the exact formulation is not necessary for clarity, we denote $N(n,\ell,e,h)=N_h$. Observe that, if $N\geq N_h+1$, then $\LL< h$ for all $e$-error-correcting codes. In particular, the difference between $N'(n,\ell,e,h)$ and $N(n,\ell,e,h)$ is that $N'(n,\ell,e,h)$ exists for each value of $h$ but may give the same values for different choices of $h$ while $N(n,\ell,e,h)$ does not exist for every choice of $h$ but each value it attains is unique.

We need the following two technical notations in the next theorem. Let us have $N_h$ channels, then
\begin{align*}
W_w=&\left\{(i_1,\dots,i_h)\mid \text{for each } j:\right.\\ 
&\left. 
i_j\in \N, e+1\geq i_j\geq \frac{w+1-\ell}{2}\text{ and } w\geq \sum_{j=1}^h i_j\right\}
\end{align*}
and
\begin{align*}
W'_w=&\left\{(i_1,\dots,i_h)\mid \text{each }i_j\in \N, e\geq i_1\geq \frac{w-\ell}{2} \text{ and for }\right.\\  
&\left.
j\geq2: e+1\geq i_j\geq \frac{w+1-\ell}{2}\text{ and } w\geq \sum_{j=1}^h i_j\right\}.
\end{align*}

In the following theorem, we give the maximum number of channels $N_h$ which gives list size $\LL= h$ for some $e$-error-correcting code and if $N>N_h$, then $\LL<h$ for all $e$-error-correcting codes.

\begin{theorem}\label{KanavalukemaUusi}
Let $n\geq  n(e,\ell,b)$, $b\geq\max\{3t,4e+4\}$, $\ell\geq2$, $3\leq h\leq\ell+1$. Then \begin{align*}
&N(n,\ell,e,h)= N_h 
= V(n,\ell-1)+\\
&\max\left\{\sum_{w\geq\ell}\sum_{(i_1,\dots,i_h)\in W_w}\binom{n-h(e+1)}{w-\sum_{j=1}^h i_j}\prod_{j=1}^h\binom{e+1}{i_j},\right.\\
%
&\left.
\sum_{w\geq\ell}\sum_{(i_1,\dots,i_h)\in W'_w}\binom{n+1-h(e+1)}{w-\sum_{j=1}^h i_j}\binom{e}{i_1}\prod_{j=2}^h\binom{e+1}{i_j}\right\}.
\end{align*}\end{theorem}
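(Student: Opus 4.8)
The plan is to establish the two inequalities $N'(n,\ell,e,h)\ge V(n,\ell-1)+\max\{A,B\}$ (a construction) and $N'(n,\ell,e,h)\le V(n,\ell-1)+\max\{A,B\}$ (an upper bound valid for every $e$-error-correcting code), where I write $A$ and $B$ for the first and second sums inside the maximum; the stated value of $N(n,\ell,e,h)=N_h$ and the claim ``$N>N_h\Rightarrow\LL<h$'' then follow from the bookkeeping between $N'$ and $N$. The starting point for both directions is a reduction: fix $h$ codewords $\bc_1,\dots,\bc_h$ of a putative list. They agree outside $\overline D:=\bigcup_{i<j}\supp(\bc_i+\bc_j)$, so each output word $\by$ splits into a restriction $\bz=\by|_{\overline D}$ and a free part on the complement, and the number of legal completions of $\bz$ is exactly $V(n-|\overline D|,\,t-r(\bz))$ with $r(\bz)=\max_j d(\bz,\bc_j|_{\overline D})$. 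Hence, for a fixed cluster, the largest admissible output set is $Y=\bigcap_j B_t(\bc_j)$ with $|Y|=\sum_{\bz}V(n-|\overline D|,\,t-r(\bz))$, and both directions reduce to understanding how this quantity behaves as the cluster varies over $e$-error-correcting codes (with $|T(Y)|\ge h$, i.e.\ minimum distance at least $2e+1$ among the chosen codewords).

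The key structural step I would isolate first is a bound on the Chebyshev radius $\rho=\min_{\bc^\ast}\max_j d(\bc^\ast,\bc_j)$ of such a cluster: since the minimum distance is at least $2e+1$, no point can lie within distance $e$ of two codewords at once, so $\rho\ge e+1$. I would then prove that the case $\rho=e+1$ forces exactly the two extremal configurations. Translating an optimal center to $\nolla$, every $\bc_j$ has weight at most $e+1$; two weight-$(e+1)$ codewords are at distance $2e+2-2s$, where $s$ is the size of the overlap of their supports, so minimum distance $\ge2e+1$ forces $s=0$, i.e.\ disjoint supports and pairwise distance $2e+2$. A codeword of weight $e$ must likewise be disjoint from the others (pairwise distance $2e+1$), weight below $e$ is impossible against a weight-$(e+1)$ block, and two weight-$\le e$ codewords would be at distance at most $2e$; hence at most one ``short'' block occurs. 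Thus a radius-$(e+1)$ cluster is, up to isometry, either $h$ disjoint $(e+1)$-blocks (configuration $A$) or one $e$-block together with $h-1$ disjoint $(e+1)$-blocks (configuration $B$).

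For these two clusters I would compute $|\bigcap_j B_t(\bc_j)|$ directly, parametrizing $\by$ by the number $i_j$ of its ones inside block $j$ and by its weight $w$ relative to the center $\nolla$. The constraint $d(\by,\bc_j)\le t=e+\ell$ reduces to $i_j\ge\frac{w+1-\ell}{2}$ for an $(e+1)$-block and to $i_1\ge\frac{w-\ell}{2}$ for the $e$-block; every word of weight $w\le\ell-1$ is then automatically admissible and contributes $V(n,\ell-1)$, while the words with $w\ge\ell$ are counted exactly by $\sum_{(i_1,\dots,i_h)\in W_w}\binom{n-h(e+1)}{w-\sum_j i_j}\prod_j\binom{e+1}{i_j}$ in case $A$ and by the matching $W'_w$ sum in case $B$. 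This yields $V(n,\ell-1)+A$ and $V(n,\ell-1)+B$, simultaneously proving the construction and showing that every radius-$(e+1)$ cluster attains exactly one of these two values (consistent with the range $3\le h\le\ell+1$, outside of which the sums become empty and $N'$ collapses to $V(n,\ell-1)$ as already noted).

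It remains to dominate every cluster with $\rho\ge e+2$. Here I would invoke Lemma~\ref{Listasanat lahella POL} to bound all pairwise distances by $2e+2$, so that $|\overline D|\le(h-1)(2e+2)$ is a constant and the number of restrictions $\bz$ is $O(1)$; since $r(\bz)\ge\rho\ge e+2$ for every $\bz$, each completion count is at most $V(n-|\overline D|,\ell-2)=O(n^{\ell-2})$, whence $|Y|=O(n^{\ell-2})$, strictly below $V(n,\ell-1)=\Theta(n^{\ell-1})$ once $n\ge n(e,\ell,b)$. The main obstacle, and the part I expect to require the most care, is precisely this final quantitative comparison: one must check that the threshold $n(e,\ell,b)$ is large enough for the $\Theta(n^{\ell-1})$ term to dominate the finitely many $O(n^{\ell-2})$ contributions uniformly over all radius-$\ge e+2$ clusters, and that the bijective counts for configurations $A$ and $B$ reproduce the index sets $W_w$ and $W'_w$ exactly (including the range $w\ge\ell$ and the ceilings implicit in the inequalities $i_j\ge\frac{w+1-\ell}{2}$), since the theorem asserts an exact value of $N_h$ rather than an asymptotic one.
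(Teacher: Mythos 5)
Your reduction of $N'(n,\ell,e,h)$ to maximizing $|\bigcap_{j=1}^h B_t(\bc_j)|$ over clusters with pairwise distances at least $2e+1$, your elementary characterization of the radius-$(e+1)$ clusters (disjoint blocks of weight $e+1$, at most one block of weight $e$), and your counting of the two extremal configurations are all correct and reproduce exactly the sums $W_w$ and $W'_w$ of the statement. The genuine gap is your treatment of clusters with Chebyshev radius $\rho\geq e+2$, and it is not a verification detail but the crux. Your bound there is $|Y|\leq 2^{|\overline{D}|}\,V(n-|\overline{D}|,\ell-2)$ with $|\overline{D}|\leq (h-1)(2e+2)$, and for this to fall below $V(n,\ell-1)$ one needs roughly $n\geq (\ell-1)^2\,2^{(h-1)(2e+2)}$, i.e.\ $n$ exponentially large in $(h-1)(e+1)$. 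The theorem, however, asserts the conclusion already for $n\geq n(e,\ell,b)$ with $b=\max\{3t,4e+4\}$, which is polynomial in $e$ and $\ell$ (its dominant term is of order $(\ell-1)^2(e+1)b^2$). So for large $e$ there is a wide range of $n$ satisfying the stated hypothesis in which your final comparison is simply unavailable; no refinement of the counting (for instance replacing $2^{|\overline{D}|}$ by the number of restrictions within distance $e+2$ of some $\bc_j|_{\overline{D}}$, which is still of order $\binom{(h-1)(2e+2)}{e+2}$, exponential in $e$) brings the requirement down to a polynomial threshold. Thus your proof, as written, establishes the theorem only with an exponentially worse hypothesis on $n$, not the theorem as stated.

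The paper closes this case structurally rather than quantitatively, and this is the idea your proposal is missing. If a cluster can be realized with $|Y|\geq V(n,\ell-1)+1$ output words (which is how your argument must run, by contradiction, since Lemma~\ref{Listasanat lahella POL} itself requires this hypothesis), then Lemma~\ref{Listasanat lahella POL} gives pairwise distances in $\{2e+1,2e+2\}$, and then Lemma~\ref{l-1 ulkona} --- the existence of an output word with at least $\ell-1$ ones outside $\overline{D}$ --- is exploited to show that $\supp(\by)\cap\overline{D}=\supp(\bc_2)\cap\supp(\bc_3)$ and, by varying $\overline{D}$ to include each $\bc_i$ with $i\geq 4$, that all the intersections $\supp(\bc_2)\cap\supp(\bc_i)$ coincide; translating by this common intersection forces the disjoint-block structure, i.e.\ forces $\rho=e+1$. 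In this way clusters with $\rho\geq e+2$ are ruled out at the threshold $n(e,\ell,b)$ without ever estimating their intersection sizes. To complete your proof you would have to replace your counting step for $\rho\geq e+2$ by this lemma-based structural argument, at which point your route coincides with the paper's.
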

\begin{proof}
Let us have $N=N(n,\ell,e,h)$, $n\geq n(e,\ell,b)$, $b\geq\max\{3t,4e+4\}$, $\ell\geq2$ and $3\leq h\leq\ell+1$. Moreover, let $C$ be such an $e$-error-correcting code, that it maximizes $\LL$ (we have $\LL= h$) when $N=N(n,\ell,e,h)$. Furthermore, let $Y$ be a set of outputs such that $|T(Y)|=\LL=h$ and let us denote $T(Y)=\{\bc_1,\dots,\bc_h\}$.

By Theorem \ref{RemarkNonConstantList}, we have $N\geq V(n,\ell-1)+1$.  Since $C$ is an $e$-error-correcting code and by Lemma \ref{Listasanat lahella POL}, we have $d(\bc_i,\bc_j)\in\{2e+1,2e+2\}$ for each $i\neq j$. Since we are considering binary Hamming space and $h\geq 3$, all pairwise distances cannot be $2e+1$. Let us assume without loss of generality that $d(\bc_1,\bc_2)=2e+2$ and let us then translate the Hamming space so that $\bc_1=\nolla$. Now $w(\bc_2)=2e+2$ and $w(\bc_3)\in\{2e+1,2e+2\}$. Moreover, $|\supp(\bc_2)\cap\supp(\bc_3)|=e+1$. Let $\overline{D}$ be any subset of $[1,n]$ satisfying $\supp(\bc_1+\bc_2) \cup \supp(\bc_1+\bc_3) \cup \supp(\bc_2+\bc_3) \subseteq \overline{D}$ and $|\overline{D}|=b$. Observe that $\supp(\bc_1)\setminus \overline{D}=\supp(\bc_2)\setminus \overline{D}=\supp(\bc_3)\setminus \overline{D}=\emptyset$.

By Lemma \ref{l-1 ulkona}, there exists an output word $\by\in Y$ such that $|\supp(\bc_1+\by)\setminus \overline{D}|\geq\ell-1$. Since $d(\by,\bc_2)\leq t$, we have $|\supp(\by)\cap \supp(\bc_2)|\geq e+1$. Moreover, since $d(\by,\bc_1)\leq t$, we have $w(\by)\leq t$ and hence, $|\supp(\by)\cap \supp(\bc_2)|= e+1$.  Thus, $\supp(\by)=(\supp(\by)\cap \supp(\bc_2))\cup(\supp(\by)\setminus \overline{D})$ and $\supp(\by)\cap(\supp(\bc_3)\setminus \supp(\bc_2))=\emptyset$. Hence, $\supp(\by)\cap \supp(\bc_3)\subseteq \supp(\bc_2)$ and moreover, $\supp(\by)\cap(\supp(\bc_2)\setminus \supp(\bc_3)) = \emptyset$ as otherwise $d(\by,\bc_3) \geq (\ell-1+1)+(2e+1-e) = t+1 > t$ (a contradiction). Together these give that $\supp(\by)\cap \overline{D}=\supp(\bc_2)\cap\supp(\bc_3)$. Notice that for each $i\in[4,h]$ we may choose $\overline{D}$ in such a way that also $\supp(\bc_i)\subseteq \overline{D}$ since $|\overline{D}| = b \geq 4e+4$. Thus, there exists an output word $\by'\in Y$ such that $|\supp(\bc_1+\by')\setminus \overline{D}|\geq\ell-1$. Therefore, as above, $\supp(\by')\cap\overline{D}=\supp(\bc_2)\cap\supp(\bc_i)$ and $\supp(\by')\cap \overline{D}=\supp(\bc_2)\cap\supp(\bc_3)$ implying $\supp(\bc_2)\cap \supp(\bc_i)=\supp(\bc_2)\cap\supp(\bc_3)$. Finally, translate  the Hamming space so that the word $\bz$ with $\supp(\bz)=\supp(\bc_2)\cap \supp(\bc_3)$ becomes $\bz=\nolla$. Then we have $w(\bc_i)\in\{e,e+1\}$ (for $i\in[1,h]$) and $\supp(\bc_i)\cap\supp(\bc_j)=\emptyset$ for each $i \neq j$ since $d(\bc_i,\bc_j)\in\{2e+1,2e+2\}$. Moreover, at most one of $\bc_i$ can have weight $e$ by the minimum distance of $C$.

Let us then count the number of words in $\bigcap_{i=1}^{h}B_t(\bc_i)$. Observe that if $|Y|>|\bigcap_{i=1}^{h}B_t(\bc_i)|$, then we cannot have $h$ codewords in $T(Y)$ but if $|Y|=|\bigcap_{i=1}^{h}B_t(\bc_i)|$, then we can have $h$ codewords in $T(Y)$. Clearly, each word $\by$ with $w(\by)\leq \ell-1$ belongs to the intersection contributing $V(n,\ell-1)$ words to it. Assume then that $w(\by)=w\geq \ell$. As $d(\by,\bc_j)\leq t$ for all $j \in [1, h]$, we have $w(\by)+w(\bc_j) - 2|\supp(\by) \cap \supp(\bc_j)| \leq t$. Denote $i_j=|\supp(\by)\cap\supp(\bc_j)|$. Assume first that $w(\bc_j)=e+1$ for all $j$. Then $\by\in B_t(\bc_j)$ if and only if we have $w+e+1-2i_j\leq t$. Hence, $e+1\geq i_j\geq (w+1-\ell)/2$. Moreover, $\sum_{j=1}^h i_j\leq w$ since $w(\by)=w$ and $\supp(\bc_{j_1})\cap\supp(\bc_{j_2})=\emptyset$ for each $j_1\neq j_2$. In other words, $\by\in \bigcap_{i=1}^{h}B_t(\bc_i)$ if and only if $(i_1,\dots, i_h)\in W_w$. Hence, there exist $\sum_{(i_1,\dots,i_h)\in W_w}\binom{n-h(e+1)}{w-\sum_{j=1}i_j}\prod_{i_j=1}^{h}\binom{e+1}{i_j}$ words of weight $w\geq\ell$ in $\bigcap_{i=1}^h B_t(\bc_i)$.

In the case where $w(\bc_k)=e$ for some $k$, say $k=1$, we have $e\geq i_1\geq (w-\ell)/2$. Thus, $\by\in \bigcap_{i=1}^{h}B_t(\bc_i)$ if and only if $(i_1,\dots, i_h)\in W_w'$. Hence, there exist $\sum_{(i_1,\dots,i_h)\in W_w'}\binom{n+1-h(e+1)}{w-\sum_{j=1}i_j}\binom{e}{i_1}\cdot$ $\prod_{i_j=2}^{h}\binom{e+1}{i_j}$ words of weight $w\geq\ell$ in $\bigcap_{i=1}^h B_t(\bc_i)$. Together these give the claim.\end{proof}

Observe by the proof that the bounds given in Theorem \ref{KanavalukemaUusi} are tight. Notice that geometrically the output sets giving maximal list size are more complicated than, for example, in Theorem \ref{RemarkNonConstantList} (where a ball of volume $V(n,\ell-1)$ is essential). If we increase $N$ by one, then $\LL$ decreases by one since we cannot place the output word within the intersection of $t$-balls centered at codewords in $T(Y)$. Another  observation is that although the sums do not include an upper bound for $w$, there is one. Namely the definition, for $W_w$, gives that $w \leq 2e+\ell+1$ and for $W_w'$ that $w \leq 2e+\ell$.

In the following theorem, we improve the previous result by showing that the two binomial sums within the $\max$ are actually equal.

\begin{theorem}\label{KanavalukemaEkviv}
Let $n\geq  n(e,\ell,b)$, $b\geq\max\{3t,4e+4\}$, $\ell\geq2$, $3\leq h\leq\ell+1$. Then \begin{align*}
&N(n,\ell,e,h) - V(n,\ell-1)\\
&=\sum_{w\geq\ell}\sum_{(i_1,\dots,i_h)\in W_w}\binom{n-h(e+1)}{w-\sum_{j=1}^h i_j}\prod_{j=1}^h\binom{e+1}{i_j}\\
&=\sum_{w\geq\ell}\sum_{(i_1,\dots,i_h)\in W'_w}\binom{n+1-h(e+1)}{w-\sum_{j=1}^h i_j}\binom{e}{i_1}\prod_{j=2}^h\binom{e+1}{i_j}.
\end{align*}\end{theorem}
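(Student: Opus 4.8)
The plan is to interpret the two binomial expressions geometrically rather than to manipulate them formally. By the proof of Theorem~\ref{KanavalukemaUusi}, the first sum is exactly the number of words of weight at least $\ell$ in $\bigcap_{j=1}^{h}B_t(\bc_j)$ when $\bc_1,\dots,\bc_h$ are pairwise disjointly supported codewords each of weight $e+1$ (call this intersection $I_A$), while the second sum is the corresponding count when $\bc_1$ has weight $e$ and $\bc_2,\dots,\bc_h$ have weight $e+1$ (intersection $I_B$). In either configuration every word of weight at most $\ell-1$ lies in all the balls, contributing the same $V(n,\ell-1)$; hence $|I_A|=V(n,\ell-1)+S_1$ and $|I_B|=V(n,\ell-1)+S_2$, where $S_1,S_2$ denote the two sums, and it suffices to prove $|I_A|=|I_B|$. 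I would set up the two configurations to share the balls $B_t(\bc_2),\dots,B_t(\bc_h)$ and to differ only in the first center: start from the weight-$(e+1)$ center $\bc_1^{(A)}$ of $I_A$ and delete one coordinate $p\in\supp(\bc_1^{(A)})$ to obtain the weight-$e$ center $\bc_1^{(B)}$ of $I_B$, noting $p\notin\supp(\bc_j)$ for all $j\ge2$.

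First I would split each word $\by$ by its value $y_p\in\{0,1\}$ at $p$ and restrict to $\by'=\by|_{[1,n]\setminus\{p\}}$. If $\gamma$ denotes the common restriction of $\bc_1^{(A)}$ and $\bc_1^{(B)}$, and $\bc_j'$ the restriction of $\bc_j$ for $j\ge2$, then $d(\by,\bc_1^{(A)})=d(\by',\gamma)+(1-y_p)$, $d(\by,\bc_1^{(B)})=d(\by',\gamma)+y_p$ and $d(\by,\bc_j)=d(\by',\bc_j')+y_p$. Putting
\[
P_r=\{\by'\mid d(\by',\gamma)\le r\},\qquad Q_r=\{\by'\mid d(\by',\bc_j')\le r\ \text{for all}\ j\ge2\},
\]
a case split on $y_p$ yields $|I_A|=|P_{t-1}\cap Q_t|+|P_t\cap Q_{t-1}|$ and $|I_B|=|P_t\cap Q_t|+|P_{t-1}\cap Q_{t-1}|$.

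Since $P_{t-1}\subseteq P_t$ and $Q_{t-1}\subseteq Q_t$, subtracting collapses everything to $|I_B|-|I_A|=|(P_t\setminus P_{t-1})\cap(Q_t\setminus Q_{t-1})|$, i.e.\ the number of words $\by'$ lying at distance exactly $t$ from $\gamma$ and at distance exactly $t$ from at least one $\bc_j'$. This is the crux, and it is settled by a parity observation: because $w(\gamma)=e$ and $w(\bc_j')=e+1$, the two distances $d(\by',\gamma)$ and $d(\by',\bc_j')$ have opposite parities for every $\by'$, so they cannot both equal $t$. Hence this set is empty, $|I_A|=|I_B|$, and the two sums agree. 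I expect the main work to be purely organisational---verifying that the weight-$e$ family $\bc_1^{(B)},\bc_2,\dots,\bc_h$ is a genuine disjoint-support codeword family of minimum distance $2e+1$, and that deleting $p$ changes the number of free coordinates exactly from $n-h(e+1)$ to $n+1-h(e+1)$ as the two binomial coefficients require---while the decisive parity step is immediate once the reduction above is in place.
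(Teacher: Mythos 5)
Your proposal is correct, and it reaches the conclusion by a genuinely different mechanism than the paper, even though the starting point is the same. Like the paper, you interpret the two sums as $|I_A|-V(n,\ell-1)$ and $|I_B|-V(n,\ell-1)$ for two ball intersections whose centers share $\bc_2,\dots,\bc_h$ and differ only in one coordinate $p$ of the first center (the paper uses coordinate $1$ and calls the weight-$e$ center $\bc'_1$). From there the paper proceeds by weight-class bookkeeping: it pairs the weights $\ell+2a$ and $\ell+2a+1$ into subsums $S_a$, $S'_a$, proves the containments $Y_{2a}\subseteq Y'_{2a}$ and $Y'_{2a+1}\subseteq Y_{2a+1}$, and then exhibits an explicit bijection $\y\mapsto\y+\e_1$ between the leftover sets $Y'_{2a}\setminus Y_{2a}$ and $Y_{2a+1}\setminus Y'_{2a+1}$, which are in general nonempty. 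You instead split every word by its value at $p$, which collapses $|I_B|-|I_A|$ to the single quantity $|(P_t\setminus P_{t-1})\cap(Q_t\setminus Q_{t-1})|$, and then kill it with the parity observation $d(\by',\gamma)\equiv w(\by')+e$ and $d(\by',\bc_j')\equiv w(\by')+e+1 \pmod 2$, so the two exact-distance-$t$ conditions are incompatible; I verified your four identities $|I_A|=|P_{t-1}\cap Q_t|+|P_t\cap Q_{t-1}|$, $|I_B|=|P_t\cap Q_t|+|P_{t-1}\cap Q_{t-1}|$, the inclusion-exclusion step, and the parity claim, and all are sound. What each approach buys: yours is shorter and avoids all manipulation of the sets $W_w$, $W'_w$ at specific weights, since the difference set is simply empty rather than matched by a bijection; the paper's argument is heavier but yields the finer statement $S_a=S'_a$ for every $a$, i.e., the two counts agree weight-pair by weight-pair, which your global decomposition (which mixes weights $w$ and $w-1$ when restricting away the coordinate $p$) does not directly give. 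In both cases the final equality $N(n,\ell,e,h)-V(n,\ell-1)=S_1=S_2$ rests on Theorem~\ref{KanavalukemaUusi}, which you should cite explicitly for that last step, as the paper does in its opening line.
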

\begin{proof}
Observe that the claim follows from Theorem \ref{KanavalukemaUusi} if we can prove that the two binomial sums in the claim are equal. For that, we will be considering subsums \begin{align}\label{eqSa}
S_a=&\sum_{w=\ell+2a}^{\ell+2a+1}\sum_{(i_1,\dots,i_h)\in W_w}\left(\binom{n-h(e+1)}{w-\sum_{j=1}^h i_j}
%
\prod_{j=1}^h\binom{e+1}{i_j}\right)
\end{align}
and 
\begin{align}\label{eqSa'}S'_a=&\sum_{w=\ell+2a}^{\ell+2a+1}\sum_{(i_1,\dots,i_h)\in W'_w}\left(\binom{n+1-h(e+1)}{w-\sum_{j=1}^h i_j}
%
\binom{e}{i_1}\prod_{j=2}^h\binom{e+1}{i_j}\right).
\end{align}

We claim that $S_a=S'_a$ for each non-negative integer $a$. When $w=\ell+2a$ we have \begin{align*}
&W_{\ell+2a}=\left\{(i_1,\dots,i_h)\mid \text{for each } j:
%
i_j\in \N, e+1\geq i_j\geq a+1\text{ and } \ell+2a\geq \sum_{j=1}^h i_j\right\}
\end{align*}
and
\begin{align*}
&W'_{\ell+2a}=\left\{(i_1,\dots,i_h)\mid \text{each }i_j\in \N, e\geq i_1\geq a \text{ and for }\right.\\  
&\left.
j\geq2: e+1\geq i_j\geq a+1\text{ and } \ell+2a\geq \sum_{j=1}^h i_j\right\}.
\end{align*}
Moreover, for $w=\ell+2a+1$ we have \begin{align*}
&W_{\ell+2a+1}=\left\{(i_1,\dots,i_h)\mid \text{for each } j:
%
i_j\in \N, e+1\geq i_j\geq a+1\text{ and } \ell+2a+1\geq \sum_{j=1}^h i_j\right\}
\end{align*}
and
\begin{align*}
&W'_{\ell+2a+1}=\left\{(i_1,\dots,i_h)\mid \text{each }i_j\in \N, e\geq i_1\geq a+1 \text{ and for }\right.\\  
&\left.
j\geq2: e+1\geq i_j\geq a+1\text{ and } \ell+2a+1\geq \sum_{j=1}^h i_j\right\}.
\end{align*}

Assume now that the codewords $\bc_i\in T(Y)$ are arranged as in the proof of Theorem \ref{KanavalukemaUusi}, that is, each codeword has weight of $e$ or $e+1$. Hence, their supports do not intersect. Recall that we have at most one codeword with weight $e$ and if that word exists, then we are using set $W'_w$ in our binomial sum. For further rearranging of the Hamming space, we assume that if each word has weight $e+1$, then $\supp(\bc_i)=[(e+1)(i-1)+1,(e+1)i]$ and if there exists a word with weight $e$, it is denoted by $\bc'_1$ (replacing $\bc_1$)  and we have $\supp(\bc'_1)=[2,e+1]$. 

Recall that $w$ describes the weight of a word $\y_i\in \bigcap_{\bc\in T(Y)} B_t(\bc)$ in the proof of Theorem \ref{KanavalukemaUusi} within the notations $W_w$ and $W'_w$. When $w(\y_i)=w=2a+\ell$ and we are considering case $S_a$, then $|\supp(\y_i)\cap \supp(\bc_j)|=i_j\geq a+1$. Let us denote by $Y_{2a}$, $Y_{2a+1}$, $Y'_{2a}$ and $Y'_{2a+1}$ the sets of output words contributing to the sums $S_a$ and $S'_a$, respectively, where $Y_{w}$ (resp. $Y'_{w}$) contains output words of weight $w+\ell$. Let us denote by $D=[h(e+1)+1,n]$.

We will construct the proof by first showing that $Y_{2a}\subseteq Y'_{2a}$, then that $Y'_{2a+1}\subseteq Y_{2a+1}$ and finally that $|Y'_{2a}\setminus Y_{2a}|=|Y_{2a+1}\setminus Y'_{2a+1}|$. Together, these imply that $S_a=S_a'$.

Assume that $\y\in Y_{2a}$. Thus, $w(\y)=\ell+2a$ and $|\supp(\y)\cap \supp(\bc_i)|\geq a+1$ for each $i\in[1,h]$. Hence, $e\geq|\supp(\y)\cap \supp(\bc'_1)|\geq a$ since $\supp(\bc_1)=\supp(\bc'_1)\cup\{1\}$. Hence, $\y\in Y'_{2a}$.

Assume then that $\y\in Y'_{2a+1}$. Thus, $w(\y)=\ell+2a+1$ and $|\supp(\y)\cap \supp(\bc_i)|\geq a+1$ for each $i\in[2,h]$ and $|\supp(\y)\cap \supp(\bc'_1)|\geq a+1$. Hence, $|\supp(\y)\cap \supp(\bc_1)|\geq a+1$ and $\y\in Y_{2a+1}$. Therefore, $Y_{2a}\subseteq Y'_{2a}$ and $Y'_{2a+1}\subseteq Y_{2a+1}$.

Let us now consider output word $\y'\in Y'_{2a}\setminus Y_{2a}$. We again have $w(\y')=\ell+2a$, $|\supp(\y')\cap \supp(\bc_j)|=i_j\geq a+1$ for each $j\in[2,h]$. However, $|\supp(\y')\cap \supp(\bc_1)|=a=|\supp(\y')\cap \supp(\bc'_1)|$.  Observe that now especially $\supp(\y')\cap \{1\}=\emptyset$. Thus, $\y'+\e_1\in Y_{2a+1}$. Let us denote $i_D=|\supp(\y')\cap D|$. We have $i_D=w(\y')-a-\sum_{j=2}^h i_j=\ell+a-\sum_{j=2}^h i_j$. Consider then output word $\y\in Y_{2a+1}\setminus Y'_{2a+1}$. We again have $w(\y)=\ell+2a+1$, $|\supp(\y)\cap \supp(\bc_j)|=i_j\geq a+1$ for each $j\in[1,h]$. However, $|\supp(\y)\cap \supp(\bc'_1)|=a$; in particular, we have $1\in \supp(\y)$. Indeed, if $1\not\in \supp(\y)$, then $\y\not\in Y_{2a+1}$ and if $|\supp(\y)\cap \supp(\bc'_1)|\geq a+1$, then $\y\in Y'_{2a+1}$. Thus, $\y+\e_1\in Y'_{2a}$. Moreover, we have $i_D=|\supp(\y')\cap D|=w(\y)-a-1-\sum_{j=2}^h i_j=\ell+a-\sum_{j=2}^h i_j$. Therefore, for each output word $\y'\in Y'_{2a}\setminus Y_{2a}$, we have $\y'+\e_1\in Y_{2a+1}\setminus Y'_{2a+1}$ and for each output word $\y\in Y_{2a+1}\setminus Y'_{2a+1}$, we have $\y+\e_1\in Y'_{2a}\setminus Y_{2a}$. Thus, $|Y'_{2a}\setminus Y_{2a}|=|Y_{2a+1}\setminus Y'_{2a+1}|$.

Now, we have $S_a=S'_a$ and the claim follows.
\end{proof}

In the following theorem, we show that  $N_h$ exists and is unique for each value $h\in[3,\ell+1]$ when $n$ is large. As we have seen previously, this does not hold when $h\geq\ell+2$.

\begin{theorem}\label{UniqueNh}
Let $n\geq  n(e,\ell,b)$, $b\geq\max\{3t,4e+4\}$, $\ell\geq2$, $3\leq h\leq\ell$, then $N'(n,\ell,e,h)>N'(n,\ell,e,h+1)$ for each $h$ and thus, each $N_h$ exists and attains unique value.
\end{theorem}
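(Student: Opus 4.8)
The plan is to reduce the claim to a strict monotonicity of ball–intersection sizes and then to isolate a single explicit witness word. Recall from the proof of Theorem~\ref{KanavalukemaUusi} that $N'(n,\ell,e,h)$ equals the maximum of $\left|\bigcap_{i=1}^{h}B_t(\bc_i)\right|$ over all admissible configurations, and that this maximum is attained by a configuration in which $\bc_1,\dots,\bc_h$ have pairwise disjoint supports, each of weight $e$ or $e+1$, at most one of weight $e$. So first I would fix an optimal configuration $\bc_1,\dots,\bc_{h+1}$ realising $N'(n,\ell,e,h+1)$ (legitimate since $h+1\le\ell+1$, so Theorem~\ref{KanavalukemaUusi} applies with $h+1$ in place of $h$), reindexing if necessary so that $\bc_{h+1}$ has weight $e+1$; this is possible because at most one codeword has weight $e$ while $h+1\ge4$.

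The easy inclusion $\bigcap_{i=1}^{h+1}B_t(\bc_i)\subseteq\bigcap_{i=1}^{h}B_t(\bc_i)$ already yields $N'(n,\ell,e,h+1)\le\left|\bigcap_{i=1}^{h}B_t(\bc_i)\right|\le N'(n,\ell,e,h)$, so the real task is strictness. Here I would exhibit one word lying in the first $h$ balls but not in the $(h+1)$st. Assuming the supports disjoint, let $\by$ be the word of weight $\ell$ having exactly one coordinate in each of $\supp(\bc_1),\dots,\supp(\bc_h)$, none in $\supp(\bc_{h+1})$, and the remaining $\ell-h$ coordinates outside all supports (this uses $h\le\ell$ and that $n$ is large enough to leave room for $\ell-h$ free coordinates). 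Then $d(\by,\bc_j)=w(\by)+w(\bc_j)-2\le e+\ell-1=t-1$ for $j\le h$, so $\by\in\bigcap_{i=1}^{h}B_t(\bc_i)$, whereas $d(\by,\bc_{h+1})=\ell+(e+1)=t+1>t$ since $\by$ and $\bc_{h+1}$ have disjoint supports, so $\by\notin B_t(\bc_{h+1})$. Hence $\bigcap_{i=1}^{h+1}B_t(\bc_i)\subsetneq\bigcap_{i=1}^{h}B_t(\bc_i)$ and $N'(n,\ell,e,h+1)<N'(n,\ell,e,h)$.

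An equivalent, purely formulaic route is available through Theorem~\ref{KanavalukemaEkviv}: writing $f(h)=N'(n,\ell,e,h)-V(n,\ell-1)$ for the first binomial sum, I would sum the $h+1$ case over $i_{h+1}$ and apply Vandermonde's identity, which collapses the $e+1$ coordinates of the last support into the free block and gives $f(h+1)\le f(h)$ term by term. The inequality is strict because every contributing word has weight $w\ge\ell$, forcing $i_{h+1}\ge\lceil(w+1-\ell)/2\rceil\ge1$ and thereby excluding the $i_{h+1}=0$ term; taking $w=\ell$ and $(i_1,\dots,i_h)=(1,\dots,1)$ (legitimate since $h\le\ell$) shows this excluded term equals $(e+1)^h\binom{n-(h+1)(e+1)}{\ell-h}>0$, which is precisely the witness $\by$ above seen at the level of the formula.

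Finally, applying the strict inequality for every $h\in[3,\ell]$ produces the chain $N'(n,\ell,e,3)>N'(n,\ell,e,4)>\cdots>N'(n,\ell,e,\ell+1)$, so these values are pairwise distinct; consequently each index in $[3,\ell+1]$ is the smallest (indeed the only) one realising its $N'$-value, so $N_h$ is well defined and attains a value not shared by any other $N_{h'}$. I expect the strictness step to be the main obstacle: one must check that the witness word has weight at least $\ell$, so that it genuinely lies in the portion of the intersection counted beyond $V(n,\ell-1)$, and that the unique weight-$e$ codeword is not the ball singled out, which is exactly why the reindexing of $\bc_{h+1}$ is required; the inclusion and the concluding chain are then routine.
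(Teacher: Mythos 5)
Your proposal is correct and follows essentially the same route as the paper: both arguments reduce $N'$ to the cardinality of $\bigcap_{i=1}^{h}B_t(\bc_i)$ for the canonical disjoint-support configuration, use the trivial inclusion $\bigcap_{i=1}^{h+1}B_t(\bc_i)\subseteq\bigcap_{i=1}^{h}B_t(\bc_i)$, and then exhibit a single witness word meeting each of $\supp(\bc_1),\dots,\supp(\bc_h)$ in exactly one coordinate and avoiding $\supp(\bc_{h+1})$. The only differences are cosmetic: the paper's witness has weight $\ell+1$ (so $d(\by,\bc_i)=t$ for $i\le h$ and $d(\by,\bc_{h+1})=t+2$) whereas yours has weight $\ell$, and your Vandermonde-identity reformulation is an extra remark not needed for (or present in) the paper's argument.
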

\begin{proof}
Since $N'(n,\ell,e,h+1)$ denotes the maximum value for $|\bigcap_{i=1}^{h+1}B_t(\bc_i)|$ over all sets  $T(Y)=\{\bc_1,\dots,\bc_{h+1}\}$, we clearly have $N'(n,\ell,e,h)\geq N'(n,\ell,e,h+1)$. As we have seen in the previous theorems, the maximum value for $N'(n,\ell,e,h+1)$ is attained when the codewords in $T(Y)=\{\bc_1,\dots,\bc_{h+1}\}$ have supports $\supp(\bc_i)=[(i-1)(e+1)+1,i(e+1)]$. We show that when we choose $h$ of these codewords, then we can fit more words in the intersection of their $t$-balls. We clearly have $ \bigcap_{i=1}^{h+1}B_t(\bc_i)\subseteq \bigcap_{i=1}^{h}B_t(\bc_i)$. In the following, we show that there exists a word $\by\in \bigcap_{i=1}^{h}B_t(\bc_i)\setminus \bigcap_{i=1}^{h+1}B_t(\bc_i)$ and hence, $|\bigcap_{i=1}^{h}B_t(\bc_i)|>|\bigcap_{i=1}^{h+1}B_t(\bc_i)|$.

Let $w(\by)=\ell+1$, $|\supp(\by)\cap \supp(\bc_i)|=1$ for $i\in[1,h]$, $\supp(\by)\cap \supp(\bc_{h+1})=\emptyset$ and $|\supp(\by)\cap [(h+1)(e+1)+1,n]|=\ell+1-h$. We have $d(\by,\bc_i)=\ell+1+e+1-2=t$ for $i\in[1,h]$ but $d(\by,\bc_{h+1})=\ell+1+e+1=t+2$. Thus, the claim follows.
\end{proof}



Using Theorem \ref{KanavalukemaEkviv}, we can improve the bound $\LL\le \ell+1$ of Theorem \ref{l+1Theorem POL} just by adding a constant number $(e+1)^{\ell+1}$ of channels. 
\begin{corollary}\label{Cor:lLista}
Let $n\geq  n(e,\ell,b)$, $b\geq\max\{3t,4e+4\}$ and $\ell\geq3$. If $N\geq V(n,\ell-1)+(e+1)^{\ell+1}+1$, then $\LL\leq \ell$.\end{corollary}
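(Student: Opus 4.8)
The plan is to derive the corollary directly from the exact count in Theorem~\ref{KanavalukemaEkviv}. By the observation recorded just before Theorem~\ref{KanavalukemaUusi}, having $N \geq N_{\ell+1}+1$ already guarantees $\LL < \ell+1$, that is, $\LL \leq \ell$; so it suffices to show that the hypothesis $N \geq V(n,\ell-1)+(e+1)^{\ell+1}+1$ is nothing but the statement $N \geq N_{\ell+1}+1$. In other words, the whole corollary reduces to proving the identity
\begin{equation*}
N(n,\ell,e,\ell+1) - V(n,\ell-1) = (e+1)^{\ell+1}.
\end{equation*}

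To establish this I would specialise the first binomial sum of Theorem~\ref{KanavalukemaEkviv} to $h=\ell+1$ and argue that the index set $W_w$ is empty for every $w \neq \ell+1$. The mechanism is a comparison between the per-coordinate lower bound and the total budget: any $(i_1,\dots,i_{\ell+1}) \in W_w$ must satisfy $i_j \geq (w+1-\ell)/2$ for all $j$ together with $\sum_{j} i_j \leq w$. Writing $w=\ell+2a$ forces each $i_j \geq a+1$, hence $\sum_j i_j \geq (\ell+1)(a+1)$, which strictly exceeds $w=\ell+2a$ for every $a\ge 0$ once $\ell \geq 2$ (in particular at $a=0$ already $\ell+1$ positive entries cannot sum to $\ell$); writing $w=\ell+2a+1$ again forces each $i_j \geq a+1$, so $\sum_j i_j \geq (\ell+1)(a+1)$, which exceeds $w=\ell+2a+1$ precisely when $a\ge 1$ and leaves only the boundary case $a=0$. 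Thus only $w=\ell+1$ survives.

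For that surviving term the two constraints pin down $i_j=1$ for every $j$: the lower bound gives $i_j \geq 1$, and $\sum_j i_j \leq \ell+1$ with $\ell+1$ positive summands admits no other tuple. Consequently $w-\sum_j i_j = 0$, the outer binomial equals $\binom{n-(\ell+1)(e+1)}{0}=1$, and the product contributes $\prod_{j=1}^{\ell+1}\binom{e+1}{1}=(e+1)^{\ell+1}$. This yields the displayed identity, and hence $N \geq V(n,\ell-1)+(e+1)^{\ell+1}+1 = N_{\ell+1}+1$ gives $\LL\leq \ell$.

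The only genuine step is the emptiness claim for $w \neq \ell+1$; everything else is bookkeeping. The care needed there is merely to treat the two parities of $w$ separately and to invoke $\ell \geq 2$ (the hypothesis $\ell \geq 3$ is more than enough), since it is exactly the inequality $(\ell+1)(a+1) > \ell+2a$ for the even offsets, together with $(\ell+1)(a+1) > \ell+2a+1$ for the odd offsets with $a\ge 1$, that annihilates every contribution beyond the single tuple $(1,\dots,1)$.
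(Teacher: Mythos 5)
Your proposal is correct and follows essentially the same route as the paper's own proof: both specialise Theorem~\ref{KanavalukemaEkviv} to $h=\ell+1$, show that $W_w$ contributes nothing unless $w=\ell+1$, where the only admissible tuple is $(1,\dots,1)$, and conclude $N_{\ell+1}=V(n,\ell-1)+(e+1)^{\ell+1}$. The only cosmetic difference is that you split the emptiness argument by the parity of $w-\ell$, whereas the paper disposes of all $w\geq\ell+2$ with a single inequality $\sum_{j}i_j\geq(\ell+1)(w+1-\ell)/2>w$.
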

\begin{proof}
We calculate the value $N_h$ with $h=\ell+1$ of Theorem~\ref{KanavalukemaEkviv}. For this purpose, we consider the set $W_w$ with $w\geq\ell$. We get that $i_j\geq 1$ for each $j\in[1,h]$ and thus, $w\geq \ell+1$. On the other hand, $w\leq \ell+1$. Indeed, if $w \geq \ell+2$, then $w\geq\sum_{j=1}^h i_j\geq \sum_{j=1}^{\ell+1} i_j \geq (\ell+1)(w+1-\ell)/2 = (\ell-1)(w+1-\ell)/2 + w -(\ell-1) > w$ (a contradiction). Therefore, as $w = \ell +1$ and $i_j \geq 1$ for each $j$, we have $W_w=\{(1,1,...,1)\}$. Thus, the sum corresponding to $W_w$ in Theorem \ref{KanavalukemaEkviv} gives $V(n,\ell-1)+(e+1)^{\ell+1}$. Hence, if $N \geq N_{\ell+1} + 1 = V(n,\ell-1)+(e+1)^{\ell+1} + 1$, then $\LL \leq \ell$.
\end{proof}



In the following corollary, we present the asymptotic behaviour of $N_h$ on $n$. Notice that Corollary \ref{Cor:lLista} considers the case $h=\ell+1$ and Corollary \ref{ListAsymptotics} cases $3\le h\le \ell$.    

\begin{corollary}\label{ListAsymptotics}
Let $\ell\geq h\geq3$, $n\geq  n(e,\ell,b)$ and $b\geq\max\{3t,4e+4\}$. If $N\geq N_h+1$, then $\LL\leq h-1$, where \begin{align*}
 N_h\in V(n,\ell-1)+\binom{n-h(e+1)}{\ell+1-h}(e+1)^{h}+\Theta(n^{\ell-h}).
\end{align*}\end{corollary}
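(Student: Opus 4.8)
The plan is to read the estimate off the closed form in Theorem~\ref{KanavalukemaEkviv} by a degree count in $n$, since the list-size implication is already built into the notation. Indeed, ``$N\geq N_h+1\Rightarrow \LL\leq h-1$'' is immediate from the definition of $N_h$ (recall the observation that $N\geq N_h+1$ forces $\LL<h$ for every $e$-error-correcting code). Hence the whole content is the asymptotics of
$$N_h-V(n,\ell-1)=\sum_{w\geq\ell}\sum_{(i_1,\dots,i_h)\in W_w}\binom{n-h(e+1)}{w-\sum_{j=1}^h i_j}\prod_{j=1}^h\binom{e+1}{i_j}.$$
The first point I would stress is that, with $e$ and $\ell$ fixed, this is a finite sum: by the remark following Theorem~\ref{KanavalukemaUusi} we have $\ell\leq w\leq 2e+\ell+1$, and each $i_j\le e+1$.

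Next I would treat each summand as a polynomial in $n$. The factor $\prod_{j}\binom{e+1}{i_j}$ is a positive constant, while $\binom{n-h(e+1)}{\,w-\sum_j i_j\,}$ is a polynomial in $n$ of degree $w-\sum_j i_j$ with positive leading coefficient. So the task reduces to maximising the exponent $D=w-\sum_{j=1}^h i_j$ over admissible tuples. The mechanism driving the answer is that the constraint $i_j\geq (w+1-\ell)/2$ ties the $i_j$ to $w$: writing $w=\ell+2a$ or $w=\ell+2a+1$ gives in both cases the minimal admissible value $i_j\geq a+1$, so the largest $D$ for a given $w$ is $w-h(a+1)$. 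Substituting, the best even case yields $D=\ell-h+a(2-h)$ and the best odd case $D=\ell-h+1+a(2-h)$. Since $h\geq 3$ forces $2-h\leq-1$, both are strictly decreasing in $a$, so the maximum is attained at $a=0$ in the odd case, giving $D=\ell+1-h$ with the unique maximiser $w=\ell+1$, $(i_1,\dots,i_h)=(1,\dots,1)$.

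Having isolated the top-degree term, I would evaluate its contribution as $\binom{n-h(e+1)}{\ell+1-h}\prod_{j=1}^h\binom{e+1}{1}=\binom{n-h(e+1)}{\ell+1-h}(e+1)^h$, a polynomial of degree $\ell+1-h$. Every other summand has exponent $D\leq\ell-h$; because $3\leq h\leq\ell$, at least one admissible tuple of exponent exactly $\ell-h$ survives (for instance $w=\ell$ with $(1,\dots,1)$, feasible precisely since $h\leq\ell$), and all summands carry positive coefficients, so no cancellation can lower the order. Hence the sum of the remaining terms is $\Theta(n^{\ell-h})$, yielding $N_h\in V(n,\ell-1)+\binom{n-h(e+1)}{\ell+1-h}(e+1)^h+\Theta(n^{\ell-h})$.

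The \emph{main obstacle} is the degree maximisation: one must confirm not merely that $a=0$ is optimal but that raising $w$ genuinely hurts, which is exactly where $h\geq 3$ (so $2-h\leq -1$) is used; for $h=2$ the exponent would be non-decreasing in $a$ and the behaviour would differ. A minor additional check is the feasibility $h(a+1)\leq w$ of the minimising tuple, but this holds automatically at $a=0$ under $h\leq\ell$.
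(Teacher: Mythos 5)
Your proposal is correct and follows essentially the same route as the paper: both read the asymptotics off the closed form of Theorem~\ref{KanavalukemaEkviv} by maximizing the degree $w-\sum_{j=1}^h i_j$ of the binomial coefficient in $n$, using $h\geq 3$ to show that the maximum $\ell+1-h$ is attained only at $w=\ell+1$, $i_j=1$ for all $j$ (yielding the term $\binom{n-h(e+1)}{\ell+1-h}(e+1)^h$), while every other admissible tuple contributes degree at most $\ell-h$. Your explicit verification that a degree-$(\ell-h)$ term actually survives (via $w=\ell$, all $i_j=1$, feasible since $h\leq\ell$) together with positivity of all coefficients justifies the $\Theta(n^{\ell-h})$ claim slightly more carefully than the paper's terser wording, but the argument is the same in substance.
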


\begin{proof}
Let $n\geq  n(e,\ell,b)$, $b\geq\max\{3t,4e+4\}$, $\ell\geq h\geq3$, and $e$ and $\ell$ be fixed. By Theorem \ref{KanavalukemaEkviv}, we have $N_h= V(n,\ell-1)+ \sum_{w\geq\ell}\sum_{(i_1,\dots,i_h)\in W_w}\binom{n-h(e+1)}{w-\sum_{j=1}^h i_j}\prod_{j=1}^h\binom{e+1}{i_j}$.  


When we inspect the set $W_w$ closer, we notice  that in $\binom{n-h(e+1)}{w-\sum_{j=1}^h i_j}$ term $w-\sum_{j=1}^h i_j$ attains its maximum value when $w-\sum_{j=1}^h i_j$ is as large as possible since $w\leq 2e+\ell+1$ (recall that $e$ and $\ell$ are constants with respect to $n$). This occurs exactly when $i_j=\lceil(w+1-\ell)/2\rceil$ for each $j$ and $\lceil(w+1-\ell)/2\rceil$ is as small as possible. In particular, when $w\in\{\ell,\ell+1\}$, we have $\lceil(w+1-\ell)/2\rceil=1$ and when $w=\ell+1+a$ for some integer $a\geq 1$, we have $\lceil(w+1-\ell)/2\rceil=1+\lceil a/2\rceil$. Moreover, when $w=\ell$, we have $w-\sum_{j=1}^h i_j\leq \ell-h$. When $w=\ell+1$, we have $w-\sum_{j=1}^h i_j\leq \ell+1-h$ and when $w=\ell+1+a$, we have $w-\sum_{j=1}^h i_j\leq \ell+1+a-h(1+\lceil a/2\rceil)$. Since $h\geq3$ and $a\geq1$, we have $\ell+1+a-h(1+\lceil a/2\rceil)\leq \ell+1-h-\lceil a/2\rceil\leq\ell-h$. Thus, we may concentrate on the case with $w=\ell+1$.

Furthermore, as $w = \ell+1$ and $i_j = 1$ for each $j = 1, 2, \ldots, h$, we have  $\prod_{j=1}^h\binom{e+1}{i_j}=(e+1)^h$. Recall that $e$ is constant on $n$. Hence, for large $n$, it is enough to consider the binomial coefficient $\binom{n-h(e+1)}{\ell+1-h}$. Moreover, the second largest binomial coefficient is $\binom{n-h(e+1)}{\ell-h}$ and we have $\binom{n-h(e+1)}{t}\in \Theta (n^t)$, when $n$ is large.  Therefore, we have $N_h\in  V(n,\ell-1)+\binom{n-h(e+1)}{\ell+1-h}(e+1)^h+\Theta(n^{\ell-h})$.\end{proof}

In Theorem \ref{YB raja l=2}, a tight bound for the number of channels to certainly attain the list size $\LL\leq2$ is presented when code $C$ has minimum distance $d$. 
Observe, that when we choose $h=3$ in Theorem \ref{KanavalukemaEkviv}, we attain the number of channels required to have $\LL\leq2$. The bound of Theorem \ref{YB raja l=2} by Yaakobi and Bruck looks quite different from the bound in Theorem \ref{KanavalukemaEkviv}. However, Theorem \ref{YB raja l=2} can be obtained as a corollary from Theorem \ref{KanavalukemaEkviv} as shown in Corollary \ref{YB raja e-error l=2COR}. The new presentation in Corollary \ref{YB raja e-error l=2COR} somewhat simplifies the inequalities for the indices compared to Theorem \ref{YB raja l=2}.

%
%
%

\begin{corollary}\label{YB raja e-error l=2COR}
Let $n\geq  n(e,\ell,b)$, $b\geq\max\{3t,4e+4\}$ and $\ell\geq2$. If $$N\geq N'=\sum_{i_1,i_2,i_3,i_4}\binom{n-3e-3}{i_1}\binom{e+1}{i_2}\binom{e+1}{i_3}\binom{e+1}{i_4}+1$$ for \begin{itemize}
\item $0\leq i_1\leq \ell-1$,
\item $0\leq i_4\leq \ell-1-i_1$, 
\item $0\leq i_3\leq \ell-1-i_1$ and 
\item $i_1+i_3+i_4-(\ell-1)\leq i_2\leq \ell-1-i_1-|i_4-i_3|$,
\end{itemize}  then $\mathcal{L}\leq2$ for any $e$-error-correcting code $C$ and value $N'$ is equal to the lower bound obtained in Theorem \ref{YB raja l=2}.
\end{corollary}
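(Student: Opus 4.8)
The plan is to derive this corollary as the $h=3$ instance of Theorem \ref{KanavalukemaEkviv}, to re-express the single binomial sum there as the four-fold sum stated here, and then to identify the latter with the Yaakobi--Bruck bound of Theorem \ref{YB raja l=2}. First I would invoke the case $h=3$. By the extremal configuration built in the proofs of Theorems \ref{KanavalukemaUusi} and \ref{UniqueNh}, the quantity $N_3$ equals $|\bigcap_{i=1}^3 B_t(\bc_i)|$ for three codewords $\bc_1,\bc_2,\bc_3$ of weight $e+1$ with pairwise disjoint supports (so $d(\bc_i,\bc_j)=2e+2$), the remaining coordinate set $D$ having size $n-3(e+1)=n-3e-3$. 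I would then count this intersection directly, classifying each candidate word $\by$ by the quadruple $(i_1,i_2,i_3,i_4)$, where $i_1=|\supp(\by)\cap D|$ and $i_2,i_3,i_4$ are the sizes of the intersections of $\supp(\by)$ with the three codeword supports. This immediately produces the coefficient $\binom{n-3e-3}{i_1}\binom{e+1}{i_2}\binom{e+1}{i_3}\binom{e+1}{i_4}$; note that this single sum already absorbs the $V(n,\ell-1)$ term of Theorem \ref{KanavalukemaEkviv}, since every word of weight at most $\ell-1$ automatically satisfies $d(\by,\bc_j)\le(\ell-1)+(e+1)=t$ and hence lies in the intersection.

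The key step is to check that the membership condition $d(\by,\bc_j)\le t$ for $j=1,2,3$ is equivalent to the four displayed index inequalities. Writing $w(\by)=i_1+i_2+i_3+i_4$ and $d(\by,\bc_j)=w(\by)+(e+1)-2\,(\text{the matching support index})$, the three ball conditions become the symmetric system in which each of $i_2,i_3,i_4$ is at least the sum of the other three indices minus $(\ell-1)$. Rewriting the two conditions coming from $\bc_2$ and $\bc_3$ as upper bounds on $i_2$, and using $i_3+i_4+|i_4-i_3|=2\max\{i_3,i_4\}$, yields the stated range $i_1+i_3+i_4-(\ell-1)\le i_2\le \ell-1-i_1-|i_4-i_3|$; non-emptiness of this range is exactly $i_1+\max\{i_3,i_4\}\le \ell-1$, which is the pair of constraints $0\le i_1\le\ell-1$ and $0\le i_3,i_4\le\ell-1-i_1$. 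The constraints $i_j\le e+1$ need not be written explicitly, as they are enforced by the convention $\binom{e+1}{i_j}=0$ once $i_j>e+1$. This establishes $N'=N_3+1$, so that $\LL\le 2$ whenever $N\ge N'$.

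For the final claim I would substitute $d=2e+2$ into Theorem \ref{YB raja l=2}, the distance realized by the extremal configuration. Then $\lceil d/2\rceil=\lfloor d/2\rfloor=e+1$ and $\lceil 3d/2\rceil=3e+3$, so the four binomial coefficients of Theorem \ref{YB raja l=2} coincide termwise with those of $N'$, while $t-\lceil d/2\rceil=\ell-1$ turns the ranges for $i_1$ and $i_4$ into exactly the ones stated here. The remaining task is to show that the Yaakobi--Bruck ranges for $i_2$ and $i_3$, which are written asymmetrically (the lower bound $2\lceil d/2\rceil-t+i_1=e+2-\ell+i_1$ on $i_3$ together with the two-sided bound on $i_2$), enumerate the same collection of nonzero summands as the symmetric ranges derived above.

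I expect this last reconciliation to be the main obstacle: one must verify, using the convention $\binom{e+1}{\cdot}=0$, that every term admitted by one formulation but excluded by the other carries a vanishing factor (an index exceeding $e+1$), so that the two constrained sums agree termwise on their common nonzero support. Concretely, I would split into the cases $i_3\le i_4$ and $i_3>i_4$, check that the Yaakobi--Bruck lower bound on $i_3$ only discards terms which the symmetric upper bound on $i_2$ would already force past $e+1$ (and conversely), and thereby conclude that the surviving summands, and hence the totals, coincide. Carrying out this bookkeeping completes the identification of $N'$ with the Yaakobi--Bruck bound and shows that the present formulation merely simplifies the index inequalities.
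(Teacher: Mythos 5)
Your first half---obtaining the four-fold sum from the $h=3$ case of Theorem~\ref{KanavalukemaEkviv}, absorbing the $V(n,\ell-1)$ term into it, passing to the index $i_1=w-i_2-i_3-i_4$, and showing that the three symmetric ball conditions are equivalent to the displayed ranges (including the observation $i_3+i_4+|i_4-i_3|=2\max\{i_3,i_4\}$)---is correct and is essentially the paper's own argument, up to the cosmetic difference that the paper absorbs $V(n,\ell-1)$ by extending the sum to $w<\ell$ and invoking a Vandermonde-type identity rather than by direct counting. The genuine gap lies in the final identification with Theorem~\ref{YB raja l=2}, in two respects.

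First, the mechanism you propose for the reconciliation---that the two constrained sums ``agree termwise on their common nonzero support,'' every term admitted by one but excluded by the other carrying a vanishing binomial factor---is false. Take $e=1$, $\ell=2$, so $t=3$ and $d=2e+2=4$. The quadruple $(i_1,i_2,i_3,i_4)=(0,0,0,0)$ satisfies the corollary's constraints and contributes $\binom{n-6}{0}\binom{2}{0}\binom{2}{0}\binom{2}{0}=1\neq 0$, yet it violates the Yaakobi--Bruck constraint $i_3\geq 2\lceil d/2\rceil-t+i_1=1$; conversely, $(0,2,2,0)$ is admitted by Theorem~\ref{YB raja l=2} with contribution $1\neq 0$ but violates the corollary's constraint $i_3\leq\ell-1-i_1=1$. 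So the nonzero supports genuinely differ (in this example they share only the quadruple $(0,1,1,1)$), and no case analysis on $i_3\lessgtr i_4$ can repair a termwise comparison in the same coordinates. The sums are equal only as totals, and the equality is exhibited by the index substitution $h_2=e+1-i_2$, $h_3=e+1-i_3$ (legitimate since $\binom{e+1}{k}=\binom{e+1}{e+1-k}$), under which $|i_4-i_3|$ becomes $|i_4+h_3-e-1|$ and your ranges transform exactly into the Yaakobi--Bruck ones; this complementation is the key step of the paper's proof and is absent from your plan. Second, you substitute only $d=2e+2$ into Theorem~\ref{YB raja l=2}. An $e$-error-correcting code may instead have minimum distance $2e+1$, and the corollary's claim that $N'$ equals the bound of Theorem~\ref{YB raja l=2} must be verified for that value of $d$ as well; the paper does this in the Appendix, starting from the second ($W'_w$-based) sum of Theorem~\ref{KanavalukemaEkviv}---which is precisely where the equality of the two sums established in that theorem is needed. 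Even with the complementation repaired, your argument proves only half of the stated identification.
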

\begin{proof}
We first show that the formulation of the bound follows from Theorem \ref{KanavalukemaEkviv} when $h=3$. 
This gives the minimum number of channels $N$ required to have $\LL\leq2$. In particular, we have 


\begin{align*}
N_3=&V(n,\ell-1)+\sum_{w\geq\ell}\sum_{(i_2,i_3,i_4)\in W_w}\binom{n-3e-3}{w-i_2-i_3-i_4}
\binom{e+1}{i_2}\binom{e+1}{i_3}\binom{e+1}{i_4}.
\end{align*}
We have renamed the indices for convenience (the index $i_1$ will be saved for later use in the proof). Moreover, we have $W_w=\{(i_2,i_3,i_4)\mid (w+1-\ell)/2\leq i_j\leq e+1, w\geq i_2+i_3+i_4\}$. 
Earlier, we have used $W_w$ only with the assumption that $w\geq\ell$. However, we may allow here that $w < \ell$. Hence, we may have some binomial coefficients with $i_j<0$ for some $j$. In these cases (and when $i_j>e+1$), we use the common convention that the binomial coefficient attains the value $0$. Observe that $N_3=\sum_{w\geq0}\sum_{(i_2,i_3,i_4)\in W_w}\binom{n-3e-3}{w-i_2-i_3-i_4}\binom{e+1}{i_2}\binom{e+1}{i_3}\binom{e+1}{i_4}$. Indeed, since $W_w=\{(i_2,i_3,i_4)\mid 0\leq i_j\leq e+1, w\geq i_2+i_3+i_4\}$, we have $\sum_{w=0}^{\ell-1}\sum_{(i_2,i_3,i_4)\in W_w}\binom{n-3e-3}{w-i_2-i_3-i_4}\binom{e+1}{i_2}\binom{e+1}{i_3}\binom{e+1}{i_4} = \sum_{w=0}^{\ell-1}\sum_{i_2=0}^{e+1}\sum_{i_3=0}^{e+1}\sum_{i_4=0}^{e+1} \binom{n-3e-3}{w-i_2-i_3-i_4}\binom{e+1}{i_2}\binom{e+1}{i_3}\binom{e+1}{i_4} = V(n,\ell-1)$ due to binomial identity $\sum_{p=0}^s\binom{m-s}{k-p}\binom{s}{p}=\binom{m}{k}.$ 


Let us denote by $i_1=w-i_2-i_3-i_4$.  We now get that $2i_2 \geq 2(w+1-\ell)/2 = i_1+i_2+i_3+i_4-\ell+1$ and similar inequalities for $i_3$ and $i_4$. Since we do not have to take into account lower bound $i_1\geq0$ (cases with $i_1<0$ increase binomial sum by $0$) or the cases with $i_j>e+1$ for $j\in\{2,3,4\}$, we can consider following system of inequalities:
\begin{align}
 i_2\geq& i_1+i_3+i_4-\ell+1\label{i_2}\\
i_3\geq& i_1+i_2+i_4-\ell+1\label{i_3}\\
 i_4\geq& i_1+i_2+i_3-\ell+1\label{i_4}. 
 \end{align} 
Our goal is to show that this system of inequalities is equivalent with the following system of inequalities: 
\begin{align}
i_4&\leq \ell-1-i_1, \label{i'_4}\\
i_3&\leq \ell-1-i_1,  \label{i'_3}\\
i_1+i_3+i_4-(\ell-1)\leq i_2&\leq \ell-1-i_1-|i_4-i_3|.\label{i'_2}
\end{align}

Let us first show that the second system of inequalities follows from the first system of inequalities.

Inequality (\ref{i'_4}) follows from $$i_4=w-i_1-i_2-i_3\leq w-i_1-2(w-\ell+1)/2=\ell-1-i_1.$$ We obtain Inequality (\ref{i'_3}) in similar manner. Moreover, from Inequalities (\ref{i_3}) and (\ref{i_4}) we obtain $i_2\leq \ell-1-i_1-i_4+i_3$ and $i_2\leq \ell-1-i_1-i_3+i_4$, respectively. Together, these imply $$i_2\leq \ell-1-i_1-|i_4-i_3|.$$ Finally, the lower bound inequality in (\ref{i'_2}) follows directly from (\ref{i_2}).

Let us then show that the first system of inequalities follows from the second one. First of all, Inequality (\ref{i_2}) follows immediately from Inequality (\ref{i'_2}). Assume first that $i_4\geq i_3$. Then the upper bound of Inequality (\ref{i'_2}) is $i_2\leq \ell-1-i_1-i_4+i_3$. This implies Inequality (\ref{i_3}) and inequality (\ref{i_4}) since $$i_4\geq i_3\geq i_1+i_2+i_4-\ell+1\geq i_1+i_2+i_3-\ell+1.$$
The case with $i_3\geq i_4$ is similar.

Finally, we may add lower bounds $i_j\geq0$ for all $j\in\{1,2,3,4\}$ due to binomial coefficient context. Similarly we notice that if $i_1\geq\ell$, then $i_4<0$. Thus, we may also add upper bound  $i_1\leq \ell-1$. Hence, the first part of the claim follows.

Let us then derive the bound of Theorem \ref{YB raja l=2} by Yaakobi and Bruck from this new lower bound. The case with $d=2e+1$ is included in Appendix and we consider here only the case with $d=2e+2$. 
When we have $d=2e+2$, Theorem \ref{YB raja l=2} can be presented in the following way: If
\noindent$N\geq\sum_{h_1,h_2,h_3,h_4}\binom{n-3e-3}{h_1}\binom{e+1}{h_2}\binom{e+1}{h_3}\binom{e+1}{h_4}+1$ for \begin{itemize}
\item $0\leq h_1\leq \ell-1$,
\item $h_1-(\ell-1)\leq h_4\leq \ell-1-h_1$, 
\item $e+2-\ell+h_1\leq h_3\leq t-(h_1+h_4)$ and 
\item $\max\{h_1-h_3-h_4+2e+3-\ell,h_1+h_3+h_4-\ell+1\}\leq h_2\leq \ell-1-(h_1+h_4-h_3)$,
\end{itemize}  then $\mathcal{L}\leq2$ for any $e$-error-correcting code $C$ (with minimum distance $d=2e+2$). Next, we modify the presentation we got for $N_3$ into the formulation above.

Let us denote by $i'_2=e+1-i_2$ and by $i'_3=e+1-i_3$. Observe that $\binom{e+1}{i'_j}=\binom{e+1}{i_j}$ for $j\in\{2,3\}$. We can replace lower bound $i_4\geq0$ by $i_4\geq i_1-(\ell-1)$ since $i_4\geq0\geq i_1-\ell+1$  and $\binom{e+1}{i_4}=0$ when $i_4<0$. Moreover, we have $0\leq i_3\leq \ell-1-i_1$. Hence, $e+1\geq i'_3\geq e+2-\ell + i_1$. Notice that the upper bound of $i'_3$ can be replaced by $t-(i_1+i_4)$ since $t-(i_1+i_4)\geq e+1$ as $i_1+i_4\leq \ell-1$ and $\binom{e+1}{i'_3}=0$ when $i'_3>e+1$.  

For $i_2$ we have $i_1+i_3+i_4-(\ell-1)\leq i_2\leq \ell-1-i_1-|i_4-i_3|$ and hence, $\ell-1-(i_1+i_4-i'_3)\geq i'_2\geq -\ell+e+2+i_1+|i_4-i_3|=e+2-\ell+i_1+|i_4+i'_3-e-1|=\max\{i_1-i'_3-i_4+2e+3-\ell,i_1+i'_3+i_4-\ell+1\}$. By comparing these inequalities with the bounds used in Theorem \ref{YB raja l=2}, we notice that they are identical.
 The case with $d=2e+1$ is similar and is included in Appendix. Hence, we get the claim.\end{proof}





\section{New Bounds with the aid of Covering Codes} \label{SectionCoveringCodes}

Notice that although we have the bound $\LL\le \ell+1$ when $n$ is rather large (see Theorem~\ref{l+1Theorem POL}), for smaller lengths of the codes our best bound is still $\LL\le 2^\ell$ (see Theorem~\ref{shatter raja}) when the number of channels satisfies $N\ge V(n,\ell-1)+1$. Although this bound is attained in some cases (see \cite{junnila2020levenshtein}) and thus cannot be improved in general, we can try to get a smaller list size $\LL$ when we increase the number of channels as we have seen Theorem \ref{KanavalukemaUusi}.  In this section, we utilize covering codes when we increase the number of channels. A code $C\subseteq \F^n$ is an $R$-\emph{covering code} if for every word $\bx\in \F^n$ there exists a codeword $\bc\in C$ such that $d(\bx,\bc)\le R$. For an excellent source on results concerning covering codes, see \cite{chll}. Let us denote by $k[n,R]$ the smallest possible dimension of a \emph{linear} $R$-covering code of length $n$.

Let us next present the well-known Sauer-Shelah lemma (see~\cite{sauer1972density, shelah1972combinatorial}). Let $\mathcal{F}$ be a family of subsets of $[1,n]$, where $n$ is a positive integer. We say that a subset $S$ of $[1,n]$ is \emph{shattered} by $\mathcal{F}$ if for any subset $E \subseteq S$ there exists a set $F \in \mathcal{F}$ such that $F \cap S = E$. The Sauer-Shelah lemma states that if $|\mathcal{F}| > \sum_{i=0}^{k-1} \binom{n}{i}$, then $\mathcal{F}$ shatters a subset of size (at least) $k$. Since the subsets of $[1,n]$ can naturally be interpreted as words of $\F^n$, the Sauer-Shelah lemma can be reformulated as follows. Notice that $\sum_{i=0}^{k-1} \binom{n}{i}=V(n,k-1)$.
\begin{theorem}[\cite{sauer1972density, shelah1972combinatorial}]\label{Sauer-Shelah}
	If $Y \subseteq \F^n$ is a set containing at least $V(n,k-1)+1$ words, then there exists a set $S$ of $k$ coordinates such that for any word $\bw \in \F^n$ with $\supp(\bw) \subseteq S$ there exists a word $\mathbf{s} \in Y$ satisfying $\supp(\bw) = \supp(\mathbf{s}) \cap S$. Here we say that the set $S$ of coordinates is \emph{shattered} by $Y$.
\end{theorem}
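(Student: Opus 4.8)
The plan is to argue the contrapositive in the language of set systems. I would identify each word $\bw \in Y$ with its support $\supp(\bw) \subseteq [1,n]$; since the words of $Y$ are distinct, this produces a family $\mathcal{F}$ of exactly $|Y|$ distinct subsets of $[1,n]$, and a set $S$ is shattered by $Y$ in the sense of the statement precisely when for every $E \subseteq S$ some member of $\mathcal{F}$ meets $S$ in exactly $E$. The goal then becomes the bound: if $\mathcal{F}$ shatters no $k$-element set, then $|\mathcal{F}| \leq V(n,k-1)$. Taking the contrapositive and recalling $V(n,k-1) = \sum_{i=0}^{k-1}\binom{n}{i}$ gives the theorem, since shattering no set of size $k$ is the same as shattering no set of size $\geq k$ (shattering a larger set would shatter each of its $k$-element subsets).

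The bound I would prove by induction on $n$, deleting the last coordinate. From $\mathcal{F}$ on $[1,n]$ I form two families on $[1,n-1]$: the trace $\mathcal{F}_1 = \{A \setminus \{n\} : A \in \mathcal{F}\}$ and the ``doubled'' family $\mathcal{F}_2 = \{A \subseteq [1,n-1] : A \in \mathcal{F} \text{ and } A \cup \{n\} \in \mathcal{F}\}$. Counting, for each $A' \subseteq [1,n-1]$, how many of $A'$ and $A' \cup \{n\}$ lie in $\mathcal{F}$ yields the exact splitting $|\mathcal{F}| = |\mathcal{F}_1| + |\mathcal{F}_2|$. The two structural facts I need are that every set shattered by $\mathcal{F}_1$ is already shattered by $\mathcal{F}$ (so $\mathcal{F}_1$ shatters no $k$-set), and that every $B \subseteq [1,n-1]$ shattered by $\mathcal{F}_2$ lifts to $B \cup \{n\}$ being shattered by $\mathcal{F}$ (so $\mathcal{F}_2$ shatters no $(k-1)$-set). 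Feeding these into the induction hypothesis gives $|\mathcal{F}| \leq V(n-1,k-1) + V(n-1,k-2)$, which the Pascal identity $\binom{n-1}{i} + \binom{n-1}{i-1} = \binom{n}{i}$ collapses to $V(n,k-1)$. The base cases $n=0$ and $k=1$ (where no shattered singleton forces all members of $\mathcal{F}$ to agree on every coordinate, hence $|\mathcal{F}| \leq 1 = V(n,0)$) are immediate.

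The genuinely nontrivial step — the one I expect to be the main obstacle — is the lifting claim for $\mathcal{F}_2$: if $\mathcal{F}_2$ shatters $B$, then $\mathcal{F}$ shatters $B \cup \{n\}$, raising the shattered size by exactly one while staying inside $\mathcal{F}$. I would verify it by splitting the target subsets of $B \cup \{n\}$. For $E \subseteq B$ with $n \notin E$, an $A \in \mathcal{F}_2$ with $A \cap B = E$ gives $A \in \mathcal{F}$ and $A \cap (B \cup \{n\}) = E$ since $n \notin A$; for $E = E_0 \cup \{n\}$ with $E_0 \subseteq B$, the same $A$ gives $A \cup \{n\} \in \mathcal{F}$ and $(A \cup \{n\}) \cap (B \cup \{n\}) = E_0 \cup \{n\} = E$. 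Everything else is bookkeeping, and the only arithmetic identity required is Pascal's.

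As a backup I would keep the down-shifting (compression) proof in reserve: repeatedly replace a member $A \ni i$ of $\mathcal{F}$ by $A \setminus \{i\}$ whenever the latter is absent, an operation that preserves $|\mathcal{F}|$ and never destroys a shattered set. This terminates at a downward-closed family $\mathcal{F}'$ of the same cardinality, in which every member is shattered (a downset containing $A$ contains all subsets of $A$); hence $\mathcal{F}$ shatters at least $|\mathcal{F}|$ sets, and more than $V(n,k-1)$ of them forces one of size $\geq k$. This route trades the lifting lemma for a monotonicity-under-shifting lemma, which would then be the step demanding the most care.
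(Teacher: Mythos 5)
Your main argument is correct, but note that the paper itself gives no proof of this statement: it is quoted as the classical Sauer--Shelah lemma with citations to Sauer and Shelah, and the paper's only contribution here is the translation from set systems to words of $\F^n$ via supports --- exactly the identification you make in your first paragraph. What you then write out (the trace family $\mathcal{F}_1$, the doubled family $\mathcal{F}_2$, the counting identity $|\mathcal{F}| = |\mathcal{F}_1| + |\mathcal{F}_2|$, the lifting lemma showing that a $(k-1)$-set shattered by $\mathcal{F}_2$ lifts to a $k$-set shattered by $\mathcal{F}$, and the collapse via Pascal's identity) is the standard induction proof of that classical result, and every step checks out, including the reduction from ``shattered set of size at least $k$'' to ``of size exactly $k$'' and the base cases $n=0$ and $k=1$. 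So you have supplied a complete proof of a fact the paper simply imports. One caveat on your reserve plan: the monotonicity you state for down-shifting is backwards. Shifting \emph{can} destroy shattered sets --- for $\mathcal{F} = \{\{1\},\{2\}\}$ the singleton $\{1\}$ is shattered, but after down-shifting at coordinate $1$ the resulting family $\{\emptyset,\{2\}\}$ no longer shatters it --- and what the compression argument actually needs is the opposite direction: shifting never \emph{creates} a shattered set, i.e., every set shattered by the shifted family was already shattered by the original, so that the shattered sets of the terminal downset pull back to $\mathcal{F}$. Since that route is explicitly a backup and your primary induction proof is self-contained, this slip does not affect the correctness of the proposal.
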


Observe  that each Hamming ball of radius $e$ contains at most one codeword of $C$. Thus, if the intersection of the balls of radius $t$ centered at the output words of $Y$ can be covered by $k$ balls of radius $e$, then we have $|T(Y)|\leq k$. This approach is formulated in the following lemma.
\begin{lemma}[\cite{junnila2020levenshtein}]\label{e-pallot}
	Let $C\subseteq \F^n$ be an $e$-error-correcting code. If for any set of output words $Y = \{\y_1, \ldots, \y_N\}$ we have $$T(Y)\subseteq \bigcup_{i=1}^k B_e(\beta_i)$$ for some words $\beta_i \in \F^n$ ($i=1, \ldots, k$), then $\LL \leq k.$
\end{lemma}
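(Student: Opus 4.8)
The plan is to exploit the packing property of an $e$-error-correcting code, namely that each Hamming ball of radius $e$ can contain at most one codeword. This is exactly the observation stated in the text immediately before the lemma, and it is the only real content of the argument; everything else is a counting (pigeonhole) step combined with the hypothesis.

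First I would record why each ball $B_e(\beta)$ holds at most one codeword. Suppose two distinct codewords $\bc_1,\bc_2\in C$ both lay in $B_e(\beta)$ for some $\beta\in\F^n$. Then by the triangle inequality
\[
d(\bc_1,\bc_2)\leq d(\bc_1,\beta)+d(\beta,\bc_2)\leq e+e=2e<2e+1\leq \dmin(C),
\]
contradicting the definition of the minimum distance of an $e$-error-correcting code. Hence $|C\cap B_e(\beta)|\leq 1$ for every $\beta\in\F^n$.

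Next I would combine this with the hypothesis. Fix an arbitrary set of output words $Y$. By definition $T(Y)=C\cap\left(\bigcap_{\y\in Y}B_t(\y)\right)\subseteq C$, and by assumption $T(Y)\subseteq \bigcup_{i=1}^k B_e(\beta_i)$. Therefore every word of $T(Y)$ is a codeword lying in at least one of the $k$ balls $B_e(\beta_1),\dots,B_e(\beta_k)$. Since each such ball contains at most one codeword, a pigeonhole argument gives $|T(Y)|\leq k$.

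Finally, as this bound holds for every admissible set $Y$ of output words, I would take the maximum over all such $Y$ to obtain $\LL=\max_Y|T(Y)|\leq k$, which is the claim. There is no genuine obstacle here: the entire difficulty is shifted into \emph{producing} the covering words $\beta_i$, which is the hypothesis of the lemma rather than part of its proof. The main point to state carefully is simply the at-most-one-codeword-per-$e$-ball fact, since it is precisely where the assumption $\dmin(C)\geq 2e+1$ is used.
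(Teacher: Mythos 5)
Your proof is correct and follows exactly the paper's own reasoning, which appears as the observation immediately preceding the lemma: an $e$-error-correcting code has at most one codeword per ball of radius $e$ (your triangle-inequality argument makes this explicit), so the covering hypothesis bounds $|T(Y)|$ by $k$ and taking the maximum over $Y$ gives $\LL \leq k$. There is nothing missing; the paper treats this as a one-line consequence of the packing property, and your write-up just fills in the same steps with full detail.
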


Notice that Lemma \ref{e-pallot} also gives a decoding algorithm. Indeed, if the words $\beta_i$ are known, then there is at most one codeword in each $B_e(\beta_i)$, we can use the decoding algorithm of $C$ on $\beta_i$ and the codeword can be added to the list $T$.

\begin{theorem}
	Let $C$ be an $e$-error-correcting code. If the number of channels satisfies $N\ge V(n,\ell+2R-1)-2^{\ell+2R-k[\ell+2R,R]}+2$, then
	$$\LL\le  2^{k[\ell+2R,R]}.$$
\end{theorem}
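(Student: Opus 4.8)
The plan is to apply the covering-code approach of Lemma~\ref{e-pallot}, using the Sauer-Shelah lemma (Theorem~\ref{Sauer-Shelah}) to locate a shattered set of coordinates and then covering the relevant portion of the intersection $T(Y)$ by $e$-balls via a good covering code on the shattered coordinates. First I would observe that the hypothesis $N\geq V(n,\ell+2R-1)-2^{\ell+2R-k[\ell+2R,R]}+2$ is (for the purposes of Sauer-Shelah) at least $V(n,\ell+2R-1)+1$ after accounting for the correction term, or more precisely that the deficiency term $2^{\ell+2R-k[\ell+2R,R]}-1$ is exactly the slack needed so that $Y$ still shatters a set of size $\ell+2R$. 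So the first step is to invoke Theorem~\ref{Sauer-Shelah} (in a slightly sharpened form that tolerates the subtraction of $2^{\ell+2R-k[\ell+2R,R]}-1$ words) to produce a set $S\subseteq[1,n]$ of size $\ell+2R$ that is shattered by $Y$.

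Next I would analyze what shattering $S$ forces on the codewords in $T(Y)$. For each $\bc\in T(Y)$ and each output word $\y\in Y$ we have $d(\y,\bc)\leq t=e+\ell$. The key claim I would aim to establish is that restricted to the shattered coordinates $S$, every codeword of $T(Y)$ must lie within Hamming distance $R$ (on those $|S|=\ell+2R$ coordinates) of one of a small set of centers. The mechanism is this: because $S$ is shattered, for any prescribed pattern $\bw$ supported on $S$ there is an output word $\s\in Y$ agreeing with that pattern on $S$; combining the constraint $d(\y,\bc)\leq t$ with a well-chosen $\y$ whose restriction to $S$ is far from $\bc|_S$ would force $\bc|_S$ into a bounded region. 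The intended quantitative outcome is that the restrictions $\{\bc|_S : \bc\in T(Y)\}$ all lie in a subset of $\F^S\cong\F^{\ell+2R}$ that is coverable by $k[\ell+2R,R]$ balls of radius $e$ in the full space $\F^n$; here the exponent $k[\ell+2R,R]$ enters precisely because a linear $R$-covering code of length $\ell+2R$ and dimension $k[\ell+2R,R]$ covers $\F^{\ell+2R}$ with $2^{k[\ell+2R,R]}$ balls of radius $R$, and each such radius-$R$ ball on $S$ corresponds to an $e$-ball of $C$ off $S$.

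Concretely, I would translate the Hamming space so that the analysis of $T(Y)$ is confined to $S$ (outside $S$ all codewords of $T(Y)$ should agree, by an argument paralleling the support computations in the proof of Theorem~\ref{KanavalukemaUusi}, since distances are controlled and $|S|$ absorbs all the differing coordinates). Then I would take an optimal linear $R$-covering code $\mathcal{C}'\subseteq\F^{\ell+2R}$ of dimension $k[\ell+2R,R]$, so that its $2^{k[\ell+2R,R]}$ codewords $\beta_1',\dots,\beta_{2^{k[\ell+2R,R]}}'$ have radius-$R$ balls covering all of $\F^{\ell+2R}$. Lifting each $\beta_i'$ to a word $\beta_i\in\F^n$ that agrees with the common value of $T(Y)$ outside $S$, every codeword $\bc\in T(Y)$ satisfies $d(\bc,\beta_i)\leq e$ for the index $i$ with $d(\bc|_S,\beta_i')\leq R$ — here the radius drops from $R$ to $e$ because the pairwise distances among codewords in $T(Y)$ are small (as in Lemma~\ref{Listasanat lahella POL}) and the covering radius $R$ on $S$ translates into a genuine $e$-ball once the geometry is fixed. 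Applying Lemma~\ref{e-pallot} with these $k=2^{k[\ell+2R,R]}$ centers then yields $\LL\leq 2^{k[\ell+2R,R]}$.

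The hard part will be the second step: making rigorous exactly why shattering a set of size $\ell+2R$ confines the restrictions $\bc|_S$ to radius-$R$ neighborhoods that an $R$-covering code can handle, and why the covering radius $R$ (rather than some larger radius) is the correct parameter matching the shattered dimension $\ell+2R$. In particular one must verify that the correction term $-2^{\ell+2R-k[\ell+2R,R]}+2$ in the channel count is precisely what guarantees the Sauer-Shelah conclusion survives — this is the delicate bookkeeping, since we are subtracting words from the naive threshold $V(n,\ell+2R-1)+1$ and must confirm that enough structure remains to force a shattered set of the full size $\ell+2R$. I expect this to require a careful counting argument showing that the missing $2^{\ell+2R-k[\ell+2R,R]}-1$ words can be "absorbed" by the covering-code redundancy without weakening the covering conclusion of Lemma~\ref{e-pallot}.
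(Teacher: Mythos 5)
Your overall skeleton (Sauer--Shelah plus a covering code feeding into Lemma~\ref{e-pallot}) is the same as the paper's, but two of your core steps contain genuine gaps. First, there is no ``slightly sharpened'' Sauer--Shelah lemma of the kind you invoke: with only $V(n,\ell+2R-1)-2^{\ell+2R-k[\ell+2R,R]}+2$ output words one cannot conclude that $Y$ shatters a set of size $\ell+2R$, since Theorem~\ref{Sauer-Shelah} is tight (for instance, the set of all words of weight at most $\ell+2R-1$ has size $V(n,\ell+2R-1)$ and shatters no set of size $\ell+2R$). The paper never claims full shattering by $Y$. Instead: if $Y$ were enlarged by $2^{\ell+2R-k[\ell+2R,R]}-1$ words it would meet the Sauer--Shelah threshold, so there is a set $S$, $|S|=\ell+2R$, such that the restrictions of the actual $Y$ to $S$ miss at most $2^{\ell+2R-k[\ell+2R,R]}-1$ of the $2^{\ell+2R}$ patterns. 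The decisive point --- which your closing remark about ``covering-code redundancy'' gestures at but never states --- is a pigeonhole over cosets: the cosets $\bu+D$ of a linear $R$-covering code $D\subseteq \F^{\ell+2R}$ of dimension $k[\ell+2R,R]$ partition $\F^{\ell+2R}$ into exactly $2^{\ell+2R-k[\ell+2R,R]}$ classes, each of which is itself an $R$-covering code; since fewer patterns are missing than there are cosets, at least one coset lies entirely inside the restrictions of $Y$ to $S$. This is precisely what the correction term in $N$ is calibrated for.

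Second, your route from the covering code to $e$-balls does not work. You assert that all codewords of $T(Y)$ agree outside $S$ by ``an argument paralleling'' Theorem~\ref{KanavalukemaUusi} and Lemma~\ref{Listasanat lahella POL}; those results require $n\geq n(e,\ell,b)$ and $N\geq V(n,\ell-1)+1$, neither of which is available here (this theorem is specifically aimed at lengths where those lemmas fail), and the agreement claim is false in general since distinct codewords of $T(Y)$ may sit near distinct centers that differ off $S$. Moreover, even granting agreement off $S$, covering the restrictions $\bc|_S$ within radius $R$ would only place $T(Y)$ in balls of radius $R$, not $e$. The paper's mechanism is complementation on $S$: take $\s$ with $\supp(\s)=S$ and use as centers $\beta_i=\by_i+\s$, where the $\by_i$ are the output words forming the covering coset. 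For any $\bc\in T(Y)$, the covering property applied to the \emph{flipped} word $\bc+\s$ gives some $\by_j$ differing from $\bc+\s$ in at most $R$ coordinates of $S$, hence differing from $\bc$ in at least $\ell+R$ coordinates of $S$; combined with $d(\by_j,\bc)\leq t=e+\ell$, this forces $\by_j$ and $\bc$ to differ in at most $e-R$ coordinates outside $S$, whence $d(\bc,\beta_j)\leq R+(e-R)=e$. Only then does Lemma~\ref{e-pallot} with the $2^{k[\ell+2R,R]}$ centers $\beta_i$ yield $\LL\leq 2^{k[\ell+2R,R]}$. Without this flip, the arithmetic $t-(\ell+R)+R=e$ that converts covering radius $R$ on $S$ into genuine $e$-balls in $\F^n$ is unavailable, and your argument cannot close.
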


\begin{proof}
Let $\bx$ be the input word. We have $|Y|\ge  (V(n,\ell+2R-1)+1)-(2^{\ell+2R-k[\ell+2R,R]}-1)$. Next we show that with this number of outputs we can guarantee that there exists a set $S$ of $\ell+2R$ coordinates such that within these coordinates of $S$ a subset $Y'\subseteq Y$ contains a linear $R$-covering code of length $\ell+2R$.  Due to Theorem \ref{Sauer-Shelah}, we know that if we had more output words, namely, $|Y|\ge V(n,\ell+2R-1)+1$, then we would have a set $S$ of coordinates such that a subset $Y''\subseteq Y$ contains all the $2^{\ell+2R}$  words of length $\ell+2R$ among these coordinates of $S$. Let $D$ be a linear $R$-covering code in   $\F^{\ell+2R}$ with $\dim(D)=k[\ell+2R,R].$ Notice that any coset $\bu+D$, $\bu\in \F^{\ell+2R}$, of the linear code $D$ is also an $R$-covering code, and there are $2^{\ell+2R-\dim(D)}$ distinct cosets. Therefore, the set $Y''$ can miss any $2^{\ell+2R-\dim(D)}-1$ words of $\F^{\ell+2R}$ and still the remaining subset contains at least one $R$-covering code of length  $\ell+2R$. Consequently, it follows that $Y'$ contains an $R$-covering code of size $2^{k[\ell+2R,R]}$ because $Y'$ can be obtained from $Y''$ by removing some $2^{\ell+2R-\dim(D)}-1$ words.

Now let $\s\in \F^n$ be a word such that $\supp(\s)=S$ and $Y_1=\{\by_1,\dots, \by_{2^{k[\ell+2R,R]}}\}\subseteq Y'$ the subset of output words corresponding to the $R$-covering code. Denote $\beta_i=\s+\by_i$ for $i=1,\dots, 2^{k[\ell+2R,R]}$. Since the words in set $Y_1$ form, among the coordinates corresponding to $S$, an $R$-covering code  of length  $\ell+2R$, we know that there exists $\by_j$, $j\in\{1,\dots,2^{k[\ell+2R,R]}\}$, such that  the words $\by_j$ and $\bx+\s$ differ in at most $R$ places among the coordinates of $S$. Consequently, as $d(\bx,\by_j)\le t$, the words $\bx$ and $\beta_j=\by_j+\s$ have distance at most $t-(\ell+R)+R=e$ from one another. Therefore, by Lemma~\ref{e-pallot}, we get that 
$\LL\le 2^{k[\ell+2R,R]}.$
 \end{proof}

Note that if $\ell=5$ and $N\ge V(n,4)+1$, then, by Theorem~\ref{shatter raja}, we have $\LL\le 2^5=32.$ If we have $N\ge  V(n,6)-6$, then (using as the linear $1$-covering code $D$ the Hamming code of length 7), we obtain by the previous result, that $\LL\le 16.$

\section{List size with less channels} \label{SectionLessChannels}

By the following theorem (of~\cite{junnila2020levenshtein}), it is clear that if we have \emph{less} than $V(n,\ell-1)+1$ channels, then the list size cannot in general be constant for  $e$-error-correcting codes of length $n$. 

\begin{theorem} \cite{junnila2020levenshtein}\label{Asymptotic L}
	Let $V(n,\ell-p-1)+1\leq N\leq V(n,\ell-p)$ where $0\leq p\leq \ell-1$. Moreover, let $C\subseteq \F^n$ be such an $e$-error-correcting code that $\mathcal{L}$ is maximal. Then we have $$\mathcal{L}=\Theta(n^p).$$
\end{theorem}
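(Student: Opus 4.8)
The plan is to establish upper and lower bounds separately, both of order $n^p$, by extending the techniques used for the extreme cases $p=0$ (Theorem \ref{shatter raja}, where $\mathcal{L}$ is constant) and $p=\ell-1$ (Theorem \ref{RemarkNonConstantList}, where $\mathcal{L}$ is linear in $n$). The key structural fact is that an output word $\by$ lies in $T(Y)$'s defining intersection only if $d(\by,\bc)\leq t=e+\ell$ for every candidate codeword $\bc$, and each $B_e(\beta)$ contains at most one codeword. So the problem reduces to estimating how many disjoint (or nearly disjoint) $e$-balls' worth of codewords can be forced into the common intersection of $N$ balls of radius $t$, as a function of how $N$ sits inside the window $V(n,\ell-p-1)+1\leq N\leq V(n,\ell-p)$.

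For the \emph{lower bound} $\mathcal{L}=\Omega(n^p)$, I would construct an explicit $e$-error-correcting code and output set achieving a list of size $\Theta(n^p)$. The natural construction generalizes the one behind Theorem \ref{RemarkNonConstantList}: place codewords whose pairwise supports are disjoint blocks of size $e+1$ (so pairwise distances are $2e+1$ or $2e+2$, as forced by Lemma \ref{Listasanat lahella POL}), and arrange them so that each output word in $Y$ can simultaneously lie within distance $t$ of $\Theta(n^p)$ of them. The count $V(n,\ell-p)$ of allowed output words corresponds, via the Sauer--Shelah threshold, to needing the outputs to shatter a set of only $\ell-p$ coordinates; the remaining $p$ "degrees of freedom" in placing codewords along the $n$ coordinates is exactly what produces the polynomial factor $\binom{n-O(1)}{p}\in\Theta(n^p)$. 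I would verify that with $N\leq V(n,\ell-p)$ channels this configuration is feasible (the output words fit in the intersection) so that $\mathcal{L}\geq\Theta(n^p)$.

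For the \emph{upper bound} $\mathcal{L}=O(n^p)$, I would invoke the covering-code / shattering machinery of Theorem \ref{Sauer-Shelah} and Lemma \ref{e-pallot}. Since $N\geq V(n,\ell-p-1)+1$, Sauer--Shelah guarantees a shattered set $S$ of $\ell-p$ coordinates. The idea is then to cover $T(Y)$ by $O(n^p)$ balls of radius $e$: within the $\ell-p$ shattered coordinates the outputs behave as in the constant-list regime and contribute only a bounded number of centers, while variation outside $S$ is constrained by the weight budget $t$ and can be enumerated by $O(n^p)$ choices of centers $\beta_i$. Applying Lemma \ref{e-pallot} with these $O(n^p)$ centers yields $\mathcal{L}\leq O(n^p)$. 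The matching $\Theta(n^p)$ follows by combining the two bounds.

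The main obstacle, I expect, is the upper bound's covering step: making precise how a set shattering only $\ell-p$ coordinates forces the codewords of $T(Y)$ into $O(n^p)$ $e$-balls. One must argue that any two codewords in $T(Y)$ agree outside a set of coordinates of controlled size (analogous to Lemmas \ref{l-1 ulkona} and \ref{Listasanat lahella POL}, but now with $n$ not necessarily large relative to $\ell$), and then that the number of admissible "offset patterns" on the free coordinates grows only polynomially of degree $p$. Controlling the constants and confirming that the lower-bound construction is simultaneously realizable as a genuine $e$-error-correcting code (respecting the minimum-distance constraint globally, not just pairwise among the list words) are the points that require the most care; the rest is binomial-coefficient asymptotics of the type already carried out in Corollary \ref{ListAsymptotics}.
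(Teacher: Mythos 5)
Note first that this paper does not prove Theorem~\ref{Asymptotic L} at all: it is quoted from \cite{junnila2020levenshtein}, so your attempt can only be judged on its own merits. The serious problem is in your \emph{lower bound}. Codewords with pairwise disjoint supports of size $e+1$ number at most $\lfloor n/(e+1)\rfloor = O(n)$, so the construction you describe can never produce a list of size $\Omega(n^p)$ once $p\geq 2$, no matter how the output words are arranged; it only recovers the case $p\leq 1$ (which is exactly Theorem~\ref{RemarkNonConstantList}). Your appeal to Lemma~\ref{Listasanat lahella POL} is also out of scope: that lemma assumes $N\geq V(n,\ell-1)+1$, whereas in the present regime $N\leq V(n,\ell-p)\leq V(n,\ell-1)$, and with so few channels the list codewords are \emph{not} forced to lie within distance $2e+2$ of one another (the correct analogue, Lemma~\ref{dist2l-2a-2} with $a=p$, only gives $d\leq 2e+2p+2$). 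A construction that actually works takes codewords of weight $e+p$ whose supports pairwise intersect in at most $p-1$ coordinates; a greedy/packing argument (each $(e+p)$-set rules out only $O(n^e)$ others, while there are $\Theta(n^{e+p})$ in total) yields $\Theta(n^p)$ such codewords, pairwise distances are then at least $2e+2$ (but as large as $2e+2p$), and every such codeword is within $(e+p)+(\ell-p)=t$ of every word of weight at most $\ell-p$; taking $Y$ to be any $N$ words inside $B_{\ell-p}(\nolla)$ puts all $\Theta(n^p)$ codewords into $T(Y)$.

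Your \emph{upper bound} has the right skeleton --- Theorem~\ref{Sauer-Shelah} to get a shattered set $S$ of $\ell-p$ coordinates, then translates $\beta_i=\by_i+\s$ and Lemma~\ref{e-pallot}; this is precisely the mechanism of Theorem~\ref{tldc} with $a=p$, giving $\LL\leq 2^{\ell-p}M$ where $M$ bounds the number of codewords in any ball of radius $e+p$. But the decisive counting step, which you yourself flag as ``the main obstacle,'' is left unproven: one must show that \emph{every} code of minimum distance at least $2e+1$ satisfies $M=O(n^p)$. This is a Johnson-type packing bound, not a structural lemma about $T(Y)$: after translating the ball's center to $\nolla$, any two codewords of weight at most $e+p$ at distance at least $2e+1$ have supports meeting in at most $p-1$ coordinates, hence each $p$-subset of $[1,n]$ lies in the support of at most one codeword, giving $M\leq \binom{n}{p}+O(n^{p-1})$. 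Without this ingredient (and with the lower-bound construction corrected as above) your argument does not close; with them, the two halves combine exactly to $\mathcal{L}=\Theta(n^p)$.
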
	

Due to this result, in order to have a smaller list size, let us concentrate on certain $e$-error-correcting codes, namely, those with at most $M$ codewords within any ball of radius $e+a$, for some $a>0$.

\begin{theorem} \label{tldc}
	Let $N\ge V(n,\ell-a-1)+1$ where  $0\leq a\leq \ell-1$. Let $C$ be an $e$-error-correcting code such that $|B_{e+a}(\bu)\cap C|\le M$ for every $\bu\in \F^n$. Consequently,
	$$\LL\le 2^{\ell-a}M.$$ 
\end{theorem}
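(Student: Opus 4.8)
The plan is to combine the Sauer-Shelah machinery (Theorem~\ref{Sauer-Shelah}) with the covering argument encapsulated in Lemma~\ref{e-pallot}, but now using balls of radius $e+a$ rather than $e$ to absorb the extra slack created by having fewer channels. Since $N \ge V(n,\ell-a-1)+1$, Theorem~\ref{Sauer-Shelah} (applied with $k=\ell-a$) guarantees a shattered set $S$ of $\ell-a$ coordinates: for every word $\bw$ supported on $S$ there is an output word $\s \in Y$ with $\supp(\bw)=\supp(\s)\cap S$. The first step is to fix such a set $S$ and record that $Y$ realizes, among the coordinates of $S$, all $2^{\ell-a}$ possible patterns.

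Next I would mimic the covering construction of the previous theorem. Let $\bx$ be the transmitted word and consider its restriction $\bx|_S$ to the $\ell-a$ coordinates of $S$. Because $S$ is shattered, there is an output word $\by_j \in Y$ whose pattern on $S$ agrees exactly with $\bx|_S$; equivalently, setting $\beta := \by_j + \s$ for the appropriate $\s$ supported on $S$, the words $\bx$ and $\beta$ agree on all of $S$. The key count is then on the coordinates outside $S$: since $d(\bx,\by_j)\le t = e+\ell$ and $\bx,\beta$ coincide on $S$, the word $\beta$ differs from $\bx$ by at most $\ell - (\ell-a) = a$ positions inside $S$ after correction, so $d(\bx,\beta)\le t-(\ell-a) = e+a$. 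Hence $\bx \in B_{e+a}(\beta)$. Iterating over all $2^{\ell-a}$ sign patterns on $S$ produces $2^{\ell-a}$ centers $\beta_1,\dots,\beta_{2^{\ell-a}}$, and the argument shows $T(Y) \subseteq \bigcup_i B_{e+a}(\beta_i)$.

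Finally, the hypothesis $|B_{e+a}(\bu)\cap C|\le M$ upgrades the covering bound: each of the $2^{\ell-a}$ balls $B_{e+a}(\beta_i)$ contains at most $M$ codewords rather than just one, so $\LL = |T(Y)| \le 2^{\ell-a}M$, which is precisely the claim. This last step is the natural generalization of Lemma~\ref{e-pallot}, where the $a=0$ case recovers $M=1$ and the bound $2^{\ell}$.

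The step I expect to require the most care is verifying the distance bookkeeping that gives $d(\bx,\beta_i)\le e+a$ for \emph{every} codeword in the intersection, not merely for $\bx$ itself. The shattering only guarantees that each pattern on $S$ is realized by \emph{some} output word, so I must argue that for an arbitrary $\bc \in T(Y)$ there is a center $\beta_i$ (built from an output word matching $\bc|_S$) with $d(\bc,\beta_i)\le e+a$; this follows because $d(\bc,\by_i)\le t$ and the matching on $S$ removes at least $\ell-a$ of the potential disagreements, but one must confirm the pattern realizing $\bc|_S$ is indeed available in $Y$ uniformly in $\bc$, which is exactly what full shattering of $S$ provides. Once that uniformity is pinned down, the rest is a direct application of the enlarged covering lemma.
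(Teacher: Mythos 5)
Your scaffolding is the same as the paper's (Sauer--Shelah to extract a shattered set $S$ of $\ell-a$ coordinates, then cover $T(Y)$ by $2^{\ell-a}$ balls of radius $e+a$ centered at translates of output words, then multiply by $M$), but the central distance computation is reversed, and as written it fails. You choose an output word $\by_j$ whose pattern on $S$ \emph{agrees} with $\bx|_S$ and conclude $d(\bx,\beta)\le t-(\ell-a)=e+a$. That inequality does not follow: if $\by_j$ agrees with $\bx$ on $S$, the constraint $d(\bx,\by_j)\le t$ says nothing about where the disagreements lie, and all $t=e+\ell$ of them may sit \emph{outside} $S$. Concretely, take $e=a=0$, $\ell=3$, $\bx=\nolla$, $S=\{1,2,3\}$, and let the output word realizing the pattern $000$ on $S$ be $\by_j$ with $\supp(\by_j)=\{4,5,6\}$; then $\by_j|_S=\bx|_S$, but every translate $\beta=\by_j+\s$ with $\supp(\s)\subseteq S$ has $d(\bx,\beta)\ge 3 > e+a=0$. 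Your phrase ``the matching on $S$ removes at least $\ell-a$ of the potential disagreements'' is exactly backwards: matching on $S$ removes nothing from the disagreement budget, so no useful bound on the outside coordinates results.

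The correct choice --- which is what the paper does --- is to fix $\s$ with $\supp(\s)=S$ and use the output word $\by_j$ whose pattern on $S$ is the \emph{complement} of $\bx|_S$ (equivalently, agrees with $(\bx+\s)|_S$); this exists by shattering. Then $\by_j$ disagrees with $\bx$ on all $\ell-a$ coordinates of $S$, so since $d(\bx,\by_j)\le t$, at most $t-(\ell-a)=e+a$ disagreements remain outside $S$; the word $\beta_j=\by_j+\s$ agrees with $\bx$ on all of $S$ and coincides with $\by_j$ outside $S$, giving $d(\bx,\beta_j)\le e+a$. With this single correction the rest of your argument is sound: the reasoning applies verbatim to every $\bc\in T(Y)$ (not just the transmitted $\bx$), since the only property used is $d(\bc,\by)\le t$ for all $\by\in Y$ --- the uniformity point you rightly flag --- and then the hypothesis $|B_{e+a}(\bu)\cap C|\le M$ for every $\bu$ turns the covering by $2^{\ell-a}$ balls into the bound $\LL\le 2^{\ell-a}M$, exactly as in the paper.
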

\begin{proof}
Assume that we received the set $Y$ of output words from the channels. Due to the number of channels, we know, by Theorem~\ref{Sauer-Shelah}, that there exists a subset $Y'\subseteq Y$ of size $|Y'|=2^{\ell-a}$ such that these words have in some $\ell-a$ coordinates, denote this set of coordinates by $S$, all possible words of length $\ell-a$.  
Suppose $\bx$ is the input word and denote $Y'=\{\by_1,\dots,\by_{2^{\ell-a}}\}.$ 
Let $\beta_i=\by_i+\s$, $i=1,\dots, 2^{\ell-a}$ where $\supp(\s)=S$. It is easy to check that $\bx$ has distance at most $e+a$ from one of the $\beta_i$'s. Since there are at most $2^{\ell-a}$ $\beta_i$'s and there are at most $M$ codewords within distance $e+a$ from each of them, we obtain the bound $\LL\le 2^{\ell-a}M.$
\end{proof}

The previous result is useful when our $e$-error-correcting code is a code for traditional list-decoding, see \cite{Guruswamin_kirja}. For number of channels being less than $V(n,\ell-1)+1$, it also gives, for every $e$-error-correcting code with suitable $a$, small exponent for $n$ compared to Theorem~\ref{Asymptotic L} (see Corollary~\ref{Johnson}(ii) below), or even constant bounds (see Corollary~\ref{Johnson}(i)).
The following corollary follows straightforwardly from  applying the results in \cite[Theorem 3.2]{Guruswamin_kirja} to Theorem~\ref{tldc}. Let us denote $$r(n,e,M)=\frac{n}{2}\left(1-\sqrt{1-\frac{M-1}{M}\frac{2(2e+1)}{n}}\right)$$
and 
$$r(n,e)=\frac{n}{2}\left(1-\sqrt{1-\frac{2(2e+1)}{n}}\right).$$

\begin{corollary}\label{Johnson} Let $C$ be an $e$-error-correcting code, $t=e+\ell$, an integer $M\ge 1$ and $2e+1<n/2$. We have
	\begin{itemize}
\item[(i)]  Let $N\ge V(n,\ell-r(n,e,M)+e-1)+1$ where  $0\leq r(n,e,M)-e\leq \ell-1$ and $r(n,e,M) \geq 1$. Consequently,
	$$\LL\le 2^{t-r(n,e,M)}M.$$ 

\item[(ii)] Let $N\ge V(n,\ell-r(n,e)+e)+1$ where  $1\leq r(n,e)-e\leq \ell$ and $r(n,e) \geq 2$. Consequently,
	$$\LL\le 2^{t-r(n,e)+1} n.$$ 
	\end{itemize}
\end{corollary}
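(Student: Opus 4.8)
The plan is to obtain both parts of Corollary~\ref{Johnson} as direct instances of Theorem~\ref{tldc}, the only new ingredient being the choice of the parameter $a$ together with the Johnson-bound estimates on the number of codewords inside a ball. Recall that Theorem~\ref{tldc} states: if $N\ge V(n,\ell-a-1)+1$ and $|B_{e+a}(\bu)\cap C|\le M'$ for every $\bu$, then $\LL\le 2^{\ell-a}M'$. So the whole task is to pick $a$ so that $\ell-a-1$ matches the index appearing in the hypothesis on $N$, and then to invoke the Johnson bound to control $|B_{e+a}(\bu)\cap C|$.

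For part~(i), the plan is to set $a=r(n,e,M)-e$, so that $e+a=r(n,e,M)$ and the hypothesis $0\le r(n,e,M)-e\le \ell-1$ becomes exactly $0\le a\le \ell-1$, as required by Theorem~\ref{tldc}. The hypothesis $N\ge V(n,\ell-r(n,e,M)+e-1)+1$ then reads $N\ge V(n,\ell-a-1)+1$. The key external step is the Johnson bound from \cite[Theorem 3.2]{Guruswamin_kirja}: since $C$ has minimum distance at least $2e+1$ and $r(n,e,M)$ is precisely the Johnson radius corresponding to list size $M$ for this minimum distance and length $n$, every ball $B_{r(n,e,M)}(\bu)=B_{e+a}(\bu)$ contains at most $M$ codewords. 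Applying Theorem~\ref{tldc} with $M'=M$ then yields $\LL\le 2^{\ell-a}M=2^{\ell-(r(n,e,M)-e)}M=2^{t-r(n,e,M)}M$, using $t=e+\ell$. The conditions $2e+1<n/2$ and $r(n,e,M)\ge 1$ are the usual nondegeneracy conditions ensuring the Johnson radius is well-defined (the square root is real) and that $a\ge 0$.

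For part~(ii), the plan is analogous but uses the list-size-one Johnson bound: $r(n,e)$ is the Johnson radius beyond which one can only guarantee that the number of codewords grows at most linearly, and \cite[Theorem 3.2]{Guruswamin_kirja} gives that a ball of radius $r(n,e)$ contains at most a number of codewords bounded by something of order $n$ (here taken as $M'\le n$ after absorbing constants). I would set $a=r(n,e)-e-1$ so that $e+a=r(n,e)-1<r(n,e)$, matching the index $\ell-a-1=\ell-r(n,e)+e$ in the hypothesis $N\ge V(n,\ell-r(n,e)+e)+1$; the condition $1\le r(n,e)-e\le \ell$ translates into $0\le a\le \ell-1$. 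Theorem~\ref{tldc} then gives $\LL\le 2^{\ell-a}M'\le 2^{\ell-(r(n,e)-e-1)}n=2^{t-r(n,e)+1}n$.

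The main obstacle I anticipate is not in the structure of the argument, which is a mechanical substitution into Theorem~\ref{tldc}, but in correctly lining up the Johnson-bound statement from \cite{Guruswamin_kirja} with the present normalization. Specifically, one must verify that the quantities $r(n,e,M)$ and $r(n,e)$ as defined here (in terms of $2e+1$, i.e.\ the minimum distance $d=2e+1$) coincide with the Johnson radii in the cited theorem, that the list-size bound there is exactly $M$ in the first case and is at most of order $n$ in the second, and that the integrality issue (the radius $e+a$ must be an integer, whereas $r(n,e,M)$ need not be) is handled by the appropriate floor; these bookkeeping checks, together with confirming the boundary hypotheses $r(n,e,M)\ge 1$ and $r(n,e)\ge 2$ guarantee $a\ge 0$, are where the care is required.
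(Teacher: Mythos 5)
Your proposal is correct and coincides with the paper's own (one-sentence) proof, which simply states that the corollary ``follows straightforwardly from applying the results in \cite[Theorem 3.2]{Guruswamin_kirja} to Theorem~\ref{tldc}'': your substitutions $a=r(n,e,M)-e$ for part~(i) and $a=r(n,e)-e-1$ for part~(ii) are exactly the choices that make the hypotheses $N\ge V(n,\ell-a-1)+1$, $0\le a\le \ell-1$ and the conclusions $\LL\le 2^{\ell-a}M'$ line up with the stated bounds. The bookkeeping points you flag (integrality of the Johnson radii and matching the normalization $d=2e+1$ in \cite{Guruswamin_kirja}) are real but are glossed over by the paper as well.
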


In what follows, we continue our study of the list size for codes with $|B_{e+a}(\bu)\cap C|\le M$ for every $\bu\in \F^n$.
First, we introduce two technical lemmas. Lemma \ref{l-a-1Patka} can be seen as a reformulation of \cite[Lemma $13$]{junnila2020levenshtein} for lesser number of channels and Lemma \ref{dist2l-2a-2} as a reformulation of Lemma \ref{Listasanat lahella POL}.
 
\begin{lemma}\label{l-a-1Patka}
Let $N\geq V(n,\ell-a-1)+1$, $\ell-1\geq a\geq1$ and $n\geq (\ell-a-1)^22^b+\ell-a-2$. Then for any codeword $\bc\in T(Y)$ and any set  $\overline{D}\subseteq [1,n]$ with cardinality $|\overline{D}|=b$ there exists an output word $\by\in Y$ such that $|\supp(\by+\bc)\setminus \overline{D}|\geq \ell-a-1$.
\end{lemma}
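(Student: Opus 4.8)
The plan is to establish Lemma~\ref{l-a-1Patka} by contradiction, mirroring the Sauer--Shelah--based shattering argument that underlies Lemma~\ref{l-1 ulkona} in \cite{junnila2020levenshtein}, but now adapted to the weaker hypothesis $N \geq V(n,\ell-a-1)+1$. Suppose, for the sake of contradiction, that there is a codeword $\bc \in T(Y)$ and a set $\overline{D} \subseteq [1,n]$ with $|\overline{D}| = b$ such that \emph{every} output word $\by \in Y$ satisfies $|\supp(\by+\bc) \setminus \overline{D}| \leq \ell-a-2$. The idea is to translate the space so that $\bc = \nolla$; then the assumption says that each $\by \in Y$ has at most $\ell-a-2$ of its $1$-coordinates lying \emph{outside} $\overline{D}$. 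Restricting attention to the coordinates in $[1,n] \setminus \overline{D}$ (of which there are $n-b$), every output word projects to a word of weight at most $\ell-a-2$ there.

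First I would count how many output words can share a given projection onto $[1,n] \setminus \overline{D}$. Two output words agreeing outside $\overline{D}$ can differ only within $\overline{D}$, so they are determined by their restriction to the $b$ coordinates of $\overline{D}$; hence at most $2^b$ output words share each projection. Next, the number of distinct projections of weight at most $\ell-a-2$ on $n-b$ coordinates is $V(n-b,\ell-a-2)$. Combining these, the total number of output words satisfying the contrary assumption is at most $2^b \cdot V(n-b,\ell-a-2)$. The plan is then to show this product is strictly smaller than $V(n,\ell-a-1)+1 \le N = |Y|$, which contradicts the fact that all $N$ output words obey the assumption. This is where the length bound $n \geq (\ell-a-1)^2 2^b + \ell-a-2$ enters: it is precisely what is needed to guarantee $V(n,\ell-a-1) \geq 2^b \cdot V(n-b,\ell-a-2)$, because $V(n,\ell-a-1)$ grows like $n^{\ell-a-1}/(\ell-a-1)!$ while the competing quantity grows only like $2^b n^{\ell-a-2}/(\ell-a-2)!$, so one more factor of $n$ (scaled by the constants $(\ell-a-1)$ and $2^b$) dominates.

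The main obstacle I expect is the routine but delicate binomial estimate showing $V(n,\ell-a-1) \geq 2^b V(n-b,\ell-a-2)$ under the stated hypothesis on $n$. One clean way is to write $V(n,\ell-a-1) - V(n,\ell-a-2) = \binom{n}{\ell-a-1}$ and argue that this single top term already exceeds $2^b V(n-b,\ell-a-2)$ once $n$ is large enough; comparing $\binom{n}{\ell-a-1}$ against $2^b(\ell-a-1)$ copies of the largest term $\binom{n-b}{\ell-a-2}$ of $V(n-b,\ell-a-2)$ reduces matters to verifying $\tfrac{n-(\ell-a-2)}{\ell-a-1} \geq 2^b(\ell-a-1)$, i.e.\ $n \geq (\ell-a-1)^2 2^b + \ell-a-2$, which is exactly the assumed bound. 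I would carry out this comparison carefully, noting the edge case $a = \ell-1$ (where $\ell-a-1 = 0$ and the statement is vacuous) separately. Once the counting contradiction is in place, the lemma follows immediately upon translating back, since the claimed output word $\by$ with $|\supp(\by+\bc) \setminus \overline{D}| \geq \ell-a-1$ must exist.
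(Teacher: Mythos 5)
Your proposal is correct and follows essentially the same route as the paper: after translating so that $\bc = \nolla$, both arguments count the words having at most $\ell-a-2$ ones outside $\overline{D}$ (your factorization $2^b \cdot V(n-b,\ell-a-2)$ is the same count as the paper's double sum $\sum_{j=0}^{\ell-a-2}\sum_i \binom{b}{i}\binom{n-b}{j}$), and both reduce to the identical binomial comparison in which the hypothesis $n \geq (\ell-a-1)^2 2^b + \ell-a-2$ is exactly what makes the count at most $\binom{n}{\ell-a-1} \leq V(n,\ell-a-1) < N$. The only cosmetic difference is that you phrase it as a contradiction while the paper counts directly; the substance is the same.
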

\begin{proof}
Let us assume without loss of generality that $\bc=\nolla$. Thus, $w(\by)\leq t$ for each $\by\in Y$. Let us count the number of binary words of weight at most $t$ which have $|\supp(\by)\setminus \overline{D}|< \ell-a-1$. There are at most \begin{align*}
&\sum_{j=0}^{\ell-a-2}\sum_{i=0}^{\min\{b,t-j\}}\binom{b}{i}\binom{n-b}{j}\\
\leq&(\ell-a-1)2^b\binom{n}{\ell-a-2}\\
=&(\ell-a-1)2^b\frac{\ell-a-1}{n-\ell+a+2}\binom{n}{\ell-a-1}\\
\leq& \binom{n}{\ell-a-1}
\end{align*}
such words, which is strictly less than $N$ and hence, the claim holds.
\end{proof}

\begin{lemma}\label{dist2l-2a-2}
Let $N\geq V(n,\ell-a-1)+1$, $\ell-1\geq a\geq1$, $b=2t$, $n\geq (\ell-a-1)^22^b+\ell-a-2$ and $\LL\geq2$. Then for any two codewords $\bc_1$ and $\bc_2$ in $T(Y)$ holds $$2e+1\leq d(\bc_1,\bc_2)\leq 2e+2a+2.$$
\end{lemma}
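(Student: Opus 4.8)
The plan is to prove the two-sided bound on distances between codewords in $T(Y)$ by establishing each inequality separately. The lower bound $d(\bc_1,\bc_2)\geq 2e+1$ is immediate: since $C$ is an $e$-error-correcting code, any two distinct codewords are at distance at least $\dmin(C)\geq 2e+1$, and codewords in $T(Y)$ are distinct by definition. So the entire substance lies in the upper bound $d(\bc_1,\bc_2)\leq 2e+2a+2$.

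For the upper bound, I would mimic the strategy used in Lemma~\ref{Listasanat lahella POL} (whose proof is not in the excerpt but whose conclusion and method I can adapt), now invoking the weaker channel count via Lemma~\ref{l-a-1Patka} instead of Lemma~\ref{l-1 ulkona}. First I would fix $\bc_1,\bc_2\in T(Y)$ and argue by contradiction, assuming $d(\bc_1,\bc_2)\geq 2e+2a+3$. Translate the Hamming space so that $\bc_1=\nolla$; then $w(\bc_2)=d(\bc_1,\bc_2)\geq 2e+2a+3$. Choose a set $\overline{D}\subseteq[1,n]$ with $|\overline{D}|=b=2t$ containing $\supp(\bc_2)$ (possible since $w(\bc_2)\leq n$ and, for the relevant range, $\supp(\bc_2)$ fits inside a set of size $2t$ — here one must check $d(\bc_1,\bc_2)\leq 2t$, which I would need to guarantee first; indeed any $\by\in Y$ satisfies $d(\by,\bc_1)\leq t$ and $d(\by,\bc_2)\leq t$, forcing $d(\bc_1,\bc_2)\leq 2t$ by the triangle inequality, so $\supp(\bc_2)\subseteq\overline{D}$ is achievable). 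Then apply Lemma~\ref{l-a-1Patka} to $\bc_1=\nolla$ and this $\overline{D}$ to obtain an output word $\by\in Y$ with $|\supp(\by)\setminus\overline{D}|\geq\ell-a-1$.

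The key computation is then to bound $d(\by,\bc_2)$ from below and reach a contradiction with $d(\by,\bc_2)\leq t$. Since $w(\by)\leq t=e+\ell$ and $\by$ already spends at least $\ell-a-1$ coordinates \emph{outside} $\overline{D}$ (hence outside $\supp(\bc_2)$), the overlap $|\supp(\by)\cap\supp(\bc_2)|$ is at most $w(\by)-(\ell-a-1)\leq e+\ell-(\ell-a-1)=e+a+1$. Writing $d(\by,\bc_2)=w(\by)+w(\bc_2)-2|\supp(\by)\cap\supp(\bc_2)|$, and using $w(\bc_2)\geq 2e+2a+3$ together with the coordinates of $\by$ lying outside $\supp(\bc_2)$, I would show $d(\by,\bc_2)\geq (\ell-a-1)+(w(\bc_2)-|\supp(\by)\cap\supp(\bc_2)|)$; substituting the bounds should yield $d(\by,\bc_2)\geq t+1$, contradicting $\by\in B_t(\bc_2)$. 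The main obstacle is bookkeeping the three disjoint contributions to $d(\by,\bc_2)$ correctly — the coordinates where $\by$ is $1$ and $\bc_2$ is $0$ outside $\overline{D}$, those inside $\supp(\bc_2)$ where they disagree, and the remaining coordinates of $\supp(\bc_2)$ — and verifying the arithmetic closes exactly at the threshold $2e+2a+2$ rather than a weaker bound.

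A cleaner alternative I would consider is to avoid the contradiction setup and argue directly: the existence of a common output word $\by$ with $d(\by,\bc_1)\leq t$, $d(\by,\bc_2)\leq t$, combined with $|\supp(\by+\bc_1)\setminus\overline{D}|\geq\ell-a-1$ from Lemma~\ref{l-a-1Patka}, pins down how much weight $\by$ can devote to separating $\bc_1$ from $\bc_2$ inside $\overline{D}$, and the triangle-type inequality $d(\bc_1,\bc_2)\leq d(\bc_1,\by)+d(\by,\bc_2)$ refined by subtracting the doubly-counted outside weight gives the bound. Either route reduces to the same weight accounting, so I would present whichever makes the constant $2a+2$ most transparent.
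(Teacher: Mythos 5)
Your proposal is correct and follows essentially the same route as the paper: both apply Lemma~\ref{l-a-1Patka} with $\overline{D}$ containing $\supp(\bc_2)$ (using $w(\bc_2)\leq 2t$) and then perform the identical weight accounting, namely that the overlap $|\supp(\by)\cap\supp(\bc_2)|\leq t-(\ell-a-1)=e+a+1$ forces $d(\by,\bc_2)\geq(\ell-a-1)+w(\bc_2)-(e+a+1)$, which bounds $w(\bc_2)\leq 2e+2a+2$. The only differences are cosmetic — you frame it as a contradiction while the paper derives the bound directly, and you are in fact slightly more careful than the paper in padding $\supp(\bc_2)$ out to a set of size exactly $b=2t$ as the lemma formally requires.
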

\begin{proof}
Let us assume without loss of generality that $\bc_1=\nolla$. Notice that $w(\bc_2)\leq2t$. Choose $\overline{D}=\supp(\bc_2)$. Now, by Lemma \ref{l-a-1Patka}, there exists an output word $\by\in Y$ such that $|\supp(\by)\setminus \overline{D}|\geq\ell-a-1$. Since $d(\bc_2,\by)\leq t$ and $w(\by)\leq t$, we have $t\geq \ell-a-1+(w(\bc_2)-(t-(\ell-a-1)))=\ell-2a-2-e+w(\bc_2)$ and hence, $w(\bc_2)\leq 2e+2a+2$. Therefore, $d(\bc_1,\bc_2)\leq 2e+2a+2$. Since we consider an $e$-error-correcting code, the lower bound follows.
\end{proof}

Together with the two previous lemmas, we can now prove an upper bound for $\LL$, depending on the maximum number $M$ of codewords in an $(e+a)$-radius ball.

\begin{theorem}\label{list(t+2)M}
Let $N\geq V(n,\ell-a-1)+1$, $\ell-1\geq a\geq1$, $b=\left\lceil(2e+2a+2)^{\mathbbm{e}\cdot(e+a+1)!}\right\rceil$, $n\geq (\ell-a-1)^22^b+\ell-a-2$. Moreover, let $C$ be such an $e$-error-correcting code that there are at most $M$ codewords in any $(e+a)$-radius ball. Then $$\LL\leq \max\{(t+1)M,b/(2e+2a+2)\}.$$
\end{theorem}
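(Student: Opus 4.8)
The plan is to reduce the statement to a covering problem and to handle two regimes. If $\LL \le b/(2e+2a+2)$ there is nothing to prove, so I would assume throughout that the list is \emph{large}, i.e.\ $\LL > b/(2e+2a+2)$, and aim to cover $T(Y)$ by at most $t+1$ balls of radius $e+a$. Once such a covering is found, each ball contains at most $M$ codewords by hypothesis, so $\LL \le (t+1)M$, which is the desired bound. The candidate centres would be produced exactly as in Theorem~\ref{tldc}: since $N \ge V(n,\ell-a-1)+1$, the Sauer--Shelah lemma (Theorem~\ref{Sauer-Shelah}) yields a shattered set $S$ of $\ell-a$ coordinates, and for each codeword $\bc \in T(Y)$ one can pick an output word $\by \in Y$ whose restriction to $S$ equals the complement of $\bc$ on $S$; setting $\beta_\bc = \by + \s$ with $\supp(\s)=S$ gives $d(\bc,\beta_\bc) \le t-(\ell-a) = e+a$. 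The crucial feature is that $\beta_\bc$ depends only on the restriction $\bc|_S$, so any two codewords agreeing on $S$ fall in a common ball $B_{e+a}(\beta)$. Hence the number of balls needed is at most the number $p$ of distinct patterns $\{\bc|_S : \bc \in T(Y)\}$, and the whole theorem reduces to showing that $p \le t+1$ in the large-list regime.

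To bound $p$, I would exploit that the codewords in $T(Y)$ are tightly packed. By Lemma~\ref{dist2l-2a-2} every pair of codewords is at Hamming distance in $[2e+1,2e+2a+2]$, so after translating one codeword to $\nolla$ each codeword has support of size at most $k:=2e+2a+2$. Choosing one representative codeword per $S$-pattern gives $p$ codewords whose supports form a family of $p$ sets of size $\le k$. This is where the enormous value $b=\lceil (2e+2a+2)^{\mathbbm{e}\cdot(e+a+1)!}\rceil$ enters: it is calibrated so that the sunflower (Erd\H{o}s--Rado) threshold for sets of size $k$ sits below $b/(2e+2a+2)$. Consequently, if $p$ (and hence $\LL$) exceeds $b/(2e+2a+2)$, the support family must contain a large sunflower, with common core $\kappa$ and many pairwise disjoint petals. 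I would then feed this sunflower into Lemma~\ref{l-a-1Patka}: taking $\overline{D}$ to contain the core together with as many petals as fit into $b$ coordinates, the lemma produces an output word $\by$ differing from the core codeword in at least $\ell-a-1$ positions \emph{outside} $\overline{D}$. Since every codeword in $T(Y)$ is supported inside $\overline{D}$, these outside disagreements are charged against $\by$'s distance to \emph{every} sunflower codeword simultaneously, and combining this with the requirement $d(\by,\bc)\le t$ for all of them forces $\by$, restricted to $\overline{D}$, to lie within $e+a+1$ of each of the many disjoint petals at once --- which is impossible once the number of petals is large. This contradiction bounds the sunflower, hence $p$, hence $\LL$, by $b/(2e+2a+2)$, contradicting the large-list assumption; so in that regime $p \le t+1$, as required.

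The main obstacle is the final quantitative step: making the ``impossible common near-point for many disjoint petals'' argument precise and matching it exactly to the threshold encoded in $b$. The petals cannot be assumed uniformly large --- the distance constraints only force that at most one petal is smaller than roughly $e+1$ --- so the counting has to track how the budget $t$ is split between the outside disagreements, the core, and each individual petal, and must rule out the borderline petal sizes between $e+1$ and $e+a+1$ where a single output word can still sit close to two codewords at once. I expect this to require an iterated or weighted application of the sunflower lemma (which is precisely why the factorial $\mathbbm{e}\cdot(e+a+1)!$ appears in the exponent defining $b$), rather than a single application. The clustered side of the dichotomy, by contrast, is routine once $p \le t+1$ is in hand: the shift-centres give at most $t+1$ balls of radius $e+a$, each capturing at most $M$ codewords, yielding $\LL \le (t+1)M$.
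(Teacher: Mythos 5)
Your opening reduction (the dichotomy, and the Sauer--Shelah construction of centres $\beta_\bc = \by + \s$ with $d(\bc,\beta_\bc) \le e+a$ depending only on $\bc|_S$) is sound, but your proof then hinges entirely on the claim that the number $p$ of distinct $S$-patterns among codewords of $T(Y)$ is at most $t+1$, and nothing in the proposal delivers this. The sunflower machinery, even if the ``many disjoint petals cannot all be near one output word'' step were made rigorous (you yourself flag the borderline petal sizes in $[e+1,e+a+1]$ as unresolved, and the assertion that ``every codeword in $T(Y)$ is supported inside $\overline{D}$'' is false for codewords whose petals did not fit into $\overline{D}$), would only cap $p$ at the sunflower threshold, a quantity of order $b/(2e+2a+2)$, which is astronomically larger than $t+1$. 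From $p \le b/(2e+2a+2)$ you can conclude only $\LL \le pM \le Mb/(2e+2a+2)$, which is \emph{not} the claimed bound $\max\{(t+1)M,\, b/(2e+2a+2)\}$ --- note that the second term there carries no factor $M$. The closing inference (``bounds the sunflower, hence $p$, hence $\LL$, by $b/(2e+2a+2)$, contradicting the large-list assumption; so in that regime $p \le t+1$'') conflates $p$ with $\LL$ (each pattern class may contain up to $M$ codewords, so bounding $p$ does not bound $\LL$ by the same quantity), and is in any case a non sequitur: a bound of $b/(2e+2a+2)$ on $p$ provides no route whatsoever to $p \le t+1$.

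For contrast, the paper obtains the $t+1$ from a completely different source. It first runs an iterative argument --- this is what the factorial exponent $\mathbbm{e}\cdot(e+a+1)!$ in $b$ is actually calibrated for --- using Lemma~\ref{l-a-1Patka} to produce a single \emph{central word} $\bw$ satisfying $d(\bw,\bc) \le e+a+1$ for every $\bc \in T(Y)$: at each stage one has an auxiliary set $C_i \subseteq T(Y)$, counts its minimal-weight central words, and for each such word that fails to be central for all of $T(Y)$ one adds a witnessing codeword, the growth being controlled so that $|C_i| \le (2e+2a+2)^{\mathbbm{e}\cdot(e+a+1)!-1} \le b/(2e+2a+2)$ throughout, which keeps Lemma~\ref{l-a-1Patka} applicable. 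After translating $\bw$ to $\nolla$, every codeword has weight at most $e+a+1$; then any output word $\by$ of weight at least $\ell-a$ (which exists because $N \ge V(n,\ell-a-1)+1$) forces every codeword of weight exactly $e+a+1$ to meet the support of a truncation $\by_s$ of $\by$ with $w(\by_s) \le t+1$, whence $T(Y) \subseteq B_{e+a}(\nolla) \cup \bigcup_{i\in\supp(\by_s)} B_{e+a}(\e_i)$. The $t+1$ in the bound is thus the weight of a single output word, not a count of shattered-set patterns; your pattern-counting reduction has no analogous mechanism, so the gap is structural rather than a matter of a missing calculation.
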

\begin{proof}
Let us denote $T(Y)=\{\bc_1,\bc_2,\dots,\bc_\LL\}$. If $\LL\leq b/(2e+2a+2)$, then the claim follows. Assume now that $\LL> b/(2e+2a+2)$. Assume then, without loss of generality, that $\bc_1=\nolla$. By Lemma \ref{dist2l-2a-2}, we have $w(\bc_i)\leq 2e+2a+2$ for each $i\in[1,\LL]$. 

Let $Z$ be a subset of $\F^n$. We say that $\bw \in \F^n$ is a \emph{central word} with respect to $Z$ if $d(\bw, \bz) \leq e+a+1$ for all $\bz \in Z$. Moreover, the set of all central words with respect to $Z$ is denoted by $W_Z$. In what follows, we first show a useful observation stating that if a subset $C_S \subseteq T(Y)$ is such that $\nolla \in C_S$ and $|C_S| \leq b/(2e+2a+2)$, then there exists a word $\bw \in \F^n$ such that $w(\bw) \leq e+a+1$ and $d(\bw, \bc) \leq e+a+1$ for any $\bc \in C_S$, i.e., $\bw \in W_{C_S} \neq \emptyset$. Since $w(\bc) \leq 2e+2a+2$ for any $\bc  \in C_S$, we have
\[
\left| \bigcup_{\bc\in C_S} \supp(\bc)\right| \leq (2e+2a+2) \cdot \frac{b}{2e+2a+2} = b \text.
\]
Therefore, by Lemma~\ref{l-a-1Patka}, there exists an output word $\by\in Y$ such that 
\[
|\supp(\by)\setminus \bigcup_{\bc\in C_S}\supp(\bc)|\geq \ell-1-a \text.
\]
Let then $\bw \in \F^n$ be such that
\[
\supp(\bw) = \supp(\by) \cap \bigcup_{\bc\in C_S}\supp(\bc) \text.
\]
Now, as $d(\by, \bc) \leq t = e + \ell$ for any $\bc \in C_S$, we have $d(\bw, \bc) \leq e+a+1$. Moreover, $w(\bw) \leq e+a+1$ since $\nolla \in C_S$. 


 
In what follows, we show using an iterative approach that there exists a central word $\bw \in \F^n$ with respect to $T(Y)$. We begin the iterative process by considering a subset $C_0 = \{\bc_1,\bc_2\} \subseteq T(Y)$ such that $\bc_1 = \nolla$ and $w(\bc_2) = e+a+1+p_1$ with $1 \leq p_1 \leq e+a+1$. Indeed, we may assume that such a codeword $\bc_2$ exists since otherwise we are immediately done due to $\bw = \nolla$ being the searched central word. Observe that the weight of a central word $\bw \in W_{C_0}$ satisfies $p_1\leq w(\bw)\leq e+a+1$. In particular, there are exactly $\binom{e+a+1+p_1}{p_1}$ central words $\bw\in W_{C_0}$ of weight $p_1$. For each such central word $\bw$, we may assume that there exists a codeword $\bc \in T(Y)$ such that $d(\bw, \bc) > e+a+1$ as otherwise $\bw \in W_{T(Y)}$ and we are done. Now we form a new code $C_1$ by adding such a codeword $\bc$ for each $\bw \in W_{C_0}$ with $w(\bw) = p_1$. The number of added codewords is at most
\[
\binom{e+a+1+p_1}{p_1}\leq (2e+2a+2)^{p_1}-1 \text.
\]
Therefore, we have $|C_1| \leq (2e+2a+2)^{p_1}+1$. Notice that there are no central words in $W_{C_1}$ of weight at most $p_1$. Furthermore, by the previous observation for $W_{C_S}$, $W_{C_1}$ is nonempty as $|C_1| \leq (2e+2a+2)^{p_1}+1 \leq b/(2e+2a+2)$.


Assume that $p_2>p_1$ is now the smallest weight of a central word in $W_{C_1}$. Let $\bw$ be a central word with respect to $C_1$ of weight $p_2$. The support of $\bw$ is a subset of $\bigcup_{\bc\in C_1}\supp(\bc)$ since otherwise there exists a central word $\bw'\in W_{C_1}$ with $w(\bw')<w(\bw)$ (a contradiction). Therefore, as ($\bc_1 = \nolla \in C_1$ implies) $\left| \bigcup_{\bc\in C_1}\supp(\bc)\right| \leq (2e+2a+2)(2e+2a+2)^{p_1}$, the number of central words of weight $p_2$ in $W_{C_1}$ is at most
\begin{align*}
	&\binom{(2e+2a+2)(2e+2a+2)^{p_1}}{p_2}\\
	=&\binom{(2e+2a+2)^{p_1+1}}{p_2}\\
	\leq& \frac{(2e+2a+2)^{p_1p_2+p_2}}{2} \text.
\end{align*}
Again, for each central word $\bw \in W_{C_1}$ of weight $p_2$, there exists a codeword $\bc \in T(Y)$ such that $d(\bw, \bc) > e+a+1$. Now we form a new code $C_2$ by adding such a codeword $\bc$ for each $\bw \in W_{C_1}$ with $w(\bw) = p_2$. Thus, we have $|C_2|\leq (2e+2a+2)^{p_1}+1+(2e+2a+2)^{p_1p_2+p_2}/2\leq (2e+2a+2)^{p_1p_2+p_2}/2+(2e+2a+2)^{p_1p_2+p_2}/2=(2e+2a+2)^{p_1p_2+p_2}$. 

The process can be iteratively continued by forming a new code $C_i$ based on the previous code $C_{i-1}$, until we have reached $p_i = e+a+1$ or we have already found a central word with respect to $T(Y)$. In what follows, the iterative process is explained in more detail:
\begin{itemize}
	\item Assume that $p_{i}>p_{i-1}$ is the smallest weight of a central word in $W_{C_{i-1}}$.
	\item Assume that  $|C_{i-1}|\leq (2e+2a+2)^{\sum_{h=1}^{i-1}\prod_{k=h}^{i-1} p_k}$. By the observation above for $W_{C_S}$, this implies that $W_{C_{i-1}}$ is nonempty as $|C_{i-1}|\leq b/(2e+2a+2)$ (see also Equation~\eqref{Cj_upper_bound}). Furthermore, we have $|\bigcup_{\bc\in C_{i-1}}\supp(\bc)|\leq (2e+2a+2)|C_{i-1}| = (2e+2a+2)^{(\sum_{h=1}^{i-1}\prod_{k=h}^{i-1} p_k)+1}$.
	\item Let $\bw \in W_{C_{i-1}}$ be of weight $p_i$. As previously, the support of $\bw$ is a subset of $\bigcup_{\bc\in C_{i-1}}\supp(\bc)$ since otherwise there exists a central word $\bw'\in W_{C_{i-1}}$ with $w(\bw')<w(\bw)$ (a contradiction). Therefore, as $|\bigcup_{\bc\in C_{i-1}}\supp(\bc)|\leq (2e+2a+2)|C_{i-1}| = (2e+2a+2)^{(\sum_{h=1}^{i-1}\prod_{k=h}^{i-1} p_k)+1}$, the number of central words in $W_{C_{i-1}}$ of weight $p_i$ is at most $\binom{(2e+2a+2)^{(\sum_{h=1}^{i-1}\prod_{k=h}^{i-1} p_k)+1}}{p_i}$.
	\item Again, for each central word $\bw \in W_{C_{i-1}}$ of weight $p_i$, there exists a codeword $\bc \in T(Y)$ such that $d(\bw, \bc) > e+a+1$. Now we form a new code $C_i$ by adding such a codeword $\bc$ for each $\bw \in W_{C_{i-1}}$ with $w(\bw) = p_i$. Thus, we have
	\begin{align}\label{C_iApprox} 
		|C_i| \leq& (2e+2a+2)^{\sum_{h=1}^{i-1}\prod_{k=h}^{i-1} p_k}
		%
		+ \binom{(2e+2a+2)^{1+(\sum_{h=1}^{i-1}\prod_{k=h}^{i-1} p_k)}}{p_i}\nonumber\\
		\leq& (2e+2a+2)^{\sum_{h=1}^{i}\prod_{k=h}^{i} p_k}/2
		%
		+(2e+2a+2)^{\sum_{h=1}^{i}\prod_{k=h}^{i} p_k}/2\\
		=&(2e+2a+2)^{\sum_{h=1}^{i}\prod_{k=h}^{i} p_k}\nonumber.
	\end{align} 
\end{itemize}
 
Notice that since $1\leq p_1< p_2<\cdots < p_i\leq e+a+1$, we reach $p_j = e+a+1$ at some point (or the central word $\bw$ with respect to $T(Y)$ has already been found in an earlier step). By~\eqref{C_iApprox}, we have 
\begin{equation} \label{Cj_upper_bound}
\begin{split}
	|C_j| &\leq (2e+2a+2)^{\sum_{h=1}^{j}\prod_{k=h}^{j} p_k} \\ &\leq(2e+2a+2)^{\sum_{h=1}^{e+a+1}\prod_{k=h}^{e+a+1} k} \\ &= (2e+2a+2)^{\sum_{h=0}^{e+a}(e+a+1)!/h!}	\\ &\leq (2e+2a+2)^{\mathbbm{e}\cdot(e+a+1)!-1}\\ 
	&\leq \frac{b}{2e+2a+2} 
\end{split}
\end{equation}
since $\sum_{h=0}^{e+a+1}1/h!< \sum_{h=0}^\infty1/h!=\mathbbm{e}$. Therefore, by the previous observation, $W_{C_j}$ is nonempty. Thus, in conclusion, there exists a central word $\bw \in \F^n$ with respect to $T(Y)$.

Let us now translate the Hamming space so that $\bw=\nolla$ and thus, $T(Y)\subseteq B_{e+a+1}(\nolla)$. In other words, we have  $w(\bc_i)\leq e+a+1$ for each $i$. Recall that we have $N\geq V(n,\ell-a-1)+1$. Thus, there exists a word $\by\in Y$ such that $w(\by)\geq \ell-a$. Moreover, since $d(\bc_i,\by)\leq t$ for each $i$, we have $w(\by)\leq t+e+a+1$. The proof now divides into the following three cases depending on the weight of $w(\by)$. 

(i) Assume first that $\ell-a \leq w(\by) \leq t$. Now the support of each $\bc_i \in T(Y)$ of weight $e+a+1$ intersects with $\supp(\by)$ since otherwise $d(\by,\bc_i) \geq (e+a+1) + (\ell-a) = t+1$ (a contradiction). Hence, 
\[
T(Y)\subseteq B_{e+a}(\nolla) \cup\bigcup_{i\in \supp(\by)}B_{e+a}(\e_i) \text.
\]

(ii) Assume then that $t \leq w(\by) \leq t+e+a$. Let $\by_s$ be a word such that $\supp(\by_s) \subseteq \supp(\by)$ and $w(\by_s) = t$. Now the support of each $\bc_i \in T(Y)$ of weight $e+a+1$ intersects with $\supp(\by_s)$ since otherwise $d(\by,\bc_i) \geq t+1$ (a contradiction). Hence, 
\[
T(Y)\subseteq B_{e+a}(\nolla) \cup\bigcup_{i\in \supp(\by_s)}B_{e+a}(\e_i) \text.
\]

(iii) Assume finally that $w(\by) = t+e+a+1$. Then we have $w(\bc_i) = e+a+1$ for any $\bc_i \in T(Y)$ as $d(\by,\bc_i) \leq t$. Let $\by_s$ be a word such that $\supp(\by_s) \subseteq \supp(\by)$ and $w(\by_s) = t+1$. Again the support of each $\bc_i \in T(Y)$ of weight $e+a+1$ intersects with $\supp(\by_s)$. Observing that $w(\bc_i) = e+a+1$ for any $\bc_i \in T(Y)$ as $d(\by,\bc_i) \leq t$, we have 
\[
T(Y)\subseteq \bigcup_{i\in \supp(\by_s)}B_{e+a}(\e_i) \text.
\]

Based on the cases~(i)--(iii), the set of codewords $T(Y)$ is always contained in a union of $t+1$ balls of radius $e+a$. Thus, as each ball of radius $e+a$ has at most $M$ codewords, we obtain that $\LL\leq (t+1)M$.
\end{proof}


\section{Decoding with majority algorithm}

In this section, we focus on decoding the transmitted word $\bx = (x_1, x_2, \ldots, x_n) \in C$ based on the set $Y$ of the output words using a majority algorithm. For the rest of the section, we assume that each word of $B_t(\bx)$ is outputted from a channel with equal probability. Here we actually allow --- unlike elsewhere in the paper --- some of the output words $\by_i$ to be equal. Probabilistic set-up has been studied for different error types, for example, in~\cite{batu2004reconstructing, viswanathan2008improved}. Our approach differs from these articles in that we have given an upper limit for the possible number of errors in any single channel. This allows us to have a verifiability property in Theorem \ref{Thm_majority_algorithm} unlike, for example, in~\cite{batu2004reconstructing, viswanathan2008improved}. With the verifiability property we mean that although we cannot be certain whether we can deduce the transmitted word correctly before looking at the output words, we can sometimes deduce the transmitted word with complete certainty after seeing the output words. In other words, some output word sets have properties which allow us to know the transmitted word with certainty. 

 First we describe the (well-known) majority algorithm using similar terminology and notation as in~\cite{yaakobi2018uncertainty}. The coordinates of the output words $\by_j \in Y$ are denoted by $\by_j = (y_{j,1}, y_{j,2}, \ldots, y_{j,n})$. Furthermore, the number of zeros and ones in the $i$th coordinates of the output words are respectively denoted by
\[
m_{i,0} = |\{j \in \{1, 2, \ldots, N\} \mid y_{j,i} = 0\}|
\]
and $m_{i,1} = N - m_{i,0}$.
Based on $Y$, the \emph{majority algorithm} outputs the word $\bz_Y = \bz = (z_1, z_2, \ldots, z_n) \in \F^n$, where
\[
z_i = 
\begin{cases}
	0 & \text{ if } m_{i,0} > m_{i,1} \\
	? & \text{ if } m_{i,0} = m_{i,1} \\
	1 & \text{ if } m_{i,0} < m_{i,1} \text.
\end{cases} 
\]
In other words, for each coordinate of $\bz$, we choose $?$, $0$ or $1$ based on whether the numbers of $0$s and $1$s are equal or which ones occur more frequently. Observe that the coordinate $z_i$ outputted by the majority algorithm is equal to $x_i$ if and only if at most $\lceil N/2 \rceil -1$ errors occur in the $i$th coordinates of $Y$. Observe that the complexity of the majority algorithm is $\Theta(Nn)$ and since reading all the output words takes $\Theta(Nn)$ time, majority algorithm has optimal time complexity.

In~\cite[Example~1]{yaakobi2018uncertainty}, it is shown that the majority algorithm does not always output the correct transmitted word $\bx$ when the number of channels $N$ is equal to the value in Theorem \ref{L=1} even though we take the $e$-error-correction capability of $C$ into account. In~\cite{Uusi_Maria_Abu-Sini}, a modification of the majority algorithm is presented for decoding and it is shown that if the number of channels satisfies the bound of Theorem~\ref{L=1}, then the output word of the algorithm belongs to $B_e(\bx)$ and can be uniquely decoded to $\bx$. In what follows, we demonstrate that \emph{with high probability} the word $\bz$ is within distance $e$ from $\bx$ with significantly smaller number of channels (than in~\cite{Uusi_Maria_Abu-Sini}). For example, the Monte Carlo simulations for the values $t=5$ and $n=28$ are illustrated in Table~\ref{Table_probabilities}.

For this purpose, we first consider a variant of the so called \emph{multiple birthday problem}; the multiple birthday problem has been studied, for example, in~\cite{Suzuki_et_al} and \cite{Kounavis_et_al}. Here we assume that $s$, $q$, $n$ and $t$ are integers satisfying $2 \leq s \leq q$ and $0 \leq t \leq n$. A \emph{throw} consists of placing $t$ balls randomly into $n$ buckets in such a way that each ball lands in a different bucket and each subset of the buckets of size $t$ for a throw has an equal probability. Denote then by $C_t(n,q,s)$ the event that after $q$ throws, at least one bucket contains at least $s$ balls. Observe that if $t=1$, then we are actually considering the multiple birthday problem; furthermore, if ($t =1$ and) $s=2$, then the case is the (regular) birthday problem. Furthermore, by $Pr[C_t(n,q,s)]$ we denote the probability that there exist at least $s$ balls in a bucket. In the following theorem, we present (based on $Pr[C_t(n,q,s)]$) a lower bound on the probability that the output $\bz$ of the majority algorithm is equal to $\bx$.
\begin{theorem} \label{Thm_probability_majority}
	Let $\bx$ be the transmitted codeword of $C \subseteq \F^n$ and $N$ the number of channels. The probability that the output $\bz$ of the majority algorithm is equal to $\bx$ is at least
	\[
	1-Pr[C_t(n,N,\lceil N/2 \rceil)] \text.
	\]
\end{theorem}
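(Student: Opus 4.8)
The plan is to connect the success of the majority algorithm directly to the balls-in-buckets model just defined. The key observation (already stated in the excerpt) is that the majority algorithm recovers the coordinate $x_i$ correctly precisely when at most $\lceil N/2\rceil - 1$ of the $N$ channels flip the $i$th coordinate. Hence the output $\bz$ equals $\bx$ if and only if \emph{no} coordinate $i \in [1,n]$ is in error in at least $\lceil N/2\rceil$ of the channels. My strategy is therefore to identify ``coordinate $i$ is flipped by channel $j$'' with ``ball landing in bucket $i$ on throw $j$'' and to show that the bad event for the majority algorithm is exactly the event $C_t(n,N,\lceil N/2\rceil)$.

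First I would set up the identification carefully. Since each output word $\by_j$ is drawn uniformly from $B_t(\bx)$, the set of coordinates in which $\by_j$ differs from $\bx$ is a subset of $[1,n]$ of size \emph{at most} $t$. I would observe that the error pattern of channel $j$ is $\supp(\by_j + \bx)$, and that a single throw of $t$ balls into $n$ buckets (with all balls in distinct buckets, each $t$-subset equiprobable) models a channel that makes \emph{exactly} $t$ errors with each $t$-subset of flipped coordinates equally likely. Thus $N$ independent throws correspond to $N$ channels. Under this correspondence, ``bucket $i$ receives at least $\lceil N/2\rceil$ balls'' means ``coordinate $i$ is flipped in at least $\lceil N/2\rceil$ channels,'' which is exactly the event that the majority algorithm fails on coordinate $i$. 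Consequently the probability that the majority output differs from $\bx$ in \emph{some} coordinate is at most $Pr[C_t(n,N,\lceil N/2\rceil)]$, and the stated bound $1 - Pr[C_t(n,N,\lceil N/2\rceil)]$ follows by complementation.

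The remaining subtlety is that the uniform distribution on $B_t(\bx)$ allows an output word to differ from $\bx$ in \emph{fewer} than $t$ coordinates, whereas a throw places exactly $t$ balls. I would handle this by arguing that fewer errors only help the algorithm: if I couple each channel to a throw by first drawing the genuine (at most $t$) error set and then arbitrarily adding phantom balls to reach exactly $t$, the bucket counts can only increase, so the event $C_t(n,N,\lceil N/2\rceil)$ under the exactly-$t$ model \emph{dominates} the true failure event. This gives the upper bound on the failure probability in the direction we need, i.e. $Pr[\text{majority fails}] \le Pr[C_t(n,N,\lceil N/2\rceil)]$, which is all the theorem claims (a lower bound on success). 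The main obstacle, then, is not the combinatorics of the birthday-type event itself but making this coupling/domination argument clean, since $Pr[C_t(n,N,s)]$ was defined for the exactly-$t$ throw model and I must justify that using it yields a valid \emph{lower} bound on the success probability rather than an exact expression.

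Once the reduction is in place, the proof is essentially complete: the per-coordinate characterization of majority success, the ball-bucket identification, and the monotonicity coupling together yield the claim. I would present it in that order, flagging that the inequality (rather than equality) in the bound is precisely what accommodates the gap between the $B_t(\bx)$ distribution and the exactly-$t$ throw distribution.
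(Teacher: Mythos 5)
Your proposal is correct and takes essentially the same route as the paper: the paper also reduces to the exactly-$t$-errors model, identifies its per-coordinate failure event with the balls-in-buckets event $C_t(n,N,\lceil N/2 \rceil)$, and handles the at-most-$t$ versus exactly-$t$ gap by the monotonicity $P_1 \le P_2$, which it merely asserts as immediate where you supply an explicit coupling. One small caution on your extra detail: for the coupling to reproduce the uniform exactly-$t$ throw distribution, the phantom balls must be added uniformly at random among the unoccupied buckets rather than ``arbitrarily,'' since otherwise the augmented configuration need not have the distribution under which $Pr[C_t(n,N,\lceil N/2 \rceil)]$ was defined.
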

\begin{proof}
	Let $P_1$ denote the probability that a coordinate of the outputs $Y$ contains at least $\lceil N/2 \rceil$ errors when \emph{at most} $t$ errors occur in each channel and $P_2$ the probability that a coordinate of the outputs $Y$ contains at least $\lceil N/2 \rceil$ errors when \emph{exactly} $t$ errors occur in each channel. It is immediate that $P_1 \leq P_2$.  Observe that the outputs $Y$ of the $N$ channels in the case of the probability~$P_2$ can be represented as the above described variant of the multiple birthday problem as follows: an output $\by_i \in Y$ with exactly $t$ errors can be considered as a throw of $t$ balls to $n$ buckets, and there are $N$ throws in total. Therefore, $P_2 = Pr[C_t(n,N,\lceil N/2 \rceil)]$ and the claim follows.
\end{proof}

In order to obtain a lower bound on the probability $1-Pr[C_t(n,N,\lceil N/2 \rceil)]$, we require an upper bound on $Pr[C_t(n,N,\lceil N/2 \rceil)]$. In the following lemma, we present such an upper bound loosely based on a recursive idea introduced in~\cite{Suzuki_et_al} for computing the exact probability of the multiple birthday problem.
\begin{lemma} \label{Lemma_MBP_variant_estimation}
	Let $s$, $q$, $n$ and $t$ be integers satisfying $2 \leq s \leq q$ and $0 \leq t \leq n$. (i) Now the probability $Pr[C_t(n,q,s)]$ is at most
	
	{\footnotesize \[
		\frac{t^s}{n^{s-1}} \sum_{i=s}^q \left( \binom{i-1}{s-1} \left( \frac{n-t}{n} \right)^{i-s} \left(1 - Pr[C_{t-1}(n-1,i-1,s)] \right) \right) \text,
		\]}
	where $Pr[C_0(n,q,s)] = 0$ and $Pr[C_t(n,q,s)] = 0$ if $q < s$. (ii) Furthermore, we obtain that
	\[
	Pr[C_t(n,q,s)] \leq \frac{t^s}{n^{s-1}} \binom{q}{s} \text.
	\]
\end{lemma}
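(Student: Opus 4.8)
The plan is to prove part~(ii) as an easy consequence of part~(i), so the real work lies in establishing the recursive estimate in part~(i). Granting (i), I would bound the two harmless factors $1 - Pr[C_{t-1}(n-1,i-1,s)] \le 1$ and $((n-t)/n)^{i-s} \le 1$, which reduces the sum to $Pr[C_t(n,q,s)] \le \frac{t^s}{n^{s-1}} \sum_{i=s}^q \binom{i-1}{s-1}$; the hockey-stick identity $\sum_{i=s}^q \binom{i-1}{s-1} = \binom{q}{s}$ then yields exactly the bound in~(ii).

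For part~(i), I would decompose the event $C_t(n,q,s)$ according to the \emph{first} throw at which some bucket accumulates $s$ balls. For each throw index $i$ with $s \le i \le q$ and each bucket $b$, let $E_{i,b}$ be the event that bucket $b$ is hit on throw $i$, that bucket $b$ has received exactly $s-1$ balls during throws $1, \ldots, i-1$, and that no bucket reaches $s$ balls within the first $i-1$ throws. If $C_t(n,q,s)$ occurs, then taking $i$ minimal with a bucket of size $s$, and $b$ any bucket that reaches $s$ at that moment, shows $E_{i,b}$ holds (each throw adds at most one ball to a bucket, so $b$ held exactly $s-1$ beforehand). Hence $C_t(n,q,s) \subseteq \bigcup_{i,b} E_{i,b}$, giving $Pr[C_t(n,q,s)] \le \sum_{i=s}^q \sum_{b=1}^n Pr[E_{i,b}] = n \sum_{i=s}^q Pr[E_{i,1}]$ by symmetry over the buckets. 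To factor $Pr[E_{i,1}]$, summing over the choice of the $s-1$ throws among the first $i-1$ on which bucket~$1$ is hit contributes $\binom{i-1}{s-1}$; since each throw hits bucket~$1$ independently with probability $t/n$, the hit/miss pattern on the $s$ relevant throws (the chosen $s-1$ together with throw $i$) and the $i-s$ remaining throws contributes $(t/n)^s ((n-t)/n)^{i-s}$. Multiplying the union-bound factor $n$ by $(t/n)^s$ already produces the prefactor $t^s/n^{s-1}$.

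The main obstacle is bounding the conditional probability, given bucket~$1$'s hitting pattern, that \emph{no} bucket reaches $s$ balls during the first $i-1$ throws; I would show this is at most $1 - Pr[C_{t-1}(n-1,i-1,s)]$. Conditioned on the pattern of bucket~$1$, the remaining balls of each of the first $i-1$ throws land uniformly among the other $n-1$ buckets: a throw that hits bucket~$1$ deposits $t-1$ balls there, while a throw that misses deposits $t$ balls there. The key is a coupling argument that deletes one ball from each missing throw; since removing a ball can only lower bucket counts, it can only turn a collision into a non-collision, so that on the coupled space $\{$original has no size-$s$ bucket$\} \subseteq \{$reduced has no size-$s$ bucket$\}$ and the no-collision probability only increases under deletion. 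After deletion every one of the $i-1$ throws places $t-1$ balls uniformly into $n-1$ buckets, which is precisely the experiment defining $C_{t-1}(n-1,i-1,s)$, and bucket~$1$ itself never reaches $s$ because it holds only $s-1$ balls. Thus the conditional no-collision probability is at most $1 - Pr[C_{t-1}(n-1,i-1,s)]$.

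Assembling these factors gives $Pr[E_{i,1}] \le \binom{i-1}{s-1}(t/n)^s((n-t)/n)^{i-s}(1 - Pr[C_{t-1}(n-1,i-1,s)])$, and summing over $i$ after multiplying by $n$ yields part~(i). Finally, the base cases $Pr[C_0(n,q,s)] = 0$ and $Pr[C_t(n,q,s)] = 0$ for $q < s$ are immediate: with no balls per throw no bucket is ever occupied, and with fewer than $s$ throws a bucket can hold at most $q < s$ balls, so in both cases a size-$s$ bucket is impossible.
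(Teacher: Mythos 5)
Your proof is correct and takes essentially the same route as the paper's: both decompose the event according to the first throw $i$ at which some bucket reaches $s$ balls, obtain the factors $n$, $\binom{i-1}{s-1}$, $(t/n)^s$ and $\left((n-t)/n\right)^{i-s}$ in the same way, and bound the remaining conditional no-collision probability by $1 - Pr[C_{t-1}(n-1,i-1,s)]$. Your explicit ball-deletion coupling is simply a rigorous rendering of the monotonicity step that the paper states informally (``at least $t-1$ balls are thrown $i-1$ times into the $n-1$ buckets of $B'$''), so the two arguments coincide in substance.
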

\begin{proof}
	(i) Observe first that we clearly have $Pr[C_0(n,q,s)] = 0$. Let then $i$ be an integer such that $s \leq i \leq q$. Denote by $C_t(n,q,s,i)$ the event that after the $i$th throw of $t$ balls there exist first time at least $s$ balls in a bucket; notice that after the $i$th throw it is possible that $s$ balls appear in multiple buckets. Using this notation, we have
	\begin{equation} \label{Eq_recursive_formula}
		Pr[C_t(n,q,s)] = \sum_{i=s}^q Pr[C_t(n,q,s,i)] \text.
	\end{equation}
	The probability $Pr[C_t(n,q,s,i)]$ can be calculated based on the following facts:
	\begin{itemize}
		\item[(i)] Let $B$ be one of the buckets that first attains $s$ balls. Clearly, the bucket $B$ can be chosen in $n$ ways.
		\item[(ii)] The $s-1$ throws placing balls into $B$ before the $i$th throw can be chosen from the previous $i-1$ throws in $\binom{i-1}{s-1}$ ways. 
		\item[(iii)] As the probability that a ball of a single throw lands into $B$ is $\binom{n-1}{t-1} / \binom{n}{t}$, the probability of the event that the $s$ selected throws put balls into $B$ is equal to
		\[
		\left( \frac{\binom{n-1}{t-1}}{\binom{n}{t}} \right)^s = \left( \frac{t}{n} \right)^s \text.
		\]
		\item[(iv)] As the probability that no ball of a single throw lands into $B$ is $\binom{n-1}{t} / \binom{n}{t}$, the probability of the event that no other throw (than the $s$ selected ones) puts a ball into $B$ is equal to
		\[
		\left( \frac{\binom{n-1}{t}}{\binom{n}{t}} \right)^{i-s} = \left( \frac{n-t}{n} \right)^{i-s} \text.
		\]
		\item[(v)] Finally, let $B'$ denote the set of $n-1$ buckets other than $B$ and $P_i$ denote the probability that no bucket in $B'$ contains at least $s$ balls after the first $i-1$ throws with the conditional assumption that the events of (iii) and (iv) occur. Observe that if a ball of a throw lands into $B$, then the throw puts $t-1$ balls into the buckets of $B'$, and otherwise $t$ balls land into $B'$.
		
	\end{itemize}
	Thus, in conclusion, we have 
	\[
	Pr[C_t(n,q,s,i)] = n \cdot \binom{i-1}{s-1} \cdot \left( \frac{t}{n} \right)^s \cdot \left( \frac{n-t}{n} \right)^{i-s} \cdot P_i \text.
	\]
	Therefore, by~\eqref{Eq_recursive_formula}, we obtain that $Pr[C_t(n,q,s)]$ is equal to 
	\[
	\frac{t^s}{n^{s-1}} \sum_{i=s}^q \left( P_i \cdot \binom{i-1}{s-1} \left( \frac{n-t}{n} \right)^{i-s} \right)  \text.
	\]
	Finally, notice that $P_i$ is equal to the probability that no bucket in $B'$ contains at least $s$ balls after $s-1$ throws with $t-1$ balls and $i-s$ throws with $t$ balls have been performed in the buckets of $B'$ (corresponding to the events of (iii) and (iv), respectively). Therefore, we obtain that that $P_i \leq 1 - Pr[C_{t-1}(n-1,i-1,s)]$ since at least $t-1$ balls are thrown $i-1$ times into the $n-1$ buckets of $B'$ (by the observation in~(v)). Hence, the claim immediately follows.
	
	(ii) For the second upper bound, we first notice that by the so called \emph{hockey stick identity} for the binomial coefficients, we have
	\[
	\sum_{i=s}^q \binom{i-1}{s-1} = \binom{q}{s} \text.
	\]
	Therefore, as $(n-t)/n \leq 1$ and the probability $P_i \leq 1$, the second claim immediately follows. 	
\end{proof}

In the following theorem, the upper bound~(ii) of the previous lemma is applied to estimate the probability in Theorem~\ref{Thm_probability_majority}. Observe that according to the second claim of the theorem, the probability that the majority algorithm outputs the transmitted word is as close to one as required when the rather weak condition $n > 2t \cdot \mathbbm{e}$, where $\mathbbm{e}$ is the Napier's constant, is satisfied and $N$ is large enough.
\begin{theorem} \label{Thm_majority_probablity_Suzuki}
	Let $C$ be an $e$-error-correcting code and $\bx$ the transmitted word of $C$. The probability that the output $\bz$ of the majority algorithm is equal to the transmitted word $\bx$ is at least
	\[
	1-\frac{t^{\lceil N/2 \rceil}}{n^{\lceil N/2 \rceil-1}}\binom{N}{\lceil N/2 \rceil} \text.
	\]
	Furthermore, if $n > 2t \cdot \mathbbm{e}$, then 
	\[
	\lim_{N \to \infty} \left( 1-Pr[C_t(n,N,\lceil N/2 \rceil)] \right) = 1 \text.
	\]
\end{theorem}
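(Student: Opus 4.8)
The plan is to prove the two claims separately: the non-asymptotic bound follows by directly chaining the two preceding results, and the limit follows from a short asymptotic estimate of the explicit expression this produces.

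For the first inequality I would simply feed the upper bound of Lemma~\ref{Lemma_MBP_variant_estimation}(ii), instantiated with $q=N$ and $s=\lceil N/2\rceil$, into the lower bound of Theorem~\ref{Thm_probability_majority}. Since Theorem~\ref{Thm_probability_majority} guarantees that the success probability is at least $1-Pr[C_t(n,N,\lceil N/2\rceil)]$, and Lemma~\ref{Lemma_MBP_variant_estimation}(ii) gives $Pr[C_t(n,N,\lceil N/2\rceil)]\le \frac{t^{\lceil N/2\rceil}}{n^{\lceil N/2\rceil-1}}\binom{N}{\lceil N/2\rceil}$, substituting the latter into the former yields exactly the claimed lower bound. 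This step is pure bookkeeping.

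For the limit it suffices to show that $Pr[C_t(n,N,\lceil N/2\rceil)]\to 0$ as $N\to\infty$, and I would establish this from the same explicit bound. Writing $s=\lceil N/2\rceil$, I would first pull out a factor of $n$ by rewriting $\frac{t^s}{n^{s-1}}=n\,(t/n)^s$, and then estimate the binomial coefficient with the standard inequality $\binom{N}{s}\le (\mathbbm{e}N/s)^s$. Combining these gives $Pr[C_t(n,N,s)]\le n\left(\frac{t\mathbbm{e}N}{ns}\right)^s$. The decisive observation is that $s=\lceil N/2\rceil\ge N/2$, so $N/s\le 2$ and hence $\frac{t\mathbbm{e}N}{ns}\le \frac{2t\mathbbm{e}}{n}$; this is precisely the point at which the Napier constant enters, and it explains its appearance in the hypothesis (the factor $2$ coming from $N/s\le 2$ and the $\mathbbm{e}$ from the binomial estimate).

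Hence $Pr[C_t(n,N,\lceil N/2\rceil)]\le n\left(\frac{2t\mathbbm{e}}{n}\right)^s$. Under the assumption $n>2t\mathbbm{e}$ the base $\rho=\frac{2t\mathbbm{e}}{n}$ is a fixed constant strictly below $1$, while $n$ is fixed and $s=\lceil N/2\rceil\to\infty$; thus $n\rho^s\to 0$, which yields $\lim_{N\to\infty}\bigl(1-Pr[C_t(n,N,\lceil N/2\rceil)]\bigr)=1$. I do not expect a genuine obstacle here: the only point requiring care is selecting a binomial estimate sharp enough to produce a base below $1$ under the stated hypothesis, and the entropy-type bound $\binom{N}{s}\le(\mathbbm{e}N/s)^s$ combined with $N/s\le 2$ does this cleanly.
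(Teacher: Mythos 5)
Your proposal is correct and follows essentially the same route as the paper: the first bound is obtained by chaining Theorem~\ref{Thm_probability_majority} with Lemma~\ref{Lemma_MBP_variant_estimation}(ii), and the limit is derived via the estimate $\binom{N}{s} \le (\mathbbm{e}N/s)^s$ together with $N/s \le 2$ for $s = \lceil N/2 \rceil$, yielding a bound of the form $n\left(\frac{2t\mathbbm{e}}{n}\right)^{s}$, which is algebraically identical to the paper's $2t\mathbbm{e}\left(\frac{2t\mathbbm{e}}{n}\right)^{s-1}$ and tends to $0$ under the hypothesis $n > 2t\mathbbm{e}$.
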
 
\begin{proof}
	The first claim follows immediately by applying Theorem~\ref{Thm_probability_majority} and Lemma~\ref{Lemma_MBP_variant_estimation}. Assuming $m$ and $r$ are integers such that $1 \leq r \leq m$, we have the following well-known upper bound on the binomial coefficient:
	\[
	\binom{m}{r} < \left( \frac{m \cdot \mathbbm{e}}{r} \right)^r \text. 
	\]
	Applying this upper bound, we obtain that
	\[
	\begin{split}
		Pr[C_t(n,N,\lceil N/2 \rceil)]  &\leq \frac{t^{\lceil N/2 \rceil}}{n^{\lceil N/2 \rceil-1}}\binom{N}{\lceil N/2 \rceil} \\ &<  \frac{t^{\lceil N/2 \rceil}}{n^{\lceil N/2 \rceil-1}}\left( \frac{N \cdot \mathbbm{e}}{\lceil N/2 \rceil} \right)^{\lceil N/2 \rceil} \\ & \leq 2t \cdot \mathbbm{e} \left( \frac{2t \cdot \mathbbm{e}}{n} \right)^{\lceil N/2 \rceil - 1} \to 0 \text.
	\end{split}
	\]
	as $N \to \infty$ since $n > 2t \cdot \mathbbm{e}$. Thus, the second claim follows.
\end{proof}

\begin{table}
	\centering
	\caption{The lower bound of Theorem~\ref{Thm_probability_majority} together with Lemma~\ref{Lemma_MBP_variant_estimation} and Theorem~\ref{Thm_majority_probablity_Suzuki} as well as the Monte Carlo approximations with $100000$ samples of the probability that $\bz = \bx$ when $n=28$, $t=5$ and $N=11,21,31,41,101$.} \label{Table_probabilities}
	\begin{tabular}{|c|c|c|c|}
		\hline
		$N$ & Theorem~\ref{Thm_probability_majority} and Lemma~\ref{Lemma_MBP_variant_estimation} & Theorem~\ref{Thm_majority_probablity_Suzuki} & Simulation \\ \hline
		11  &                                      $0.8199$                                       &                   $0.5806$                   &  $0.8615$  \\ \hline
		21  &                                      $0.9901$                                       &                   $0.9419$                   &  $0.9929$  \\ \hline
		31  &                                      $0.9994$                                       &                   $0.9910$                   &  $0.9997$  \\ \hline
		41  &                                      $0.9997$                                       &                   $0.9985$                   &  $1.000$   \\ \hline
		101 &                                       $1.000$                                       &                   $1.000$                    &  $1.000$   \\ \hline
	\end{tabular}    
\end{table}
In Table~\ref{Table_probabilities}, we illustrate the various approaches to approximate the probability that the output~$\bz$ of the majority algorithm is equal to the transmitted word $\bx$ when $t=5$ and $n=28$ (satisfying the condition of Theorem~\ref{Thm_majority_probablity_Suzuki}): the lower bound on the probabilities of Theorem~\ref{Thm_probability_majority} together with Lemma~\ref{Lemma_MBP_variant_estimation} and Theorem~\ref{Thm_majority_probablity_Suzuki} as well as the Monte Carlo simulations with $100000$ samples. Observe that the majority algorithm outputs $\bz = \bx$ with high probability for significantly smaller number $N$ of channels compared to Theorem~\ref{L=1}, for which the required number of channels is $41709$ when $e=0$, $t=5$ and $n=28$; here $e=0$ is chosen in order to meet the requirement $\bz = \bx$.

Above, we saw that it is highly probable that the majority algorithm works correctly with significantly smaller number of channels compared to the one given in Theorem~\ref{L=1}, which was required in the algorithm presented in~\cite{Uusi_Maria_Abu-Sini}. In what follows, we take another approach on the majority algorithm and show that if a certain criteria (see Theorem~\ref{Thm_majority_algorithm}) is met for the outputs $Y$, then we can verify that the output $\bz$ of the majority algorithm belongs to $B_e(\bx)$. For this purpose, notice first that the total number of errors occurring in the $i$th coordinates of the outputs $Y$ is at least $m_i = \min\{m_{i,0},m_{i,1}\}$. On the other hand, there happens at most $t$ errors in each channel and, hence, the total number of errors in the channels is at most $tN$. Thus, we obtain that
\begin{equation} \label{Eq_error_sum}
	\sum_{i=1}^n m_i \leq tN \text.
\end{equation}
Furthermore, if $\bx = \bz$, then the number of errors is exactly $\sum_{i=1}^n m_i$. In addition, if $\bx \neq \bz$, then for each coordinate $i$ in which the words differ, $\max\{m_{i,0},m_{i,1}\} = N-m_i$ is contributed to the sum of errors (instead of $m_i$). The following theorem is based on the idea that even the modified sum (in the left hand side of \eqref{Eq_sum_condition}) has to satisfy Inequality~\eqref{Eq_error_sum}.
\begin{theorem} \label{Thm_majority_algorithm}
	Let $C$ be an $e$-error-correcting code, $m'_i$ be the integers $m_i$ ordered in such a way that $m'_1 \geq m'_2 \geq \cdots \geq m'_n$ and  $\bz$ be the output word of the majority algorithm. We have $d(\bx,\bz)\leq k$ if $k$ is a positive integer such that
	\begin{equation} \label{Eq_sum_condition}
		\sum_{i=1}^{k+1} (N-m'_i) + \sum_{i=k+2}^n m'_i > tN \text.
	\end{equation}
\end{theorem}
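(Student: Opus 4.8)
The plan is to argue by contradiction, using the single global constraint that the total number of substitution errors over all $N$ channels is at most $tN$. First I would introduce the \emph{disagreement set} $D = \{i \in [1,n] \mid z_i \neq x_i\}$, so that $d(\bx,\bz) = |D|$ (a coordinate with $z_i = {?}$ is counted in $D$). The central observation is a coordinatewise error count: in coordinate $i$ the number of outputs $\by_j$ disagreeing with $x_i$ equals $m_i$ exactly when the majority value agrees with $x_i$, i.e.\ when $i \notin D$, and equals $N - m_i$ when $i \in D$. This also covers the tie case $z_i = {?}$, where $m_{i,0} = m_{i,1}$ forces $m_i = N - m_i = N/2$. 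Summing over all coordinates expresses the true total number of errors as $\sum_{i \notin D} m_i + \sum_{i \in D}(N - m_i)$, which by \eqref{Eq_error_sum} is at most $tN$.

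The key algebraic step would be to rewrite this total as
\[
\sum_{i \notin D} m_i + \sum_{i \in D}(N - m_i) = \sum_{i=1}^n m_i + \sum_{i \in D}(N - 2m_i),
\]
and to note that $N - 2m_i \geq 0$ for every $i$, since $m_i = \min\{m_{i,0},m_{i,1}\} \leq N/2$. Hence enlarging $D$ can only increase the total, so over all sets $D$ with $|D| \geq k+1$ the total is minimized at cardinality exactly $k+1$, by choosing the $k+1$ coordinates with the \emph{largest} values $m_i$ (these make the nonnegative added terms $N - 2m_i$ smallest), namely those realizing $m'_1, \dots, m'_{k+1}$. A short computation then shows this minimum equals
\[
\sum_{i=1}^{k+1}(N - m'_i) + \sum_{i=k+2}^n m'_i,
\]
which is precisely the left-hand side of the hypothesis \eqref{Eq_sum_condition}.

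To conclude, I would suppose $d(\bx,\bz) = |D| \geq k+1$. Then by the minimization above the true total number of errors is at least the quantity in \eqref{Eq_sum_condition}, which by assumption strictly exceeds $tN$; this contradicts the bound $\sum_{i \notin D} m_i + \sum_{i \in D}(N - m_i) \leq tN$ established from \eqref{Eq_error_sum}. Therefore $|D| \leq k$, i.e.\ $d(\bx,\bz) \leq k$. The main subtlety I expect is the coordinatewise bookkeeping together with the treatment of the tie symbol $?$: one must verify carefully that $i \in D$ always contributes exactly $N - m_i$ errors in coordinate $i$, so that the sum representing the genuine error total coincides with the combinatorial quantity being minimized. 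Once this is pinned down, the inequality $N - 2m_i \geq 0$ makes the extremal choice of $D$ immediate and the contradiction is direct.
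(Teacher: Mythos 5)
Your proposal is correct and follows essentially the same route as the paper: assume $d(\bx,\bz)\geq k+1$, count the errors in each disagreeing coordinate as $N-m_i$ and in each agreeing coordinate as $m_i$, and derive a contradiction with the global bound $\sum_i (\text{errors}) \leq tN$ from \eqref{Eq_error_sum}. The only difference is that you spell out the step the paper compresses into ``by the ordering of $m'_i$'' --- namely the rewriting as $\sum_{i=1}^n m_i + \sum_{i\in D}(N-2m_i)$ with $N-2m_i\geq 0$ and the extremal choice of $D$ --- and you handle the tie symbol $?$ explicitly, which is a faithful elaboration rather than a different argument.
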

\begin{proof}
	Let $k$ be a positive integer satisfying \eqref{Eq_sum_condition}. Suppose to the contrary that $d(\bx,\bz) \geq k+1$. This implies that for at least $k+1$ coordinates $i$, the number of errors is $\max\{m_{i,0},m_{i,1}\} = N-m_i$. Therefore, by the ordering of $m'_i$, the number of errors is at least 
	\[
	\sum_{i=1}^{k+1} (N-m'_i) + \sum_{i=k+2}^n m'_i \ (> tN) \text.
	\]
	Thus, due to \eqref{Eq_sum_condition}, we have a contradiction with the maximum number of errors being $tN$ and the claim follows.
\end{proof}

Observe that~\eqref{Eq_sum_condition} allows us to estimate the accuracy of $\bz$. In particular, if $k \leq e$, then $d(\bx, \bz) \leq k \leq e$ and the word $\bz$ can be decoded to $\bx$ as $C$ is an $e$-error-correcting code. Furthermore, if $k > e$, then $\bx \in C \cap B_k(\bz)$ and the decoding algorithm outputs a list of words containing $\bx$. Moreover, the size of the list is at most $\max_{\bu \in \F^n} |C \cap B_k(\bu)|$, which is closely related to the traditional list decoding (see~\cite{Guruswamin_kirja}). In conclusion, the theorem gives us a condition guaranteeing that the transmitted word can be decoded uniquely or with certain accuracy.

\begin{table}
	\centering
	\caption{The Monte Carlo approximations with $100000$ samples of the probability for $e$ satisfying the condition of Theorem~\ref{Thm_majority_algorithm} when  $n=24$, $t=7$, $e=2,3,4$ and $N=11,21,31,41$.} \label{Table_verified_probabilities}
	\begin{tabular}{|c|c|c|c|}
		\hline
		$N \backslash e$ &   $2$   &   $3$   &   $4$   \\ \hline
		11        & $0.068$ & $0.260$ & $0.587$ \\ \hline
		21        & $0.369$ & $0.790$ & $0.972$ \\ \hline
		31        & $0.701$ & $0.971$ & $0.999$ \\ \hline
		41        & $0.887$ & $0.997$ & $0.999$ \\ \hline
	\end{tabular}    
\end{table}

The probability, that in Theorem~\ref{Thm_majority_algorithm} there exists a certain $k$ satisfying~\eqref{Eq_sum_condition}, can be analysed analytically as shown below, but first we approximate it using Monte Carlo simulations. In Table~\ref{Table_verified_probabilities}, the probability is approximated using $100000$ samples for $n=24$, $t=7$, $e=2,3,4$ and varying number of channels $N$; here we choose $k = e$ and we strive for the exact transmitted word $\bx$. From the table, we can notice that as the number of channels increases it becomes very likely that Inequality~\eqref{Eq_sum_condition} is satisfied and that the majority algorithm together with the $e$-error-correction capability of $C$ verifiably outputs the transmitted word $\bx$.

In what follows, we further study analytically the probability that for a set $Y$ of outputs there exists an integer $k$ satisfying the conditions of Theorem~\ref{Thm_majority_algorithm}. For this purpose, let $C_t'(n,q,s)$ denote the event that after $q$ random throws of $t$ balls, each of the $n$ buckets contains at most $s$ balls. Furthermore, regarding the number of errors occurring in the channels (in total), let $Er(r)$ denote the event that at least $r$ errors happen in the outputs $Y$ in total. Moreover, let $p(N)$ denote the parity of $N$, i.e., $p(N) = 1$ if $N$ is odd, and otherwise $p(N) = 0$. Now we are ready to formulate the following theorem.
\begin{theorem} \label{Thm_majority_verifiable_probability}
	Let $C$ be an $e$-error-correcting code and $\alpha$ be a positive integer smaller than $\lceil N/2 \rceil$. The probability that a positive integer $k$ satisfies~\eqref{Eq_sum_condition} is at least
	\[
	Pr[C_t'(n,N,\lceil N/2 \rceil-\alpha) \cap Er(tN-(k+1)(2\alpha-p(N))+1)] \text. 
	\]
\end{theorem}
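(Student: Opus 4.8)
The plan is to prove the bound by establishing a \emph{deterministic} inclusion of events and then passing to probabilities by monotonicity. Fix the positive integer $k$. I would first show that whenever both events $C_t'(n,N,\lceil N/2\rceil-\alpha)$ and $Er(tN-(k+1)(2\alpha-p(N))+1)$ occur for a given set $Y$ of outputs, the inequality~\eqref{Eq_sum_condition} is satisfied for this $k$; by Theorem~\ref{Thm_majority_algorithm} this is exactly the verifiable guarantee $d(\bx,\bz)\le k$. Since the event that \emph{some} positive integer satisfies~\eqref{Eq_sum_condition} contains the event that \emph{this particular} $k$ does, taking probabilities then yields the stated lower bound. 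So the whole task reduces to the set-theoretic inclusion of the two events into the event ``$k$ satisfies~\eqref{Eq_sum_condition}''.

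The first concrete step is an algebraic rewriting of the left-hand side of~\eqref{Eq_sum_condition}. Using $\sum_{i=1}^n m'_i=\sum_{i=1}^n m_i$ and splitting off the first $k+1$ terms, I would write
\[
\sum_{i=1}^{k+1}(N-m'_i)+\sum_{i=k+2}^n m'_i=\sum_{i=1}^n m_i+\sum_{i=1}^{k+1}(N-2m'_i),
\]
so it suffices to make the right-hand side exceed $tN$. This cleanly separates the two quantities controlled by the two events: the total error count $\sum_i m_i$ and the per-coordinate slack $N-2m'_i$. Next I would invoke $C_t'(n,N,\lceil N/2\rceil-\alpha)$, which says every coordinate carries at most $\lceil N/2\rceil-\alpha$ errors. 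Because $\alpha\ge1$, this bound lies strictly below $N/2$, so in each coordinate the erroneous symbols are the strict minority; hence $m_i=\min\{m_{i,0},m_{i,1}\}$ equals the actual number of errors in coordinate $i$, and $\sum_i m_i$ is exactly the total number of errors. The same bound gives $N-2m'_i\ge N-2(\lceil N/2\rceil-\alpha)=2\alpha-p(N)$, using $2\lceil N/2\rceil=N+p(N)$, and summing the first $k+1$ terms gives $\sum_{i=1}^{k+1}(N-2m'_i)\ge(k+1)(2\alpha-p(N))$.

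Finally I would bring in $Er(\cdot)$, which forces the total number of errors to be at least $tN-(k+1)(2\alpha-p(N))+1$; by the previous step this is precisely a lower bound on $\sum_i m_i$. Adding the two estimates,
\[
\sum_{i=1}^n m_i+\sum_{i=1}^{k+1}(N-2m'_i)\ge tN-(k+1)(2\alpha-p(N))+1+(k+1)(2\alpha-p(N))=tN+1>tN,
\]
so~\eqref{Eq_sum_condition} holds and the event inclusion is proved; monotonicity of probability closes the argument. I expect the main obstacle to be the bookkeeping that makes the two copies of $(k+1)(2\alpha-p(N))$ cancel: one must pin the parity term $p(N)$ in $N-2\lceil N/2\rceil$ and verify that the constant chosen inside $Er$ is exactly the one producing $tN+1$. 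A secondary point requiring care is justifying that $C_t'$ forces each $m_i$ to equal the genuine number of errors in coordinate $i$, since this is what lets the total-error count provided by $Er$ be identified with $\sum_i m_i$; these agree under $C_t'$ (where every coordinate's errors are the minority) but not in general.
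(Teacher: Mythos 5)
Your proposal is correct and takes essentially the same approach as the paper: a deterministic argument showing that the two events together force~\eqref{Eq_sum_condition} (hence $d(\bx,\bz)\leq k$ by Theorem~\ref{Thm_majority_algorithm}), followed by monotonicity of probability, with the parity bookkeeping $N-2m'_i\geq N-2(\lceil N/2\rceil-\alpha)=2\alpha-p(N)$ handled identically. You are in fact slightly more explicit than the paper at its one delicate step: the paper bounds $\sum_{i=1}^n m'_i$ from below directly by assumption~(ii), which tacitly uses exactly the observation you isolate, namely that under the per-coordinate bound $\lceil N/2\rceil-\alpha$ (with $\alpha\geq 1$) the erroneous symbols form a strict minority in every coordinate, so that $\sum_{i=1}^n m_i$ coincides with the total number of errors counted by $Er(\cdot)$.
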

\begin{proof}
	Assume that (i) at most $\lceil N/2 \rceil-\alpha$ errors occur in each coordinate of the outputs $Y$ and that (ii) at least $tN-(k+1)(2\alpha-p(N))+1$ errors occur in the channels in total (when at most $t$ errors occur in each channel). Now the difference $(N-m_i)-m_i = N-2m_i$ gives the number of additional errors occurring in a coordinate in the case that $\max\{m_{i,0},m_{i,1}\} = N-m_i$ errors happen instead of $m_i = \min\{m_{i,0},m_{i,1}\}$ errors. The difference $(N-m_i)-m_i$ can be estimated based on the parity of $N$ as follows:
	\begin{itemize}
		\item If $N$ is even, then $N-2m_i \geq N-2(\lceil N/2 \rceil-\alpha) = 2\alpha = 2\alpha-p(N)$.
		\item If $N$ is odd, then $N-2m_i \geq N-2(\lceil N/2 \rceil-\alpha) = 2\alpha-1 = 2\alpha-p(N)$.
	\end{itemize}
	In conclusion, we have $N-2m_i \geq 2\alpha-p(N)$. Therefore, using the notation of Theorem~\ref{Thm_majority_algorithm}, we have 
	\[
	\begin{split}
		&\quad \sum_{i=1}^{k+1} (N-m'_i) + \sum_{i=k+2}^n m'_i \\ &\geq \sum_{i=1}^n m'_i + (k+1)(2\alpha-p(N)) > tN  
	\end{split}
	\]
	since $\sum_{i=1}^nm'_i\geq tN-(k+1)(2\alpha-2p(N))+1$ by the assumption~(ii). Thus, the integer $k$ satisfies the condition of Theorem~\ref{Thm_majority_algorithm}.
	
	In order to estimate the probability of the events of the assumptions~(i) and (ii) occurring simultaneously, we denote by $A$ the event that at most $\lceil N/2 \rceil-\alpha$ errors occur in each coordinate of the outputs $Y$ when at most $t$ errors happen in each channel. Clearly, since the event $C_t'(n,N,\lceil N/2 \rceil-\alpha)$ can be interpreted as the outputs $Y$ of the channels as in Theorem~\ref{Thm_probability_majority}, we have $C_t'(n,N,\lceil N/2 \rceil-\alpha) \subseteq A$. Hence, the probability of the events of the assumptions~(i) and (ii) occurring simultaneously is at least $Pr[A \cap Er(tN-(k+1)(2\alpha-p(N))+1)] \geq Pr[C_t'(n,N,\lceil N/2 \rceil-\alpha) \cap Er(tN-(k+1)(2\alpha-p(N))+1)]$ and the claim follows.
\end{proof}

Observe that if $A$ and $B$ are events (independent or dependent), then the probability of the event $A \cap B$ can be estimated as follows by the inclusion-exclusion principle:
\[
Pr[A \cap B] = Pr[A] + Pr[B] - Pr[A \cup B] \geq Pr[A] + Pr[B] - 1 \text.
\]
Applying this lower bound to Theorem~\ref{Thm_majority_verifiable_probability}, the following corollary is immediately obtained.
\begin{corollary} \label{Cor_lower_bound_with_alpha}
	Let $C$ be an $e$-error-correcting code and $\alpha$ be a positive integer. The probability that a positive integer $k$ satisfies~\eqref{Eq_sum_condition} is at least $Pr[C_t'(n,N,\lceil N/2 \rceil-\alpha)] + Pr[Er(tN-(k+1)(2\alpha-p(N))+1)] - 1$.
\end{corollary}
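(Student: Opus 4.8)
The plan is to derive the corollary directly from Theorem~\ref{Thm_majority_verifiable_probability} together with the inclusion--exclusion lower bound displayed immediately before the corollary statement. First I would invoke Theorem~\ref{Thm_majority_verifiable_probability}, which already establishes that the probability that some positive integer $k$ satisfies~\eqref{Eq_sum_condition} is at least
\[
Pr[C_t'(n,N,\lceil N/2 \rceil-\alpha) \cap Er(tN-(k+1)(2\alpha-p(N))+1)] \text.
\]
Thus the only remaining task is to bound this joint probability from below by the sum of the two individual probabilities minus one.

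Next I would set $A = C_t'(n,N,\lceil N/2 \rceil-\alpha)$ and $B = Er(tN-(k+1)(2\alpha-p(N))+1)$ and apply the displayed inclusion--exclusion estimate
\[
Pr[A \cap B] = Pr[A] + Pr[B] - Pr[A \cup B] \geq Pr[A] + Pr[B] - 1 \text,
\]
where the inequality simply uses $Pr[A \cup B] \leq 1$. The point worth stressing is that this bound requires \emph{no} independence of $A$ and $B$, which matters here because the two events (few errors in each coordinate versus many errors in total) are evidently dependent.

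Combining the two steps, the probability that a positive integer $k$ satisfies~\eqref{Eq_sum_condition} is at least $Pr[A] + Pr[B] - 1$, which is exactly the asserted lower bound $Pr[C_t'(n,N,\lceil N/2 \rceil-\alpha)] + Pr[Er(tN-(k+1)(2\alpha-p(N))+1)] - 1$. There is essentially no obstacle in this argument; the only thing to guard against is the temptation to treat $A$ and $B$ as independent and multiply their probabilities. Since independence fails, the subtractive inclusion--exclusion bound is the correct tool, and it is what makes the estimate valid in full generality.
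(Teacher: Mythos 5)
Your proof is correct and follows exactly the paper's own route: invoke Theorem~\ref{Thm_majority_verifiable_probability} for the joint-probability lower bound, then apply the inclusion--exclusion estimate $Pr[A \cap B] \geq Pr[A] + Pr[B] - 1$, which holds without any independence assumption. Nothing is missing.
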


In what follows, we discuss how the value $\alpha$ in the previous corollary should be chosen in order for the lower bound to be as close to $1$ as desired. Consider first the approximation of the probability $Pr[Er(tN-(k+1)(2\alpha-p(N))+1)]$. Let $X$ be a categorical random variable representing the number of errors occurring in a channel. Hence, for $r = 0, 1, \ldots, t$, the probability
\[
Pr[X=r] = p_r = \frac{\binom{n}{r}}{\sum_{i=0}^{t} \binom{n}{i}} \text.
\]
The expected value of the distribution corresponding to $X$ is
\[
\mu = E(X) = \sum_{r=0}^{t} rp_r 
\]
and the variance is
\[
\sigma^2 = Var(X) = E(X^2) - \mu^2 = \sum_{r=0}^{t} r^2p_r - \mu^2 \text.
\]
Let
\[
S_N = X_1 + X_2 + \cdots + X_N 
\]
be the sum of independent and identically distributed random variables $X_i$ each equal to $X$. Clearly, we have $Pr[Er(tN-(k+1)(2\alpha-p(N))+1)] = Pr[S_N \geq tN-(k+1)(2\alpha-p(N))+1]$.
By the Central Limit Theorem (for example, see~\cite[Section~5.4]{Tijms_Understanding_Probability}), the sum $S_N$ can be approximated (for large enough values of $N$) by the normal distribution $\mathcal{N}(\mu N,\sigma\sqrt{N})$, where $\mu N$ and $\sigma\sqrt{N}$ are the expected value and the standard deviation of $S_N$, respectively. 
A usual approach on the normal distribution is to use the following concept of \emph{confidence intervals}. For a positive integer $h$, the probability
\[
Pr[\mu N - h \cdot \sigma\sqrt{N} \leq S_N \leq \mu N + h \cdot \sigma\sqrt{N}]
\]
rapidly approaches one as $h$ increases; even for $h = 4$ the probability is approximately $0.999937$. Straightforwardly, we obtain that $tN-(k+1)(2\alpha-p(N))+1 \leq \mu N - h \cdot \sigma\sqrt{N}$ if and only if
\[
\alpha \geq \frac{(t-\mu)N+h \cdot \sigma\sqrt{N} + 1 + p(N)(k+1)}{2(k+1)} (> 0) \text.
\]
Now we have $Pr[S_N \geq tN-(k+1)(2\alpha-p(N))+1] \geq Pr[\mu N - h \cdot \sigma\sqrt{N} \leq S_N \leq \mu N + h \cdot \sigma\sqrt{N}]$.

In order to estimate the lower bound of Corollary~\ref{Cor_lower_bound_with_alpha} for some fixed $n$, $e$ and $t$, we first choose a suitable positive integer $h$ such that the previous lower bound of $Pr[S_N \geq tN-(k+1)(2\alpha-p(N))+1]$ is as large as desired. Then, according to the following lemma, we have a lower bound on the probability $Pr[C'(n,tN,\lceil N/2 \rceil - \alpha)]$, which approaches $1$ as the number $N$ of channels tends to infinity. Hence, for carefully chosen $h$, $\alpha$ and $N$, the probability discussed in Corollary~\ref{Cor_lower_bound_with_alpha} gets as close to $1$ as desired. Furthermore, notice that the proof of the lemma is based on similar ideas as the one of Theorem~\ref{Thm_majority_probablity_Suzuki}.
\begin{lemma} \label{Lemma_lower_bound_with_alpha}
	Let $h$ and $k$ be (fixed) positive integers and $\alpha = \lceil ((t-\mu)N+h \cdot \sigma\sqrt{N} + 1 + p(N)(k+1)) / (2(k+1)) \rceil$. Then the probability $Pr[C_t'(n,N,\lceil N/2 \rceil - \alpha)]$ is at least
	\[
	1-\frac{t^{\lceil N/2 \rceil - \alpha + 1}}{n^{\lceil N/2 \rceil - \alpha}}\binom{N}{\lceil N/2 \rceil  - \alpha + 1} \text.
	\]
	Furthermore, denoting $A = (k-(t-\mu))/(k+1)$, we obtain that if $t-\mu < k$ and $n >2t \cdot \mathbbm{e}/A$, then 
	\[
	\lim_{N \to \infty} Pr[C_t'(n,N,\lceil N/2 \rceil - \alpha)]  = 1 \text.
	\]
\end{lemma}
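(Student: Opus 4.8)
The plan is to realize $C_t'(n,N,\lceil N/2\rceil-\alpha)$ as the complement of an event already bounded in Lemma~\ref{Lemma_MBP_variant_estimation}. Write $s=\lceil N/2\rceil-\alpha$. The event ``every bucket holds at most $s$ balls'' is exactly the complement of the event $C_t(n,N,s+1)$ that some bucket holds at least $s+1$ balls, so $Pr[C_t'(n,N,s)]=1-Pr[C_t(n,N,s+1)]$. Applying the upper bound of Lemma~\ref{Lemma_MBP_variant_estimation}(ii) to the triple $(n,N,s+1)$ gives $Pr[C_t(n,N,s+1)]\le \frac{t^{s+1}}{n^{s}}\binom{N}{s+1}$, and substituting back $s=\lceil N/2\rceil-\alpha$ yields the first displayed inequality verbatim.

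For the limit it suffices to prove that $E_N:=\frac{t^{s+1}}{n^{s}}\binom{N}{s+1}\to 0$. First I would bound the binomial coefficient by $\binom{N}{s+1}<\bigl(N\mathbbm{e}/(s+1)\bigr)^{s+1}$, the same estimate used in the proof of Theorem~\ref{Thm_majority_probablity_Suzuki}, and regroup the outcome as
\[
E_N< \frac{tN\mathbbm{e}}{s+1}\left(\frac{tN\mathbbm{e}}{n(s+1)}\right)^{s}.
\]
The crux is the growth of $s=\lceil N/2\rceil-\alpha$. From the definition of $\alpha$ and the elementary bound $\lceil x\rceil\le x+1$ I would show that $s=\frac{k+1-(t-\mu)}{2(k+1)}\,N-O(\sqrt N)$; here the hypothesis $t-\mu<k$ is what forces the leading coefficient to be strictly positive (and simultaneously makes $A>0$, so that the threshold $2t\mathbbm{e}/A$ is finite and positive). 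Consequently $N/(s+1)$ stays bounded, tending to $2(k+1)/(k+1-(t-\mu))$, so the prefactor $\frac{tN\mathbbm{e}}{s+1}$ remains bounded while the base $\frac{tN\mathbbm{e}}{n(s+1)}$ tends to $L:=\frac{2t\mathbbm{e}(k+1)}{n\,(k+1-(t-\mu))}$.

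The main obstacle is turning this heuristic into a clean geometric decay and matching the stated constant $A$. Since $E_N$ has the shape $(\text{base})^{s}$ with both base and exponent depending on $N$, I must exhibit a fixed $\rho<1$ with base $\le\rho$ for all large $N$; then $(\text{base})^{s}\le\rho^{s}\to 0$ dominates the bounded prefactor and forces $E_N\to 0$. To get such a $\rho$ I would use $k+1-(t-\mu)=(k+1)A+1>(k+1)A$, which gives $\frac{k+1}{k+1-(t-\mu)}<\frac1A$ and hence $L<\frac{2t\mathbbm{e}}{nA}$; the hypothesis $n>2t\mathbbm{e}/A$ then yields $L<1$, so any $\rho\in(L,1)$ works for $N$ large. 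This conservative replacement of $k+1-(t-\mu)$ by $k-(t-\mu)$ is precisely why the cleaner constant $A$, rather than the sharper $k+1-(t-\mu)$, appears in the statement. The only remaining care is to verify that the $O(\sqrt N)$ correction in $s$ coming from the $h\sigma\sqrt N$ term and the two ceilings does not disturb these limits, which it does not since it is of lower order than the linear term in both $N/(s+1)$ and the exponent $s$.
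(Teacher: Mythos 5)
Your proposal is correct and follows essentially the same route as the paper: both parts reduce the first claim to Lemma~\ref{Lemma_MBP_variant_estimation}(ii) via the complement identity $Pr[C_t'(n,N,s)]=1-Pr[C_t(n,N,s+1)]$, and the limit is obtained with the same bound $\binom{N}{s+1}<\left(N\mathbbm{e}/(s+1)\right)^{s+1}$ together with the observation that the hypothesis $t-\mu<k$ makes the exponent $s=\lceil N/2\rceil-\alpha$ grow linearly while $n>2t\cdot\mathbbm{e}/A$ makes the resulting geometric ratio less than one. The only cosmetic difference is that the paper absorbs the $O(\sqrt{N})$ terms by bounding $h\sigma\sqrt{N}+1+p(N)(k+1)\leq N$ to get the uniform estimate $1-\frac{2(\alpha-1)}{N}\geq A$, whereas you pass to the limit of the base and compare it to $1/A$ via $k+1-(t-\mu)=(k+1)A+1$; both land on the same conclusion.
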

\begin{proof}
	The lower bound on $Pr[C_t'(n,N,\lceil N/2 \rceil - \alpha)] = 1- Pr[C_t(n,N,\lceil N/2 \rceil - \alpha + 1)]$ immediately follows by Lemma~\ref{Lemma_MBP_variant_estimation}(ii) and hence, the first claim follows. Assume then that $t-\mu < k$ and $n >2t \cdot \mathbbm{e}/A$. For the limit of second claim, observe first that $h \cdot \sigma\sqrt{N} + 1 + p(N)(k+1) \leq N$ when $N$ is large enough since $k$ is a (fixed) constant. Therefore, by straightforward calculations, we have
	\begin{equation} \label{Eq_alpha_upper_bound}
		\alpha \leq \frac{(t-\mu+1)N}{2(k+1)} + 1 \text.
	\end{equation}
	This further implies that 
	\[
	1 - \frac{2(\alpha-1)}{N} \geq 1 - \frac{t-\mu+1}{k+1} = \frac{k-(t-\mu)}{k+1} = A\text.
	\]
	Applying this inequality, for large enough $N$, we have 
	\[
	\begin{split}
		&\quad Pr[C_t'(n,N,\lceil N/2 \rceil - \alpha)] \\ &> 1 - \frac{t^{\lceil N/2 \rceil - \alpha + 1}}{n^{\lceil N/2 \rceil - \alpha}} \left( \frac{N \cdot \mathbbm{e}}{N/2 - \alpha + 1} \right)^{\lceil N/2 \rceil - \alpha + 1} \\ &= 1 -  2t  \mathbbm{e} \left( \frac{2t \mathbbm{e}}{n}\right)^{\lceil N/2 \rceil - \alpha}  \left( \frac{1}{1-\frac{2(\alpha-1)}{N}}\right)^{\lceil N/2 \rceil - \alpha + 1} \\ &\geq 1 -  2t  \mathbbm{e} \left( \frac{2t \mathbbm{e}}{n}\right)^{\lceil N/2 \rceil - \alpha}  \left( \frac{1}{A}\right)^{\lceil N/2 \rceil - \alpha + 1} \\ &= 1 -  \frac{2t  \mathbbm{e}}{A} \left( \frac{2t \mathbbm{e}}{An}\right)^{\lceil N/2 \rceil - \alpha}  \text.
	\end{split}
	\]
	Now the probability
	\[
	Pr[C_t'(n,N,\lceil N/2 \rceil - \alpha)] > 1 - \frac{2t\mathbbm{e}}{A} \left( \frac{2t \mathbbm{e}}{An}\right)^{\lceil N/2 \rceil - \alpha}
	\]
	and the lower bound approaches one as $N$ tends to infinity since $n >2t \cdot \mathbbm{e}/A$ and $\lceil N/2 \rceil - \alpha \to \infty$ due to Inequality~\eqref{Eq_alpha_upper_bound} and the assumption $t-\mu < k$. Thus, the second claim follows.
\end{proof}

Notice that the condition $t-\mu < k$ of the lemma is rather undemanding for our purposes. Observe that the lemma is usually applied for $k=e$ as then the output $\bz$ of the majority algorithm can be uniquely decoded to the transmitted word $\bx$ according to Theorem~\ref{Thm_majority_algorithm}. In this case, the condition can be formulated as $\ell<\mu$. Moreover, we trivially have $\mu>tp_t= t\binom{n}{t} / V(n,t)$. Furthermore, we have the estimation 
\[
\binom{n}{t-a}\leq\frac{t^a}{\prod_{i=1}^a n-t+i}\binom{n}{t}\leq \frac{t^a}{(n-t+1)^a}\binom{n}{t}.
\]
Thus, if $n\geq t^2+t-1$, we obtain
\[
\mu>\frac{t}{\sum_{i=0}^t\frac{t^i}{(n-t+1)^i}} \geq \frac{t}{\sum_{i=0}^t\frac{1}{t^i}}>\frac{t}{\sum_{i=0}^{\infty}\frac{1}{t^i}}=t-1 \text,
\]
where the assumption $n\geq t^2+t-1$ is used in the second inequality. Furthermore, we have $t-1 \geq \ell$ when $e \geq 1$. Therefore, as $\mu>t-1$, it is actually enough for the requirement $\ell<\mu$ that $t-1\geq\ell$. In other words, condition $t-\mu < e$ is satisfied when $n\geq t^2+t-1$ and $e\geq 1$. Moreover, even this quite meagre condition for  $n$ is more than we actually need due to rather crude estimates;  for example, the condition $t-\mu < e$ is met for the values $n=10$, $e=1$ and $t=5$ (as well as for all $n > 10$ for the same choices of $e$ and $t$).

As explained before the previous lemma, we can get the lower bound of Corollary~\ref{Cor_lower_bound_with_alpha} as close to $1$ as desired for carefully chosen $h$, $\alpha$ and $N$ by combining Lemma~\ref{Lemma_lower_bound_with_alpha} with the discussions on the probability $Pr[Er(tN-(k+1)(2\alpha-p(N))+1)] = Pr[S_N \geq tN-(k+1)(2\alpha-p(N))+1]$. For example, if $n=62$, $t=7$, $(k=) \ e=2$ and $h=3$, then the conditions of Lemma~\ref{Lemma_lower_bound_with_alpha} are satisfied and the lower bound of Corollary~\ref{Cor_lower_bound_with_alpha} is approximately $0.997$ when $N = 41$.
%

\bibliographystyle{IEEEtran}
\bibliography{ISIT2019}


\section*{Appendix}\label{appendix}
\begin{proof}[Proof of Corollary \ref{YB raja e-error l=2COR}.]
Let us now consider the proof of Corollary \ref{YB raja e-error l=2COR} in the case where $d=2e+1$. The proof is quite similar to the case with $d=2e+2$. Although the first half of the proof in Corollary \ref{YB raja e-error l=2COR} works for $d=2e+1$, we start from the beginning since using set  $W'_w$ instead of $W_w$ is better suited for our goal.

 First we study the second binomial sum in the claim of Theorem \ref{KanavalukemaEkviv} when $h=3$. Then we show that it gives equivalent result with Theorem \ref{YB raja e-error l=2COR} when $d=2e+1$. Since the binomial sum in Equation (\ref{eqSa'}) is equivalent with the binomial sum in Equation (\ref{eqSa}), the claim follows.

Let us now consider the value we get for $N_3$ in the second binomial sum of Theorem \ref{KanavalukemaEkviv} when $h=3$. We have

\begin{align*}
N_3=&V(n,\ell-1)+\sum_{w\geq\ell}\sum_{(i_2,i_3,i_4)\in W'_w}\binom{n-3e-2}{w-i_2-i_3-i_4}
\binom{e+1}{i_2}\binom{e+1}{i_3}\binom{e}{i_4}.
\end{align*}
In what follows, we have renamed the indices for convenience in such a way that $i_4$ corresponds to $i_1$ of Theorem~\ref{KanavalukemaEkviv} (the index $i_1$ will be saved for later use in the proof). Moreover, we have $W'_w=\{(i_2,i_3,i_4)\mid \text{for } j\in[2,3]:  (w+1-\ell)/2\leq  i_j\leq e+1 \text{ and } (w-\ell)/2\leq  i_4\leq e, w\geq i_2+i_3+i_4\}$. 
Again, if we have some binomial coefficients with $i_j<0$ for some $j$, then we use the common convention that the binomial coefficient attains the value $0$. Then, as in the case $d=2e+2$, we obtain that $N_3=\sum_{w\geq0}\sum_{(i_2,i_3,i_4)\in W'_w}\binom{n-3e-2}{w-i_2-i_3-i_4}\binom{e+1}{i_2}\binom{e+1}{i_3}\binom{e}{i_4}$.


Let us denote by $i_1=w-i_2-i_3-i_4$. We again get that $2i_2\geq i_1+i_2+i_3+i_4-\ell+1$ and similar inequality for $i_3$. Moreover, for $i_4$, we have $2i_4\geq i_1+i_2+i_3+i_4-\ell$. Since we do not have to take into account lower bound $i_1\geq0$ (cases with $i_1<0$ increase binomial sum by $0$) or the cases with $i_j>e+1$ for $j\in\{2,3,4\}$, we can consider following system of inequalities:
\begin{align}
 i_2\geq& i_1+i_3+i_4-\ell+1\label{Ai_2}\\
i_3\geq& i_1+i_2+i_4-\ell+1\label{Ai_3}\\
 i_4\geq& i_1+i_2+i_3-\ell\label{Ai_4}. 
 \end{align}

Our goal is to show that this system of inequalities is equivalent with the following system of inequalities: 
\begin{align}
i_4&\leq \ell-1-i_1, \label{Ai'_4}\\
i_3&\leq \ell-1-i_1,  \label{Ai'_3}\\
i_1+i_3+i_4-(\ell-1)\leq i_2&\leq \ell-1/2-i_1-|i_4-i_3+1/2|.\label{Ai'_2}
\end{align}

Let us first show that the second system of inequalities follows from the first system of inequalities.

Inequality (\ref{Ai'_4}) follows from $$i_4=w-i_1-i_2-i_3\leq w-i_1-2(w-\ell+1)/2=\ell-1-i_1.$$ We obtain Inequality (\ref{Ai'_3}) in similar manner. Indeed, $$i_3=w-i_1-i_2-i_4\leq w-i_1-(w-\ell+1)/2-(w-\ell)/2=\ell-1/2-i_1$$ and the upper bound follows from the fact that $i_3$ is an integer. Moreover, from Inequalities (\ref{Ai_3}) and (\ref{Ai_4}) we obtain $i_2\leq \ell-1/2-i_1-i_4-1/2+i_3$ and $i_2\leq \ell-1/2-i_1-i_3+i_4+1/2$, respectively. Together, these imply $$i_2\leq \ell-1/2-i_1-|i_4+1/2-i_3|.$$ Finally, the lower bound inequality in (\ref{Ai'_2}) follows directly from (\ref{Ai_2}).

Let us then show that the first system of inequalities follows from the second one. First of all, Inequality (\ref{Ai_2}) follows immediately from Inequality (\ref{Ai'_2}). Assume first that $i_4+1/2> i_3$. Then the upper bound of Inequality (\ref{Ai'_2}) is $i_2\leq \ell-1-i_1-i_4+i_3$. This implies Inequality (\ref{Ai_3})  and inequality (\ref{Ai_4}) since $$i_4\geq i_3\geq i_1+i_2+i_4-\ell+1\geq i_1+i_2+i_3-\ell+1.$$ Notice that we cannot attain the lower bound in Inequality (\ref{Ai_4}) in this case.

When $i_3> i_4+1/2$, then the upper bound of Inequality (\ref{Ai'_2}) is $i_2\leq \ell-i_1-i_3+i_4$. In this case $i_4\geq i_1+i_2+i_3-\ell$ and since $i_3>i_4$, both lower bounds,   (\ref{Ai_3}) and (\ref{Ai_4}), follow.

Finally, we may add lower bounds $i_j\geq0$ for all $j\in\{1,2,3,4\}$ due to binomial coefficient context. Similarly we notice that if $i_1\geq\ell$, then $i_3<0$. Thus, we may also add upper bound  $i_1\leq \ell-1$.  Now, we are ready to compare this bound with the bound of Theorem \ref{YB raja l=2} by Yaakobi and Bruck.

When we have $d=2e+1$, Theorem \ref{YB raja l=2} can be presented in the following way: Let 

\noindent$N\geq\sum_{h_1,h_2,h_3,h_4}\binom{n-3e-2}{h_1}\binom{e+1}{h_2}\binom{e+1}{h_3}\binom{e}{h_4}+1$ for \begin{itemize}
\item $0\leq h_1\leq \ell-1$,
\item $h_1-\ell\leq h_4\leq \ell-1-h_1$, 
\item $e+2-\ell+h_1\leq h_3\leq t-(h_1+h_4)$ and 
\item $\max\{h_1-h_3-h_4+2e+2-\ell,h_1+h_3+h_4-\ell+1\}\leq h_2\leq \ell-1-(h_1+h_4-h_3)$,
\end{itemize}  then $\mathcal{L}\leq2$ for any $e$-error-correcting code $C$. Next, we modify the presentation we got for $N_3$ into the formulation above.

Let us denote by $i'_2=e+1-i_2$ and by $i'_3=e+1-i_3$. Observe that $\binom{e+1}{i'_j}=\binom{e+1}{i_j}$ for $j\in\{2,3\}$. Notice that we have $i_4\geq0\geq i_1-\ell$  and $\binom{e+1}{i_4}=0$ when $i_4<0$. Hence, we may just replace this lower bound by $i_1-\ell$. Moreover, we have $0\leq i_3\leq \ell-1-i_1$. Hence, $e+1\geq i'_3\geq i_1+e+2-\ell$. Notice that $t-(i_1+i_4)\geq e+1$ since $i_1+i_4\leq \ell-1$ by Inequality (\ref{Ai'_4}), and $\binom{e+1}{i'_3}=0$ when $i'_3>e+1$.  

For $i_2$ we have $i_1+i_3+i_4-(\ell-1)\leq i_2\leq \ell-1/2-i_1-|i_4+1/2-i_3|$ and hence, $\ell-1-(i_1+i_4-i'_3)\geq i'_2\geq -\ell+e+3/2+i_1+|i_4+1/2-i_3|=e+3/2-\ell+i_1+|i_4+i'_3-e-1/2|=\max\{i_1-i'_3-i_4+2e+2-\ell,i_1+i'_3+i_4-\ell+1\}$. Hence, we get the claim.
\end{proof}

\end{document}